\newcommand{\blist}{    \begin{list}{$\bullet$}{\topsep 0.0in \partopsep 0.0in                 \itemsep 0.05in \parsep                 0.0in \leftmargin 0.3in}} 
\newcommand{\elist}{\end{list}}
\newcommand{\bear}{\begin{eqnarray}}
\newcommand{\ear}{\end{eqnarray}}
\newcommand{\bears}{\begin{eqnarray*}}
\newcommand{\ears}{\end{eqnarray*}}
\newtheorem{theorem}            {Theorem}
\newtheorem{lemma}              {Lemma}
\newtheorem{definition}         {Definition}
\newtheorem{corollary}          {Corollary}
\renewenvironment{proof}{\noindent{\bf Proof:}\\\,\,}{$\hfill\bullet$\\[0\baselineskip]}
\newenvironment{proofa}[1]{\noindent{\bf Proof (of {#1}):}\\\,\,}{$\hfill\bullet$\\[0\baselineskip]}
\newcommand{\DEF}{\ \triangleq \ } 
\newcommand{\PX}[1]{\Pr\left[{#1}\right] }  
\newcommand{\PCX}[2]{\PX{ \left.\! {#1} \right| {#2} }} 
\newcommand{\EX}[1]{E\left[{#1}\right]} 
\newcommand{\ECX}[2]{\EX{ \left.\! {#1} \right| {#2} }} 
\newcommand{\SPX}[2]{\Pr\thinspace^{(#1)}\left[{#2}\right] }  
\newcommand{\SPCX}[3]{\SPX{#1}{ \left.\! {#2} \right| {#3} }} 
\newcommand{\CKLD}[3]{D\left( \left.\! {#1} \right\|  \left.\! {#2} \right| {#3}  \right ) }
\newcommand{\KLD}[2]{D\left( \left.\! {#1} \right\|{#2} \right)}
\newcommand{\CX}{{{\emph C}}}
\newcommand{\DX}{{\emph D_{\max}}}
\newcommand{\PCAP}{\tilde \CX} 
\newcommand{\IND}[1]  {{\mathbb{I}}_{\left\{ {#1} \right\}}} 
\newcommand{\Eb}[0] {{E_\textrm{b}}}  
\newcommand{\Emd}[0] {{E_{\textrm{md}}}} 
\newcommand{\Emdr}[0] {{E_{\textrm{md}}}} 
\newcommand{\Emdre}[0] {{E_{\textrm{md}}^{{ \it df}}}} 
\newcommand{\Efa}[0] {{E_{\textrm{fa}}}} 
\newcommand{\Efal}[0] {{E_{\textrm{fa}}^{\textrm{l}}}} 
\newcommand{\Efau}[0] {{E_{\textrm{fa}}^{\textrm{u}}}} 
\newcommand{\FEb}[0] {{E_{\textrm{b}}^{{\it f}}}} 
\newcommand{\FEbr}[0] {{E_{\textrm{bits}}^{{\it f}}}} 
\newcommand{\FEmd}[0] {{E_{\textrm{md}}^{ {\it f}}}} 
\newcommand{\FEmdr}[1] {{E_{\textrm{md}}^{{\it f}}({#1})}} 
\newcommand{\FEfa}[0] {{E_{\textrm{fa}}^{{ \it f}}}} 
\newcommand{\Esp}[0] {E_{\textrm{sp}}}
\newcommand{\Ee}[0] {E}
\newcommand{\SC}[0]{ {{\cal Q} }}  
\newcommand{\SCe}[0]{ {{\cal Q}}}  
\newcommand{\SCf}[0]{ {{\cal Q}}}  
\newcommand{\blx}[0]  {{ \tau}} 
\newcommand{\BLX}[0]  {{n}}    
\newcommand{\ts}[0]  {\blx_{\delta}}
\newcommand{\inx}[0]  {k}
\newcommand{\mes}[0]{ {\it M }}   
\newcommand{\ames}[0]  { \textrm{M}'}
\newcommand{\amesX}[0]  {{\cal M}'}
\newcommand{\partit}[0]  {{\it \Theta}}
\newcommand{\RX}[0] {{R}}          
\newcommand{\MX}[0] {{\cal M}}   
\newcommand{\FX}[1]{ {\cal J}\left({#1}\right) }  
\newcommand{\h}[1]{ {\mathcal H}(\mes|#1)}
\newcommand{\HXz}[0]{{\mathcal H}(\mes)}
\newcommand{\HX}[1]{\h{Y^{{#1}}}}
\newcommand{\htwo}[1]{ {\mathcal H}(\mes_2|#1)}
\newcommand{\ha}[1]{ {\mathcal H}(\mes_1|#1)}
\newcommand{\HXa}[1]{\ha{Y^{{#1}}}}
\newcommand{\MI}[2]{ {\mathcal I} \left( {#1}; {#2} \right)}
\newcommand{\CMI}[3]{ {\mathcal I} \left( {#1}; {#2} \left\vert {#3} \right. \right)}
\newcommand{\Ah}[1]{ {\mathcal H}(\ames|#1)}
\newcommand{\AHX}[1]{\Ah{Y^{{#1}}}}
\newcommand{\APe}[1] {{\it P}_{e}^{\ames}\{ {#1}\}}
\newcommand{\DEC}[1]{{\cal G}({#1})}
\newcommand{\bx}{{\bar x}}
\newcommand{\rvb}{{\bf b}}
\newcommand{\brx}{{\bar \rvx}}
\newcommand{\era}{{\textrm{erasure}}}
\newcommand{\Pe}[0] {{P_e}}  
\newcommand{\Pera}[0] {{P_{\era}}}  
\newcommand{\slack}{{\epsilon}}
\newcommand{\xr}{{x_r}}
\newcommand{\xfu}{{x_{f_u}}}
\newcommand{\xfl}{{x_{f_l}}}
\newcommand{\xa}{{x_a}}
\newcommand{\xd}{{x_d}}
\newcommand{\psed}{\textrm{red-alert exponent}}
\newcommand{\DIST}[1] { {\cal P}\left( {{#1}} \right)}
\newcommand{\fix}{{\bar \rvx'}}
\newcommand{\ctype}[2] { {\sf V} ({#1},{#2}) }
\newcommand{\typey}[1] { {\sf Q}_{ \left( {#1} \right) }}
\newcommand{\shell}[2] { {\sf T}_{{#1}} \left({#2}\right)}
\newcommand{\shelly}[1] { {\sf T}_{{#1}} }
\newcommand{\shellx}[1] { {\sf T}_{{#1}} }
\newcommand{\Dshell}[1] { {\cal S}_{{#1}, V}^{(\BLX)}} 
\newcommand{\DshellU}[1] { {\cal S}_{{#1}}^{(\BLX)}} 
\newcommand{\PYIID}[1] { \mathbb{P}_Y^{\BLX} \left( {{#1}} \right)} 
\newcommand{\uep}[0] {{\it UEP}}
\newcommand{\lcb}{\left\{}
\newcommand{\rcb}{\right\}}
\renewcommand{\(}{\left( }
\renewcommand{\)}{\right)}
\newcommand{\rvx}{{\bf x}}
\newenvironment{intexp}{\vspace{.2cm}\noindent{\textbf{Intuitive interpretation:~}}}{\\}
\newenvironment{optstr}{\vspace{.2cm}\noindent{\textbf{Optimal strategy:}}}{\\}
\newenvironment{strl}  {\vspace{.2cm}\noindent{\textbf{Strategy to reach lower bound}}}{\\}
\begin{document}

\title{Unequal Error Protection:\\
An Information Theoretic Perspective\thanks{This research is supported by
DARPA ITMANET project and an AFOSR grant FA9550-06-0156.
Initial part of this paper was submitted to IEEE
International Symposium on Information Theory, 2008.}}
\author{Shashi Borade \quad  Bar\i\c{s} Nakibo\u{g}lu \quad Lizhong Zheng \\
EECS, Massachusetts Institute of Technology\\ \{ spb , \ nakib ,\ lizhong \}\ @mit.edu}
\date{\today}

\maketitle
\begin{abstract} 
An information theoretic framework for unequal error protection is developed in terms of the exponential error bounds. The fundamental difference between the \emph{bit-wise} and \emph{message-wise} unequal error protection (\uep) is demonstrated, for fixed length block codes on DMCs without feedback. Effect of feedback is investigated via variable length block codes. It is shown that, feedback  results in a significant improvement in both   \emph{bit-wise} and \emph{message-wise} \uep~(except the single message case for missed detection). The distinction between false-alarm and missed-detection formalizations for \emph{message-wise} \uep~ is also considered. All results presented are at rates close to capacity. \end{abstract}

\section{Introduction}
Classical theoretical framework for communication \cite{shannon} assumes that all information is equally important. In this framework, the communication system aims to provide a uniform error protection to all messages: any particular message being mistaken as any other is viewed to be equally costly. With such uniformity assumptions, reliability of a communication scheme is measured by either the average or the worst case probability of error, over all possible messages to be transmitted. In information theory literature, a communication scheme is said to be {\it reliable} if this error probability can be made small. Communication schemes designed with this framework turn out to be optimal in sending any source over any channel, provided that long enough codes can be employed. This homogeneous view of information motivates the universal interface of ``bits'' between any source and any channel \cite{shannon}, and is often viewed as  Shannon's most significant contribution.

In many communication scenarios, such as wireless networks, interactive systems, and control applications, where uniformly good error protection becomes a luxury; providing such a protection to the entire information might be wasteful, if not infeasible. Instead, it is more efficient here to protect a crucial part of information better than the rest. For example,
\begin{itemize}
\item In a wireless network, control signals like channel state, power control, and scheduling information are often more important than the payload data, and should be protected more carefully. Thus  even though the final objective is delivering the payload data, the physical layer should provide a better protection to such protocol information. Similarly for the Internet, packet headers are more important for delivering the packet and need better protection to ensure that the actual data gets through. 
\item Another example is transmission of a multiple resolution source code. The coarse resolution needs a better protection than the fine resolution so that the user at least obtains some crude reconstruction after bad noise realizations. 
\item Controlling unstable plants over noisy communication link  \cite{sahai1} and compressing unstable sources \cite{sahai2} provide more examples where different parts of information need different reliability.
\end{itemize}
In contrast with the classical homogeneous view, these examples demonstrate the heterogeneous nature of information. Furthermore the practical need for unequal error protection (\uep) due to this heterogeneity demonstrated in these examples is the reason why we need to go beyond the conventional content-blind information processing.

Consider a message set ${\cal M}=\{1, 2,3, \ldots, 2^k\}$ for a block code. Note that members of this set, i.e.  ``messages'', can also be represented by length $k$ strings of information bits, $\rvb=[b_1, b_2, \ldots b_k]$. A block code is composed of an encoder which maps the messages, $M \in  {\cal M}$ into channel inputs and a decoder which maps channel outputs to decoded message, $\hat{\mes} \in {\cal M}$. An error event for a block code is $\{\hat{\mes} \neq \mes\}$.  In most information theory texts, when an error occurs, the entire bit sequence $\rvb$ is rejected. That is, errors in decoding the message and in decoding the information bits are treated similarly. We avoid this,  and try to figure out what can be achieved by analyzing the errors of different subsets of bits separately.

In the existing formulations of unequal error protection codes \cite{trott} in coding theory, the information bits are partitioned  into subsets, and the decoding errors in different subsets of bits are viewed as different kinds of errors. For example, one might want to provide a better protection to one subset of bits by ensuring that errors in these bits are less probable than the other bits. We call such problems as ``bit-wise \uep''. Previous examples of packet headers, multiple resolution codes, etc. belong to this category of \uep.

However, in some situations, instead of \emph{bits} one might want  to provide a better protection to a subset of \emph{messages}. For example, one might consider embedding a special message in a normal $k$-bit code, i.e., transmitting one of $2^k+1$ messages, where the extra message has a special meaning and requires a smaller error probability. Note that the error event for the special message is not associated to error in any particular bit or set of bits. Instead, it corresponds to a particular bit-sequence (\emph{i.e}. message) being decoded as some other bit-sequence. Borrowing from hypothesis testing, we can define two kinds of errors corresponding to a special message.

\begin{itemize}
\item
\emph{Missed-detection} of a message $i$ occurs when transmitted message $\mes$ is $i$ and decoded message  $\hat{\mes}$ is some other message $j\neq i$. Consider a special message indicating some system emergency which is too costly to be missed. Clearly, such  special messages demand a small missed detection probability.  Missed detection probability of a message is  simply the conditional error probability after its transmission.
\item
\emph{False-alarm} of a message $i$ occurs when transmitted message $\mes$ is some other message $j\neq i$ and decoded message $\hat{M}$ is $i$. Consider the reboot message for a remote-controlled system such as a robot or a satellite or the ``disconnect'' message to a cell-phone. Its false-alarm could cause unnecessary shutdowns and other system troubles. Such special messages demand small false alarm probability. 
\end{itemize}

 We call such problems as ``message-wise \uep''. In conventional framework, every bit is as important as every other bit and every message is as important as every other message. In short in conventional framework it is assumed that all the information is ``created equal''. In such a  framework there is no reason to distinguish between  bit-wise or message wise error probabilities because message-wise error probability is larger than bit-wise error probability by an insignificant factor, in terms of exponents. However, in the \uep~ setting, it is necessary to differentiate between  message-errors and bit-errors. We will see that in many situations, error probability of special bits and messages have behave  very differently.

The main contribution of this paper is a set of results, identifying the performance limits and optimal coding strategies, for a variety of \uep~ scenarios. We focus on a few simplified notions of \uep, most with immediate practical applications, and try to illustrate the main insights for them. One can imagine using these \uep~ strategies for embedding protocol information within the actual data. By eliminating a separate control channel, this can enhance the overall bandwidth and/or energy efficiency.

For conceptual clarity, this article focuses exclusively on situations where the data rate is essentially equal to the channel capacity. These situation can be motivated by the scenarios where data rate is a crucial system resource that can not be compromised. In these situations, no positive error exponent in the conventional sense can be achieved. That is, if we aim to protect the entire information uniformly well, neither bit-wise nor message-wise error probabilities can decay exponentially fast with increasing code length. We ask the question then ``can we make the error probability of a particular bit, or  a particular message, decay exponentially fast with block length?''

 When we break away from the conventional framework and start to provide better protection to against certain kinds of errors, there is no reason to restrict ourselves by assuming that those errors are erroneous decoding of some particular \emph{bits} or missed detections or false alarms  associated with  some particular messages. A general formulation of \uep~ could be an arbitrary combination of protection demands against some specific kinds of errors. In this general definition of  \uep, bit-wise \uep~ and message-wise \uep~ are simply two particular ways of specifying which kinds of errors are too costly compared to others.

In the following, we start by specifying the channel model and giving some basic definitions in Section \ref{sec:model}. Then  in section \ref{sec:nofeed} we discuss bit-wise \uep~ and message-wise \uep~ for block codes without feedback. Theorem \ref{thm:bit} shows that  for data-rates approaching capacity, even a single bit cannot achieve  any positive error exponent. Thus in bit-wise \uep, the data-rate must back off from capacity for achieving any positive error exponent even for a single bit. On the contrary, in message-wise \uep, positive error exponents can be achieved even at capacity.  We first  consider the case when there is only one special message and show that,   Theorem \ref{thm:md}, optimal (missed-detection) error exponent for the special message is equal to the \emph{red-alert exponent}, which is defined in section \ref{sec:nofeedsm}.   We then consider situations where an exponentially large number of messages are special and each special message  demands  a positive (missed detection) error exponent. (This situation has previously been analyzed before in \cite{csiszar1}, and a result closely related to our has been reported there.)  Theorem \ref{thm:mdmany}  shows a surprising result that these special messages can achieve  the same exponent as if all the other (non-special) messages  were absent.  In other words, a capacity achieving code and an error  exponent-optimal code below capacity can coexist without hurting  each other. These results also shed some new light on the structure  of capacity achieving codes.

Insights from the block codes without feedback becomes useful in Section \ref{sec:feedback} where we investigate similar problems for variable  length block codes with feedback. Feedback together with variable decoding  time creates some fundamental connections between bit-wise \uep~ and message-wise \uep. Now even for bit-wise \uep, a  positive error exponent can be achieved at capacity. Theorem \ref{thm:bitf} shows that a single special bit can achieve the same exponent as a single special message---the $\psed$. As the number of special bits increases, the achievable exponent for them decays linearly with their rate as shown in Theorem \ref{thm:l1}. Then Theorem \ref{thm:many} generalizes this result to the case when there are multiple levels of specialty---most special, second-most special and so on. It uses a strategy similar to onion-peeling  and achieves error exponents which are successively refinable over multiple layers. For  single special message case, however, Theorem \ref{thm:mdf} shows that feedback does not improve the optimal missed detection exponent. The case of exponentially many  messages is resolved in Theorem \ref{thm:msru}.  Evidently  many special messages cannot achieve an exponent higher than that of a single special message, i.e. $\psed$.  However it turns out that the special messages can reach $\psed$ at rates below a certain threshold, as if all the other special messages were absent. Furthermore for the rates above the very same threshold, special messages reach the corresponding value of Burnashev's exponent, as if all the ordinary messages were absent.

Section \ref{sec:fa} then addresses message-wise \uep~ situations where special messages demand small probability of false-alarms instead of missed-detections. It considers the case of fixed length block codes with out feedback as well as variable length block codes with  feedback. This discussion for false-alarms was postponed from earlier sections to avoid confusion with the missed-detection results in earlier sections. Some future directions are discussed briefly in Section \ref{sec:summary}.

After discussing each theorem, we will provide a brief description of the optimal strategy, but refrain from detailed technical discussions.  Proofs can be found in later sections. In section \ref{sec:proofs} and section \ref{sec:proofsf}  we will present  the proofs of the results in Section \ref{sec:nofeed},  on block codes without feedback, and Section \ref{sec:feedback}, on variable length  block codes with feedback, respectively. Lastly in Section \ref{sec:proofsfa} we discuss the proofs for the false-alarm results  of Section \ref{sec:fa}. Before going into the presentation of our work let us give a very brief overview of the previous work on the problem, in different fields.

\subsection{Previous Work and Contribution}\label{sec:prev}
 The simplest method of unequal error protection is to allocate different channels for different types of data. For example, many wireless systems allocate a separate ``control channel'', often with short codes with low rate and low spectral efficiency, to transmit control signals with high reliability. The well known Gray code, assigning similar bit strings to close by constellation points, can be viewed as \uep: even if there is some error in identifying the transmitted symbol, there is a good chance that some of the bits are correctly received. But clearly this approach is far from addressing the problem in any effective way.

 The first systematic consideration of problem in coding theory was within the  frame work of linear codes. In \cite{wolf}, Masnick and Wolf suggested techniques which  protects different parts (bits) of the message against different number of channel errors (channel symbol conversions).  This frame work has extensively studied over the years in \cite{uep1}, \cite{uep2}, \cite{uep3}, \cite{uep4}, \cite{uep5}, \cite{uep6}, \cite{uep7} and in many others. Later issue is addressed within frame work of Low Density Parity Check (LDPC) codes too \cite{vas}, \cite{pou}, \cite{pou2}, \cite{rah},  \cite{ext1}, and \cite{ext2}.

``Priority encoded transmission'' (PET) was suggested by Albenese et.al. \cite{pet} as an alternative model of the problem, with packet erasures. In this approach guarantees are given not in terms of channel errors but packet erasures. Coding and modulation issues are addressed simultaneously in \cite{uep}. For wireless channels, \cite{diggavi} analyzes this problem in terms of diversity-multiplexing trade-offs.

In contrast with above mentioned work, we pose and address the problem within the information theoretic frame work. We work with  the error probabilities and refrain from making assumptions about the particular block code used while proving our converse results. This is the main difference between our approach and the prevailing  approach within the coding theory community.

In \cite{bas}, Bassalygo {\it et. al.}  considered the error correcting codes whose messages are composed of two group of bits, each of which requires different level of protection against channel errors and provided inner and outer bounds to the achievable performance, in terms of hamming distances and rates. Unlike other works within coding theory frame work,  they do not make any assumption about the code. Thus their results can indeed be reinterpreted in our  framework as a result for bit wise \uep, on binary symmetric channels.

 Some of the the \uep~ problems have already been investigated within the framework of information theory too. Csisz\'ar   studied message wise \uep~ with many messages in  \cite{csiszar1}. Moreover  results in  \cite{csiszar1} are not restricted to the rates close to capacity, like ours. Also messages wise \uep~ with single special message was dealt with in \cite{kud} by Kudryashov. In \cite{kud}, an \uep~ code with  single special message is used as a subcode within a variable delay communication scheme. The scheme proposed in \cite{kud} for the single special message case  is a key building block in many of the results in section \ref{sec:feedback}.  However the optimality of the scheme was not proved in \cite{kud}. We show that it is indeed optimal.

The main contribution of the current work is the proposed frame work for \uep~ problems within information theory.  In addition to the particular results presented  on different problems and the contrasts demonstrated between different scenarios,  we believe the proof techniques used in subsections\footnote{The key idea in  subsection \ref{sec:proofthmsix}  is a generalization of the approach presented in \cite{burna2}.}   \ref{sec:proofthmone}, \ref{sec:proofthmsix} and \ref{subsec:fmdconverse} are novel and they are promising for the future work in the field.

\section{Channel Model and Notation}
\label{sec:model}
\subsection{DMC's  and Block Codes}
We consider a discrete memoryless channel (DMC) $W_{Y|X}$, with input alphabet ${\cal X}=\{1,2,\ldots,|{\cal X}|\}$ and output alphabet ${\cal Y}=\{1,2,\ldots,|{\cal Y}|\}$. The conditional distribution of output letter $Y$ when the channel input letter $X$ equals  $i\in{\cal X}$ is denoted by $W_{Y|X}(\cdot|i)$.
\begin{equation*}
 \PCX{Y=j}{X=i}=W_{Y |X}(j|i)  \qquad \forall i\in{\cal X},\ \forall j\in{\cal Y}.
\end{equation*}
We assume that all the entries of the channel transition matrix are positive, that is, every output letter is reachable from every input letter. This assumption is indeed a crucial one. Many of the results  we present in this paper  change when there are zero-probability  transitions.

A  length $\BLX$  block code without feedback  with message set ${\cal M}= \{1,2,\ldots,|{\cal M}|\}$  is  composed of two mappings, encoder mapping and  decoder mapping. Encoder mapping assigns a length $\BLX$ codeword,\footnote{Unless mentioned otherwise, small letters (e.g. $x$) denote a particular value of the corresponding random variable denoted in capital letters (e.g. $X$).} 
\begin{equation*}
\bx^{\BLX}(k)\stackrel{\Delta}=(\bx_1(k),\bx_2(k)\cdots,\bx_\BLX(k)) \qquad \forall k \in {\cal M}  
\end{equation*}
where $\bx_t(k)$ denotes the input at time $t$ for message $k$. Decoder mapping, $\hat{\mes}$, assigns a message to each possible channel output sequence, i.e. $\hat{\mes}: {\cal Y}^{\BLX} \rightarrow {\cal M}$.

At time zero, the transmitter  is given the message $\mes$, which is chosen from ${\cal M}$ according to a uniform distribution. In the following $\BLX$ time units, it sends the corresponding codeword. After observing $Y^{\BLX}$, receiver  decodes a message. The error probability $\Pe$ and rate $\RX$ of the code is given by
\begin{equation*}
\Pe\DEF \PX{\hat{\mes} \neq \mes}\qquad\textrm{and} \qquad \RX\DEF \tfrac{\ln |\MX|}{\BLX}.
\end{equation*}

\subsection{Different Kinds of Errors}
\label{sec:modele}
 While discussing message-wise \uep, we  consider the conditional error probability for a particular message $i\in{\cal M}$,
\begin{equation*}
\PCX{\hat{\mes} \neq i}{\mes=i}.
\end{equation*}
Recall that this  is the same as the missed detection probability for message $i$.

On the other hand when we are talking about bit-wise \uep, we  consider message sets that are of the form ${\cal M}={\cal M}_1 \times {\cal M}_2$. In such cases message $\mes$ is composed of two submessages, $\mes=(\mes_1,\mes_2)$.  First submessage $\mes_1$ corresponds to the high-priority bits while second submessage $\mes_2$ corresponds to the low-priority bits. The uniform choice of $\mes$ from ${\cal M}$,  implies the uniform and independent choice of $\mes_1$ and $\mes_2$  from ${\cal M}_1$ and ${\cal M}_2$ respectively. Error probability of a submessage  $\mes_j$ is  given by
\begin{equation*}
\PX{\hat{\mes}_j \neq \mes_j}  \qquad j=1,2
\end{equation*}
Note that the overall message $\mes$ is decoded incorrectly when either $\mes_1$ or $\mes_2$ or both are decoded incorrectly. The goal of bit-wise \uep~ is to achieve best possible  $\PX{\hat{\mes}_1 \neq \mes_1}$ while ensuring a reasonably small  $\Pe=\PX{\hat{\mes}\neq \mes}$.

\subsection{Reliable Code Sequences}
We focus  on systems where reliable communication is achieved in order to find exponentially tight bounds for error probabilities of special parts of information. We use the notion of code-sequences to simplify our discussion. 

 A sequence of codes indexed by their  block-lengths is called \emph{reliable} if
\begin{equation*}
  \lim_{\BLX \rightarrow \infty} \Pe^{(\BLX)}=0
\end{equation*}
For any reliable code-sequence $\SC$, the rate $\RX_{\SC}$ is given by
\begin{equation*}
  \RX_{\SC} \DEF \liminf_{\BLX \rightarrow \infty} \tfrac{\ln |{\MX}^{(\BLX)}|}{\BLX}
\end{equation*}
The (conventional) error  exponent  of  a reliable sequence is
then
\begin{equation*}
 E_{\SC} \DEF \liminf_{\BLX \rightarrow \infty} \tfrac{- \ln \Pe^{(\BLX)}}{\BLX}
\end{equation*}
Thus the number of messages in ${\cal Q}$ is\footnote{The $\doteq$ sign denotes equality in the exponential sense. For a sequence $a^{(n)}$, \begin{equation*}  a^{(n)} \doteq e^{n F}  \Leftrightarrow    F=\liminf_{n \rightarrow \infty} \frac{\ln a^{(n)}}{n} \end{equation*}} $\doteq e^{\BLX \RX_{\SC}}$ and their average error probability decays like $e^{-\BLX E_{\SC}}$ with block length. Now we can define error exponent $E(\RX)$ in the conventional sense, which is equivalent to  the ones given in \cite{gallager_book}, \cite{sgb}, \cite{CK}, \cite{forney1}, \cite{forney}. 

\begin{definition}
   For any $\RX \leq \CX$ the error exponent $E(\RX)$ is defined as
\begin{equation*}
\Ee(\RX)\stackrel{\Delta}=\sup_{\substack{\SC: \RX_{\SC}\geq R }} E_{\SC}
\end{equation*}
\end{definition}

As mentioned previously, we are interested in \uep~ when operating at capacity. We already know, \cite{sgb}, that $ E(\CX) =0$, i.e. the overall error probability cannot decay exponentially at capacity. In the following sections, we show how certain parts of information can still achieve a positive exponent at capacity. In doing that, we are focusing only on  the reliable sequences  whose rates are equal to $\CX$. We call such reliable code sequences as \emph{capacity-achieving sequences}.

Through out the text we denote Kullback-Leibler (KL)  divergence between two distributions $\alpha_X(\cdot)$ and $\beta_X(\cdot)$ as $\KLD{\alpha_{X}(\cdot)}{\beta_{X}(\cdot)}$.
\begin{equation*}
  \KLD{\alpha_{X}(\cdot)}{\beta_{X}(\cdot)} = \sum_{i \in {\cal X}} \alpha_X(i) \ln \tfrac{\alpha_X(i)}{\beta_X(i)}
\end{equation*}
Similarly  conditional KL divergence between $V_{Y|X} (\cdot|\cdot)$ and $W_{Y|X}(\cdot|\cdot)$ under $P_X(\cdot)$ is given by
\begin{equation*}
 \CKLD{V_{Y|X}(\cdot|X)}{W_{Y|X}(\cdot|X)}{P_X} =\sum_{i \in {\cal X}} P_X(i) \sum_{j \in {\cal Y}} V_{Y|X}(j|i) \ln \tfrac{V_{Y|X}(j|i)}{W_{Y|X}(j|i)}
\end{equation*}
The output distribution that achieves the capacity is denoted by $P_Y^{*}$ and a corresponding input distribution is denoted by $P_{X}^{*}$.

\section{\uep~ at Capacity: Block Codes without Feedback}
\label{sec:nofeed}
\subsection{Special bit}
We first address the situation where one particular  {bit} (say the first)  out of the total $\log_2 |{\cal M}|$ bits is a special bit---it needs a much better error protection than the overall information. The error probability of the special bit is required to decay as fast as possible while ensuring reliable communication at capacity, for the overall code.  The single special bit is denoted by  $\mes_1$  where ${\cal M}_1=\{0,1\}$ and over all message $\mes$ is of the form $\mes=(\mes_1,\mes_2)$ where ${\cal M}={\cal M}_1 \times {\cal M}_2$. The optimal error exponent $\Eb$ for the special bit is then defined as follows\footnote{Appendix A discusses a different but equivalent type of definition and shows why it is equivalent to this one. These two types of definitions are equivalent for all the \uep~ exponents discussed in this paper.}.

\begin{definition}
   For a capacity-achieving sequence $\SC$ with message sets ${\cal M}^{(\BLX)}={\cal M}_1 \times {\cal M}_2^{(\BLX)}$ where ${\cal M}_1=\{0,1\}$, the special bit error exponent is defined as
\begin{equation*}
\Eb_{,\SC} \DEF \liminf_{\BLX \rightarrow \infty} \tfrac{ -\ln \SPX{\BLX}{\hat{\mes}_1 \neq\mes_1}}{\BLX}
\end{equation*}
Then $\Eb$ is defined as  $\Eb \DEF \sup_{\SC} \Eb_{,\SC}$.
\end{definition}

Thus if  $\SPX{\BLX}{\hat{\mes}_1 \neq \mes_1} \doteq \exp(-\BLX \Eb_{,\SC})$ for a reliable sequence $\SC$, then $\Eb$ is the supremum of $\Eb_{,\SC}$ over all capacity-achieving $\SC$'s.

Since $\Ee(\CX)=0$, it is clear that the entire information cannot achieve any positive error exponent at capacity. However, it is not clear whether a single special bit can steal a positive error exponent $\Eb$ at capacity.
\begin{theorem}
\[\Eb=0\] \label{thm:bit}
\end{theorem}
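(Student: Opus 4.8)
The plan is to show that any capacity-achieving code sequence forces the single special bit to have a vanishing exponent. The intuition is that if $\mes_1$ could be decoded with exponentially small error probability while $\mes_2$ carries essentially all of the capacity, then one could ``strip off'' the knowledge of $\mes_1$ and be left with a code for $\mes_2$ whose rate is still $\CX$ but with a strictly smaller input-constraint region — and in fact one would have, conditioned on each value of $\mes_1$, a capacity-achieving subcode for which the conditional codeword composition is pinned down; combining the two conditional subcodes back into one code of rate $\CX$ then requires that the two halves look statistically indistinguishable at the output, which is incompatible with reliably telling them apart.

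First I would set up the contradiction: assume $\Eb > 0$, so along a subsequence $\SPX{\BLX}{\hat{\mes}_1 \neq \mes_1} \le e^{-\BLX \Eb/2}$ for all large $\BLX$, while $\RX_{\SC} = \CX$ and $\Pe^{(\BLX)} \to 0$. Split the codebook into ${\cal C}_0$ and ${\cal C}_1$ according to the value of the first bit; each has $\doteq e^{\BLX \CX}$ codewords and, since the overall code is reliable, each is individually a capacity-achieving subcode (the conditional error probability of $\mes_2$ given $\mes_1=b$ still vanishes). The next step is the key structural fact: a capacity-achieving code, at rate exactly $\CX$, must have (most of) its codewords with empirical composition close to the capacity-achieving input distribution $P_X^*$, and more importantly its induced output distribution must be close to $P_Y^*$ in a strong sense — this is the standard ``output statistics of capacity-achieving codes'' phenomenon (e.g. the type-counting / Fano-type argument: if too many codewords had output distribution far from $P_Y^*$, one could not pack $e^{\BLX \CX}$ of them reliably). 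Apply this separately to ${\cal C}_0$ and to ${\cal C}_1$: both induce output distributions close to the \emph{same} $P_Y^*$.

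Then I would derive the contradiction from a change-of-measure / data-processing argument. The decoder's decision region for ``$\mes_1 = 0$'' must have probability $\ge 1 - e^{-\BLX \Eb/2}$ under the output law induced by ${\cal C}_0$ and probability $\le e^{-\BLX\Eb/2}$ under the output law induced by ${\cal C}_1$; hence these two output laws have total variation distance $\ge 1 - 2e^{-\BLX \Eb/2} \to 1$. But both were just shown to be within $o(1)$ of $P_Y^*$ (in a divergence or total-variation sense, after averaging over codewords), so by the triangle inequality their mutual distance is $o(1)$ — contradiction. The main obstacle is making the ``output statistics are pinned to $P_Y^*$'' step quantitatively strong enough: one needs that a \emph{rate-$\CX$} reliable code has output distribution converging to $P_Y^*$ not just in an averaged/weak sense but strongly enough that the triangle-inequality collision above is genuine; this is where the strict positivity of all channel transition probabilities (assumed in Section \ref{sec:model}) and a careful use of Fano's inequality together with the fact that the gap $\CX - \RX_{\SC}$ is exactly zero (leaving no room for an exponential concentration of atypical codewords) must be exploited. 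The remaining pieces — splitting the codebook, bounding conditional error probabilities, and the change-of-measure inequality — are routine.
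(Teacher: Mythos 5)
Your overall plan — split the codebook into $\mathcal{C}_0,\mathcal{C}_1$ by the value of the special bit, and argue that both halves "look like $P_Y^*$ at the output" so they cannot be told apart — is in the right spirit, and the decomposition itself mirrors the paper. But the step where you hope to finish via "both output mixtures are within $o(1)$ of $P_Y^*$, hence within $o(1)$ of each other, contradicting total variation $\to 1$" cannot work, and this is not a quantitative refinement away. The standard output-statistics fact for a rate-$\CX$ reliable code is $\tfrac{1}{\BLX} D\bigl(Q_{Y^\BLX}\big\Vert P_{Y^*}^{\otimes\BLX}\bigr) = o(1)$, i.e.\ divergence is $o(\BLX)$, which under Pinsker gives only $\mathrm{TV} = o(\sqrt{\BLX})$ — vacuous. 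And the stronger, un-normalized claim $D(Q_{Y^\BLX}\Vert P_{Y^*}^{\otimes\BLX}) = o(1)$ is simply false for general capacity-achieving codes: if it held for both halves, the triangle inequality would force $\mathrm{TV}(Q_0,Q_1)\to 0$, yet a reliable code decodes $\mes_1$ correctly with probability $1-o(1)$, so $\mathrm{TV}(Q_0,Q_1)\to 1$. There is no contradiction to be had here; the two output mixtures really do separate in total variation. What the theorem asserts is only that this separation cannot be \emph{exponentially} sharp, which is a far more delicate statement than anything a divergence-to-$P_Y^*$ bound can deliver.

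The paper closes exactly this gap with a mechanism you do not have: the blowing-up lemma, applied after first reducing (via a two-block concatenation trick) to fixed-composition codes so that a single reference measure $\PYIID{\cdot}$ is available. The argument is not that the two output mixtures are close; it is that the set of outputs falling in typical shells and decoding to $\hat\mes_1=0$ has $\PYIID{\cdot}$-probability at least about $e^{-\BLX(\CX-R)}(\tfrac12 - o(1))$, and by symmetry so does the analogous set for $\hat\mes_1=1$. Because the rate is essentially $\CX$, both probabilities are only sub-exponentially small, so the blowing-up lemma inflates each set by a Hamming radius $\ell_\BLX = o(\BLX)$ to probability $\to 1$; the two blown-up sets must then intersect on a set of $\PYIID{\cdot}$-probability $\to 1$. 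The final and essential step, for which strict positivity of $W_{Y|X}$ is used, is pointwise: for any $y^\BLX$ in the intersection, both $\PCX{y^\BLX}{\mes_1=0}$ and $\PCX{y^\BLX}{\mes_1=1}$ are lower-bounded by $e^{-\BLX(H(W|P_X)+R) + o(\BLX)}$, because $y^\BLX$ differs in only $o(\BLX)$ places from a point in a typical shell of a codeword with the right first bit. Summing $\tfrac12 \min$ of the two conditional likelihoods over the intersection and then invoking Fano to control $I(P_X,W) - R^{(\BLX)}$ yields $\PX{\hat\mes_1\neq\mes_1} \geq e^{-o(\BLX)}$. This is qualitatively different from a triangle-inequality-on-distributions argument: it establishes a pointwise lower bound on the likelihood of the \emph{wrong} hypothesis over a set of near-full reference measure, rather than any global closeness of mixtures. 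Your proposal correctly identifies where the difficulty lies, but the route you sketch to get past it is a dead end; the blowing-up lemma (or something of comparable strength) is the missing ingredient.
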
 
This implies that, if we want the error probability of the messages to vanish with increasing block length and the error probability of at least one of the bits to decay with a positive exponent with block length, the rate of the code sequence should be strictly smaller than  the capacity.

Proof of the theorem is heavy in calculations, but the main idea behind is the ``blowing up lemma'' \cite{CK}. Conventionally, this lemma is only used for strong converses for various capacity theorems. It is also worth mentioning that the conventional converse techniques like Fano's inequality are not sufficient to prove this result.

\begin{intexp}
Let the shaded balls in Fig. \ref{fig:cluster} denote the minimal  decoding regions of the messages. These decoding regions ensure reliable communication, they are essentially  the typical noise-balls (\cite{cover-book}) around codewords. The decoding regions on the left of the thick line corresponds to $\hat{\mes}_1=1$ and those on the right correspond to the same when $\hat{\mes}_1=0$. Each of these halves includes half of the decoding regions. Intuitively, the blowing up lemma implies that if we try to add slight extra thickness to the left clusters in Figure  \ref{fig:cluster}, it blows up to occupy almost all the output space. This strange phenomenon in high dimensional spaces leaves no room for the right cluster to fit. Infeasibility of adding even slight extra thickness implies zero error exponent the special bit.
  \begin{figure}
   \centering
 \includegraphics[width=2.5in]{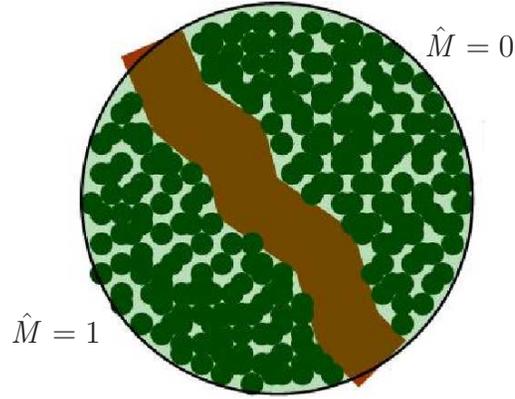}\\
   \setlength{\unitlength}{1cm}
   \begin{picture}(.1,.1)
     \put(-3.5,1.2){{\large  $\hat{\mes}=1$}}
     \put(2,5) {{\large  $\hat{\mes}=0$}}
   \end{picture}
\caption{Splitting the output space into $2$ distant enough clusters. }
  \label{fig:cluster}
  \end{figure}
\end{intexp}

\subsection{Special message}\label{sec:nofeedsm}
Now consider situations where one particular message (say $\mes=1$) out of the $\doteq e^{\BLX \CX}$ total messages is a special message---it needs a superior error protection.  The missed detection probability for this `emergency' message needs to be minimized. The best missed detection exponent $\Emd$ is defined as follows.\footnote{Note that the definition obtained by replacing $\SPCX{\BLX}{\hat{\mes} \neq 1}{\mes =1}$ by $\min_{j} \SPCX{\inx}{\hat{\mes} \neq j}{\mes =j}$ is equivalent to the one given above, since we are taking the supremum over $\SCf$ anyway. In short, the message $j$ with smallest conditional error probability could always be relabeled as message $1$.}

\begin{definition}
  For a capacity-achieving sequence $\SC$, missed detection exponent is defined as
\begin{equation*}
\Emd_{,\SC} \DEF \liminf_{\BLX \rightarrow \infty} \tfrac{ -\ln \SPCX{\BLX}{\hat{\mes} \neq 1}{\mes=1}}{\BLX}.
\end{equation*}
Then $\Emd$ is defined as $\Emd \DEF \sup_{\SC} \Emd_{,\SC}$.
\end{definition}

Compare this with the situation where we aim to protect  all the messages uniformly well. If all the messages demand equally good missed detection exponent, then no positive exponent is achievable at capacity. This follows from  the earlier discussion about $E(\CX)=0$. Below theorem shows the improvement in this exponent if we only demand it for a single message instead of all.

\begin{definition}
   The  parameter $\PCAP$ is defined\footnote{Authors would like to thank  Krishnan Eswaran of UC Berkeley for suggesting this name.} as the \emph{$\psed$} of a channel.
\begin{equation*}
  \PCAP \DEF \max_{i\in {\cal X}} \KLD{P_Y^*(\cdot)}{W_{Y|X}(\cdot|i)}  
\end{equation*}
We will denote the input letter achieving above maximum by $\xr$.
\end{definition}
\begin{theorem}
\[ \Emd=\PCAP.\]\label{thm:md}
\end{theorem}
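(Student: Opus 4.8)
The plan is to establish the two inequalities $\Emd \geq \PCAP$ (achievability) and $\Emd \leq \PCAP$ (converse) separately.

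For the achievability direction, I would start from any capacity-achieving code sequence on the remaining $\doteq e^{\BLX \CX}$ ordinary messages, and carve out the special message $\mes=1$ by assigning it the constant codeword $\bx^{\BLX}(1) = (\xr, \xr, \ldots, \xr)$, i.e.\ the $\BLX$-fold repetition of the red-alert input letter $\xr$. The decoder is modified as follows: declare $\hat{\mes}=1$ whenever the empirical type of the received sequence $Y^{\BLX}$ is close (within a vanishing slack $\slack_\BLX \to 0$) to $P_Y^*$, and otherwise run the original decoder restricted to the ordinary messages. When message $1$ is sent, the output is i.i.d.\ $W_{Y|X}(\cdot|\xr)$, and by Sanov's theorem the probability that its type lands outside a small neighborhood of $P_Y^*$ decays with exponent $\KLD{P_Y^*}{W_{Y|X}(\cdot|\xr)} = \PCAP$; this is the missed-detection probability of message $1$. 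One must check two consistency conditions: that the ordinary messages' codewords (which are typical for $P_X^*$ and hence induce output type near $P_Y^*$) are only rarely swallowed by the ``declare $1$'' region — but this costs only a subexponential correction since $\slack_\BLX \to 0$, so the ordinary code stays reliable at rate $\CX$; and that removing one codeword from an exponentially large codebook does not change the rate. This yields $\Emd \geq \PCAP$.

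For the converse, I would argue as follows. Fix any capacity-achieving sequence $\SC$ with decoding region $\DEC{1} \subseteq {\cal Y}^\BLX$ for the special message. The missed-detection probability is $1 - W^{\BLX}(\DEC{1} \mid \bx^{\BLX}(1))$, so to get a large missed-detection exponent we need $W^{\BLX}(\DEC{1}\mid \bx^{\BLX}(1)) \to 1$, i.e.\ $\DEC{1}$ must capture almost all the probability under the channel driven by the codeword $\bx^{\BLX}(1)$, whatever its type $P$ may be. On the other hand, reliability at capacity for the ordinary messages forces $\DEC{1}$ to be ``small'' in the sense that a typical output sequence (type near $P_Y^*$) lands in $\DEC{1}$ only with vanishing probability — otherwise the $e^{\BLX\CX}$ ordinary decoding regions cannot coexist with $\DEC{1}$ without pushing the ordinary error probability away from $0$. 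Quantitatively, I would show that $\PYIID{\DEC{1}}$ — the probability of $\DEC{1}$ under the i.i.d.\ $P_Y^*$ measure — must vanish, in fact $\tfrac{1}{\BLX}\ln \PYIID{\DEC{1}} \to 0$ is not enough; one needs it $\to$ something strictly negative or uses a packing/Fano argument at capacity to force $\PYIID{\DEC{1}} \doteq e^{-\BLX o(1)}$ cannot happen — more precisely, $\PYIID{\DEC{1}}$ cannot have positive probability in the limit. Then a change-of-measure / large-deviations estimate bounds the missed-detection exponent: for a set $A$ with small $P_Y^{*,\BLX}$-probability, the quantity $-\tfrac{1}{\BLX}\ln W^{\BLX}(A \mid \bx^{\BLX})$ is at least $\min_{i} \KLD{P_Y^*}{W_{Y|X}(\cdot|i)}$ up to terms controlled by $-\tfrac1\BLX \ln P_Y^{*,\BLX}(A)$ — wait, we want the complement — rather, $1 - W^{\BLX}(\DEC 1 \mid \bx^\BLX(1)) = W^\BLX(\DEC 1 ^c \mid \bx^\BLX(1))$, and since $\DEC 1^c$ carries almost all the $P_Y^{*,\BLX}$-mass, the exponent with which $W^\BLX(\cdot \mid \bx^\BLX(1))$ assigns mass to $\DEC 1^c$ cannot exceed $\max_i \KLD{P_Y^*}{W_{Y|X}(\cdot|i)} = \PCAP$ — this is exactly a Sanov-type upper bound, using that the best the transmitter can do is place the codeword at the letter maximizing the divergence to $P_Y^*$. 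Assembling these gives $\Emd \leq \PCAP$.

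The main obstacle I expect is the converse, specifically making rigorous the claim that reliability at capacity forces the special message's decoding region $\DEC{1}$ to have asymptotically vanishing probability under $P_Y^{*,\BLX}$ — this is where the subtlety of operating \emph{exactly at} capacity enters, and a naive Fano argument is too weak (as the authors warn for Theorem \ref{thm:bit}). I would handle this by a type-counting argument: the $e^{\BLX\CX}$ ordinary codewords are (essentially) of type $P_X^*$ with outputs concentrated on the $P_Y^*$-shell, so their decoding regions nearly tile that shell; if $\DEC{1}$ also occupied a nonvanishing fraction of the shell, a counting bound would show the ordinary average error probability is bounded away from $0$, contradicting reliability. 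A secondary technical point is dealing with arbitrary (non-constant-composition) codes and the codeword $\bx^\BLX(1)$ of arbitrary type, which I would resolve by first reducing to constant-composition subcodes via the standard type-decomposition and noting there are only polynomially many types.
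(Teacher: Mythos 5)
Your achievability argument has the decoding region inverted, and this breaks the proof. You declare $\hat{\mes}=1$ when the empirical type of $Y^{\BLX}$ is \emph{close} to $P_Y^*$ and defer to the ordinary decoder otherwise. But when $\mes=1$ is sent the output is i.i.d.\ $W_{Y|X}(\cdot|\xr)$, whose empirical type concentrates near $W_{Y|X}(\cdot|\xr)$, which is far from $P_Y^*$ (since $\xr$ is chosen to maximize $\KLD{P_Y^*(\cdot)}{W_{Y|X}(\cdot|i)}$ over $i$, and this maximum is positive). So under your rule the special message is almost never decoded as $1$: the missed-detection probability tends to $1$, not to $0$. Your invocation of Sanov is reversed for the same reason: for an i.i.d.\ $W_{Y|X}(\cdot|\xr)$ string, the probability that the type lands \emph{inside} a shrinking neighborhood of $P_Y^*$ is what decays like $e^{-\BLX\PCAP}$; the probability that it lands outside that neighborhood goes to one, since $W_{Y|X}(\cdot|\xr)$ itself lies outside and the Sanov infimum there is zero. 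The inversion also breaks your consistency check: ordinary codewords (typical for $P_X^*$) produce outputs whose type \emph{is} near $P_Y^*$, so under your rule they would be swallowed by the ``declare~$1$'' region nearly all the time --- a constant-order loss of reliability, not a sub-exponential correction. The correct rule is the complement of yours: decode $\hat{\mes}=1$ precisely when the output type is \emph{not} close to $P_Y^*$. With that one flip, your achievability sketch matches the paper's in substance (the paper generates ordinary codewords randomly i.i.d.\ $P_X^*$ and de-randomizes, whereas you carve a codeword out of an existing capacity-achieving code, which also works after a fixed-composition reduction).

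For the converse you propose a genuinely different route from the paper's. The paper does not give a direct packing argument for fixed-length codes; it proves the stronger Lemma~\ref{lem:con1} for variable-length codes with feedback --- a data-processing inequality for KL divergence combined with a martingale optional-stopping identity, Fano's inequality, and a single-letterization through the concave function $\FX{\cdot}$ --- and then specializes to the no-feedback case. That route simultaneously delivers the feedback converse (Theorem~\ref{thm:mdf}) and treats arbitrary non-constant-composition codes with no reduction. Your sketch (show that the $P_Y^*$-i.i.d.\ mass of the complement of $\DEC{1}$ is not exponentially small, then change measure to the distribution induced by the special codeword) is plausible in outline, but the step you yourself flag as delicate --- showing that reliability at rate $\CX$ forces the $P_Y^*$-mass of the complement of $\DEC{1}$ to stay bounded away from zero, where Fano alone is too weak --- is exactly the substantive work, and filling it in would largely reproduce, in a less flexible setting, the content of Lemma~\ref{lem:con1}.
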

Recall that Karush-Kuhn-Tucker (KKT) conditions for achieving capacity imply the following expression for capacity, \cite[Theorem 4.5.1]{gallager_book}.
\begin{equation*}
 \CX= \max_{i\in {\cal X}} \KLD{W_{Y|X}(\cdot|i)}{P_Y^*(\cdot)}
\end{equation*}
Note that simply switching the arguments of KL  divergence within the maximization for $\CX$,  gives us the expression for $\PCAP$. The capacity $\CX$ represents the best possible data-rate over a channel, whereas $\psed$ $\PCAP$ represents the best possible protection  achievable for a message at capacity.

It is worth mentioning here  the ``very noisy'' channel in \cite{gallager_book}. In this formulation  \cite{preprint},  the KL divergence is symmetric, which implies $\KLD{P_Y^*(\cdot)}{W_{Y|X}(\cdot|i)}\approx \KLD{W_{Y|X}(\cdot|i)}{P_Y^*(\cdot)}$. Hence the $\psed$ and capacity become roughly equal. For a symmetric channel like BSC, all inputs can be used as $\xr$. Since the $P_Y^*$ is the uniform distribution for these channels, $\PCAP=\KLD{P_Y^{*}(\cdot)}{W_{Y|X}(\cdot|i)}$ for any input letter $i$. This also happens to be the sphere-packing exponent $E_{\textrm{sp}}(0)$  of this channel \cite{sgb} at rate $0$.

\begin{optstr}
Codewords of a capacity achieving code are used for the ordinary messages. Codeword for the special message is a repetition sequence of the input letter $\xr$.  For all the output sequences special message is decoded, except for the output sequences with empirical distribution (type) approximately equal to $P_Y^*$. For the output sequences with empirical distribution  approximately  $P_Y^*$, the decoding scheme of the original capacity achieving code is used.
\end{optstr}
Indeed Kudryashov \cite{kud} had already suggested the encoding scheme described above, as a subcode for his  non-block variable delay coding scheme. However discussion in \cite{kud} does not make any claims about the optimality of this encoding scheme.

\begin{intexp}
Having a large missed detection exponent for the special message corresponds to having a large decoding region for the special message. This ensures that when $\mes=1$, i.e. when the special message is transmitted, probability  of $\hat{\mes} \neq 1 $ is exponentially small. In a sense  $\Emd$ indicates how large the decoding region of the special message could be made, while still filling  $\doteq e^{\BLX \CX}$ typical noise balls in the remaining space. The red region in Fig. \ref{fig:mdfa} denotes such a large region. Note that the actual decoding region of the special message is much larger  than this illustration, because it consists of all output types except the ones close to $P_Y^*$, whereas the ordinary decoding regions only contain the output types close to $P_Y^*$.
\begin{figure}[!ht]
\psfrag{exp(n R)}{}
\psfrag{exp(n C)}{}
\psfrag{\mes=1}{\!$\hat\mes=1$}
\centerline{\psfig{figure=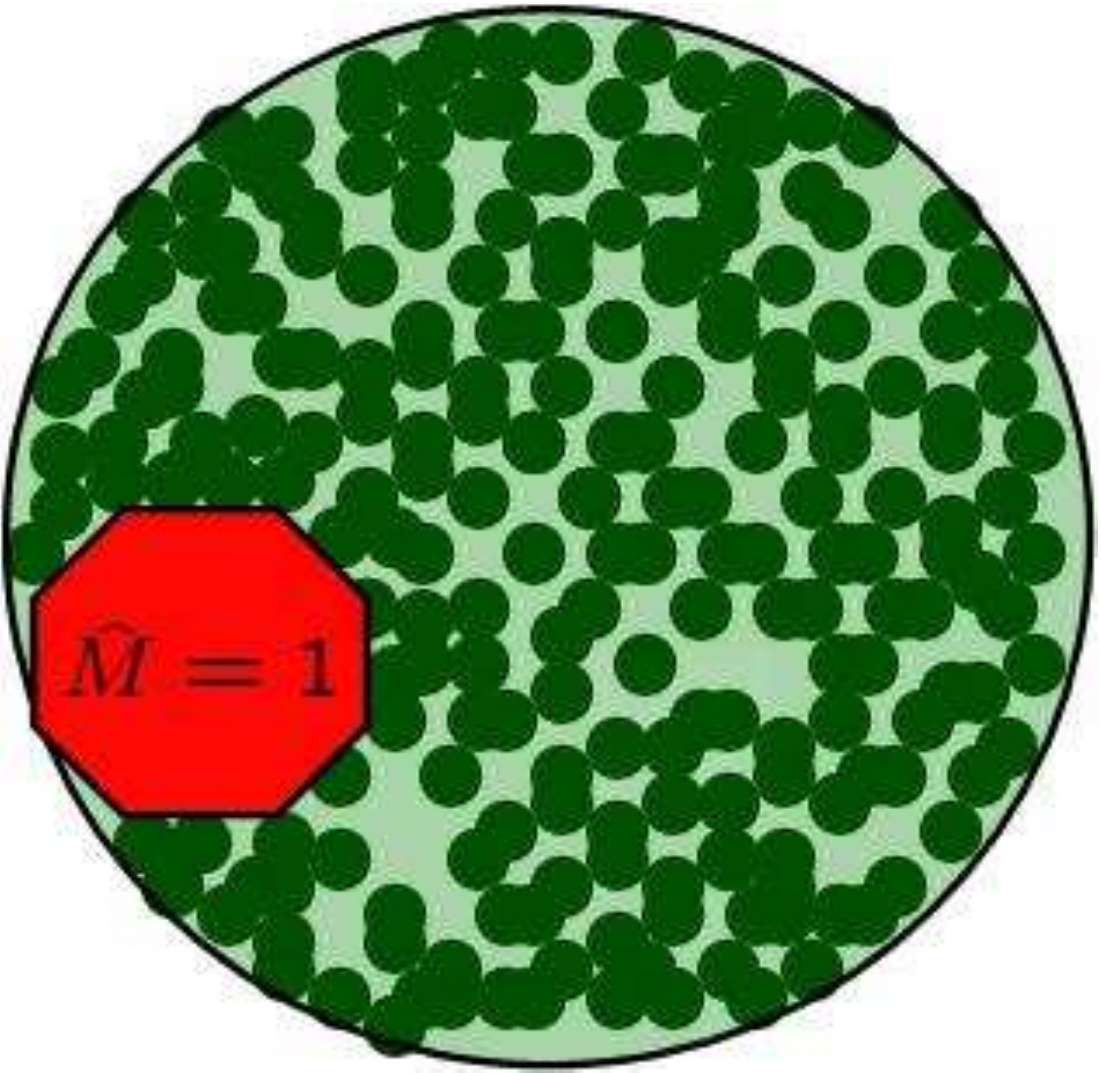,width=6cm}}
\caption{Avoiding missed-detection} 
\label{fig:mdfa} 
\end{figure}
\end{intexp}

Utility of this result is two folds: first, the optimality of such a simple scheme was not obvious before; second, as we will see later protecting a single special message is  a key building block for many other problems when feedback is available.

\subsection{Many special messages}
Now consider the case when  instead of a single special message, exponentially many of the  total $\doteq e^{\BLX \CX}$ messages are special. Let
${\cal M}_s^{(\BLX)}\subseteq{\cal M}^{(\BLX)}$ denote this set of
special messages,
\begin{equation*}
{\cal M}_s^{(\BLX)}=\{1,2,\cdots, \lceil e^{\BLX r}\rceil\}.
\end{equation*}
The best missed detection exponent, achievable simultaneously for all of the special messages, is denoted by $\Emdr(r)$.

\begin{definition}
For a capacity-achieving sequence $\SC$, the missed detection exponent achieved on sequence of subsets ${\cal M}_s$ is defined as 
\begin{equation*}
\Emdr_{,\SC, {\cal M}_s} \DEF \liminf_{\BLX \rightarrow \infty} \tfrac{{\displaystyle-\ln\max_{i\in{\cal M}_s^{(\BLX)}}} \SPX{\BLX}{\hat{\mes}\neq i|\mes=i}}{\displaystyle\BLX}.  
\end{equation*}
Then for a given $r<C$, $\Emdr(r)$ is defined as,
$\Emdr(r)\ \DEF\sup_{\SC, {\cal M}_s} \Emdr_{,\SC,{\cal M}_s}$ 
where maximization is over ${\cal M}_s$'s such that $ \displaystyle{\liminf_{\substack {\BLX\rightarrow\infty}}\frac{\ln|{\cal M}_s^{(\BLX)}|}{\BLX} = r}$.
\end{definition}
This message wise \uep~   problem has already been investigated by Csisz\'ar in his paper on joint source-channel coding \cite{csiszar1}. His analysis allows for multiple sets of special messages each with its own rate and an overall rate that can be smaller than the capacity.\footnote{Authors would like to thank Pulkit Grover of UC  Berkeley for pointing out this closely related work, \cite{csiszar1}}

Essentially, $\Emdr(r)$ is the best value for which missed detection probability of every special message is $\doteq \exp(-\BLX \Emdr(r))$ or smaller. Note that if the only messages in the code are these  $\lceil e^{\BLX r}\rceil $ special messages (instead of $|{\cal M}^{(\BLX)}|\doteq e^{\BLX \CX}$ total messages), their best missed detection exponent equals the classical error exponent $\Ee(r)$ discussed earlier.

\begin{theorem} \label{thm:mdmany}
\[\Emdr(r)=\Ee(r) \quad \forall\ r\in[0,\CX).\]
\end{theorem}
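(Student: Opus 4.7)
The equality $\Emdr(r)=\Ee(r)$ decomposes into the two opposite inequalities, of which the converse $\Emdr(r)\le\Ee(r)$ is routine: given a capacity-achieving sequence $\SC$ in which each of the $\doteq e^{\BLX r}$ messages in $\mathcal M_s^{(\BLX)}$ has missed-detection probability at most $e^{-\BLX E}$, restrict the codebook to the special codewords and project the original decoder onto $\mathcal M_s^{(\BLX)}$ by relabelling every $y^{\BLX}$ whose decoded message is non-special as some fixed default special message. The resulting subcode has rate $r$ and maximal---hence average---error probability at most $2e^{-\BLX E}$, so the definition of $\Ee(r)$ forces $E\le\Ee(r)$; taking the supremum over $\SC$ yields $\Emdr(r)\le\Ee(r)$.

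For the achievability direction, fix $\epsilon>0$. My plan is to construct a capacity-achieving code in which every one of the $\doteq e^{\BLX r}$ special messages has missed-detection probability at most $e^{-\BLX(\Ee(r)-\epsilon)}$ by superposing two sub-codebooks: a rate-$r$ code $\mathcal C_s^{(\BLX)}$ of composition $P_X^{(r)}$ (the input distribution attaining $\Ee(r)$) which on its own achieves exponent $\Ee(r)-\epsilon/2$, and a rate-$\CX$ capacity-achieving code $\mathcal C_o^{(\BLX)}$ of composition $P_X^*$. Assign $\mathcal C_s^{(\BLX)}$ to the special messages and $\mathcal C_o^{(\BLX)}$ to the ordinary messages. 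The decoder is asymmetric: the natural joint-typicality (or maximum-mutual-information) decoding regions of the special codewords are kept intact, while the decoding region of each ordinary codeword is restricted to output sequences whose empirical joint type with the codeword lies in a small neighborhood of the capacity-achieving joint distribution $P_X^*\,W$.

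For a transmitted special codeword the missed-detection event splits into an \emph{intruder} event (the output lands in some ordinary decoding region) and an \emph{internal} event (the special decoder errs within the special region). The internal event has exponent at least $\Ee(r)-\epsilon/2$ by construction of $\mathcal C_s^{(\BLX)}$. When $P_X^{(r)}\neq P_X^*$ the intruder event is controlled by method-of-types bounds through the divergence $D(P_Y^*\|P_Y^{(r)})$, which one shows dominates $\Ee(r)$ via the sphere-packing form of the exponent; when $P_X^{(r)}=P_X^*$ (the regime $r>R_{\rm cr}$) the output compositions coincide and the type-based separation fails, and instead one exploits the fact that the $\doteq e^{\BLX\CX}$ ordinary decoding regions cover only a $\doteq e^{-\BLX(\CX-r)}$ fraction of each joint-type class associated with the capacity-achieving conditional, leaving room for the special decoder to still achieve exponent $\Ee(r)$. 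This high-rate regime is the main obstacle; the remaining checks---that the ordinary messages retain vanishing error probability and that the total rate is $\CX$---follow because $\mathcal C_o^{(\BLX)}$ is capacity-achieving and because the asymmetric decoder only assigns outputs to ordinary codewords when they would have been correctly decoded by the ordinary decoder anyway.
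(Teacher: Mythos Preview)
Your converse is fine and matches the paper's one-line argument.

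The achievability, however, has a genuine gap in the analysis of the ``intruder'' event, and the gap is exactly where you flag ``the main obstacle.'' When $P_X^{(r)}=P_X^*$ the output type from a special codeword is $P_Y^*$, the \emph{same} as from an ordinary codeword, so restricting ordinary decoding regions to outputs of joint type near $P_X^*W$ gives no separation at all: the union of the $\doteq e^{\BLX\CX}$ ordinary typical balls has total size $\doteq e^{\BLX\CX}\cdot e^{\BLX H(W|P_X^*)}=e^{\BLX H(P_Y^*)}$, i.e.\ it covers essentially the entire $P_Y^*$ type class. Your ``$e^{-\BLX(\CX-r)}$ fraction'' claim is therefore false, and a special codeword's output lands in the ordinary region with probability bounded away from zero, not with exponent $\Ee(r)$. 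Even in the low-rate case $P_X^{(r)}\neq P_X^*$, the relevant probability is $\min_{V:\,P_X^{(r)}V=P_Y^*}\CKLD{V}{W}{P_X^{(r)}}$, not $\KLD{P_Y^*}{P_Y^{(r)}}$, and you have not shown this dominates $\Ee(r)$.

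The paper's fix is to reverse the roles of the two codebooks in the first-stage test. Around each special codeword $i$ it places an explicit ball
\[
{\cal B}_i=\bigl\{y^{\BLX}:\CKLD{\ctype{y^{\BLX}}{i}}{W}{P_X}\le E_{\rm sp}(r+\epsilon;P_X)\bigr\},
\]
and the first stage declares ``special'' iff $y^{\BLX}\in\bigcup_i{\cal B}_i$. Then (i) for a transmitted special message the probability of leaving ${\cal B}_i$ has exponent $E_{\rm sp}(r+\epsilon)\ge \Ee(r)$, so the missed-detection exponent is bottlenecked by the second-stage ML error $\Ee(r)$; and (ii) for a transmitted ordinary message (random ensemble $P_X^*$) the probability of entering any single ${\cal B}_i$ has exponent
\[
E_{\rm o}=\min_{V\in{\cal W}(r+\epsilon,P_X)}\CKLD{V}{P_Y^*}{P_X}\ge\min_{V\in{\cal W}(r+\epsilon,P_X)} I(P_X,V)\ge r+\epsilon,
\]
so after a union bound over the $e^{\BLX r}$ special balls the first-stage error still vanishes. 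The inequality $E_{\rm o}\ge r+\epsilon$ --- coming from the very definition of the sphere-packing threshold --- is the step that makes the construction work uniformly in $r$ and is absent from your argument.
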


Thus we can communicate reliably at capacity and still protect the special messages as if we are only communicating the special messages. Note that the classical error exponent $\Ee(r)$ is yet unknown for the rates below critical rate (except zero rate). Nonetheless, this theorem says that whatever $\Ee(r)$ can be achieved for  $\lceil e^{\BLX r}\rceil $  messages when they are by themselves in the codebook, can still be achieved when there are $\doteq e^{\BLX \CX}$ additional ordinary messages requiring reliable communication.

  \begin{optstr}
     Start with an optimal code-book for $\lceil e^{\BLX r}\rceil $ messages which achieves the  error exponent $E(r)$. These codewords are used for the special messages. Now the ordinary codewords are added using random coding. The ordinary codewords which land close to a special codeword may be discarded  without  essentially any effect on the rate of communication. 

Decoder uses a two-stage decoding rule, in  first stage of which it  decides whether or not a special message was sent. If the received sequence is close to one or more of the special codewords, receiver decides that a special message was sent else it decides an ordinary message was sent. In the second stage, receiver employs an ML decoding either among the ordinary messages or the among the special messages depending on its decision in the first stage.

The overall missed detection exponent $\Emdr(r)$ is bottle-necked by the second stage errors. It is because the first stage error exponent is essentially the sphere-packing exponent $\Esp(r)$, which is never smaller than the second stage error exponent $E(r)$. 
  \end{optstr}

  \begin{intexp}
      This means that we can start with a code of  $ \lceil e^{\BLX r}\rceil$ messages, where the decoding regions are large enough to provide a missed detection exponent of $E(r)$. Consider the balls around each codeword with sphere-packing radius  (see Fig. \ref{fig:footballs}(a)). For each message, the probability  of going outside its ball decays exponentially with the sphere-packing exponent. Although, these $ \lceil e^{\BLX r}\rceil$ balls fill up most of the output space, there are still some cavities left between them. These small cavities can still accommodate $\doteq e^{\BLX \CX}$ typical noise balls for the ordinary messages (see Fig. \ref{fig:footballs}(b)), which are much smaller than the original $\lceil e^{\BLX r}\rceil$ balls. This is analogous to filling sand particles in a box full of large boulders. This theorem is like saying that the number of sand particles remains unaffected (in terms of the exponent) in spite of the large boulders.
  \end{intexp}

\begin{figure}[!h]
\begin{minipage}{8cm}
\psfrag{b1}{}\psfrag{exp(n R)}{} \psfrag{exp(n C)}{}
\centerline{\psfig{figure=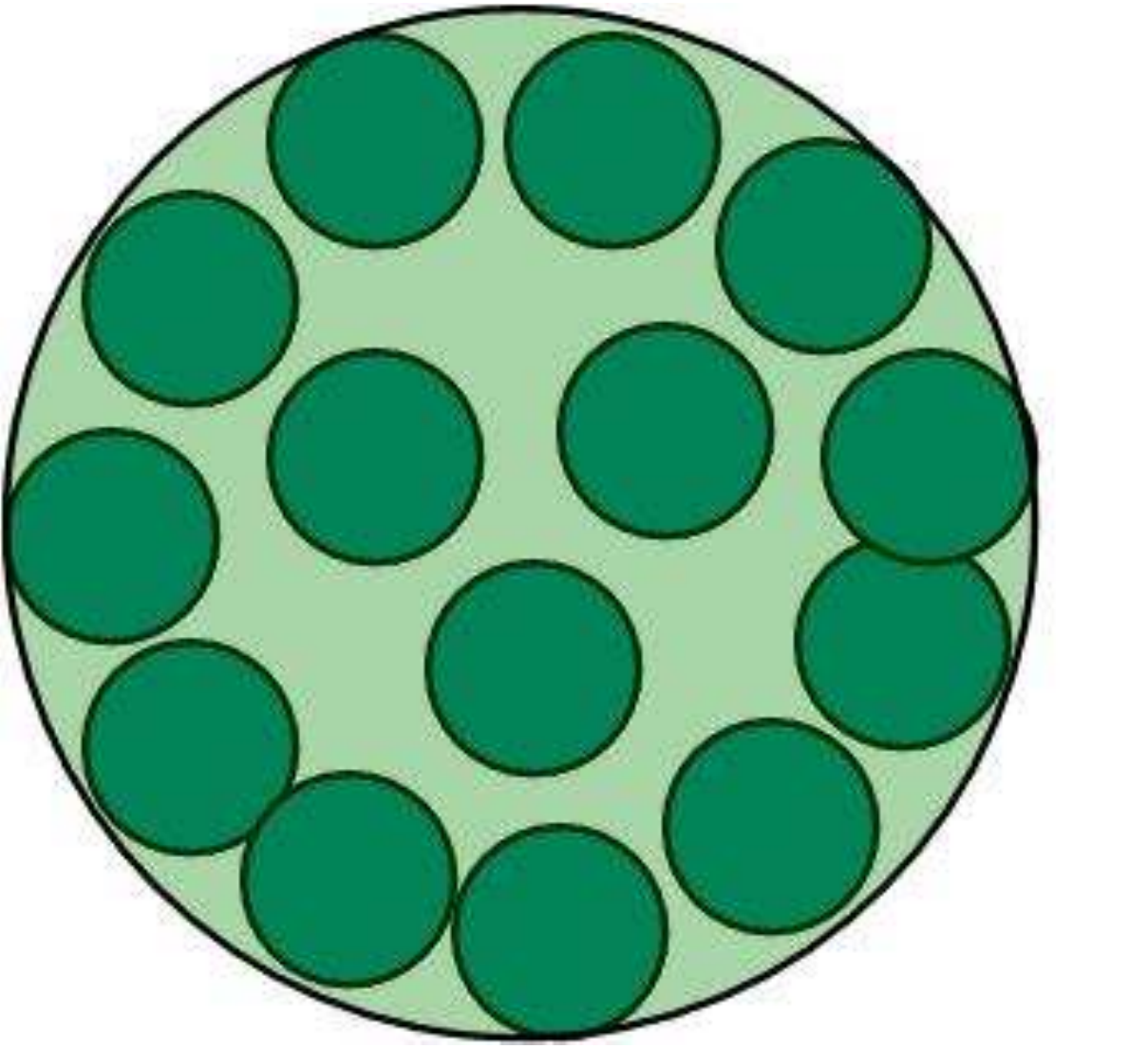,width=5.4cm}}
\centerline{(a) Exponent optimal code}
\end{minipage}
\begin{minipage}{8cm}
\psfrag{exp(n R)}{} \psfrag{exp(n C)}{} \vspace{.1cm}
\centerline{\psfig{figure=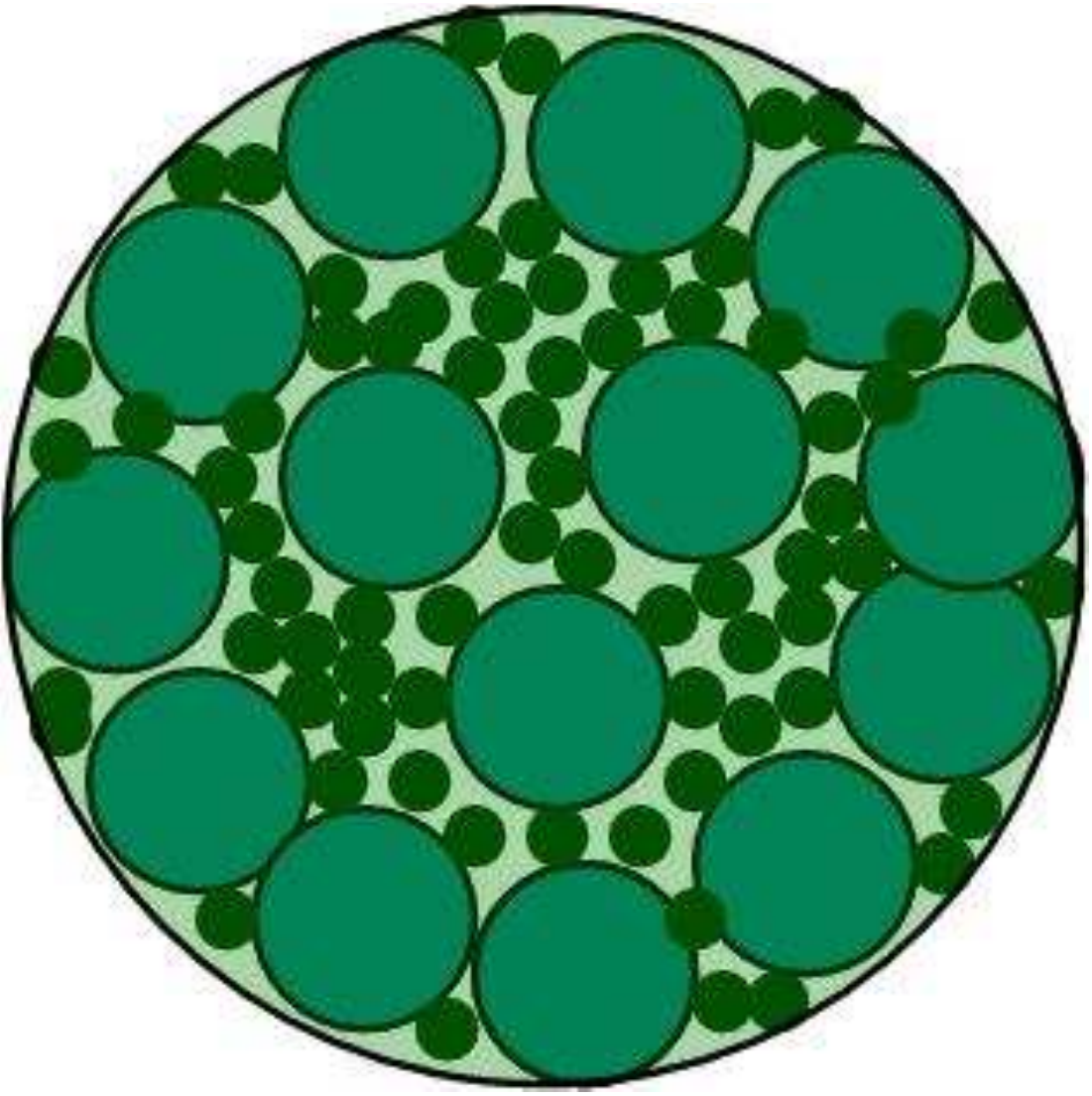,width=5cm}}
\centerline{(b) Achieving capacity}
\end{minipage}
\caption{``There is always room for capacity''}
\label{fig:footballs}
\end{figure}

\subsection{Allowing erasures}

In some situations, a decoder may be allowed declare an erasure when it is not sure about the transmitted message. These erasure events are not counted as errors and are usually followed by a retransmission using a decision feedback protocol like Hybrid-ARQ. This subsection extends the earlier result for $\Emd(r)$ to the cases  when such erasures are allowed.

In decoding with erasures, in addition to the message set ${\cal M}$, the decoder can map the received sequence $Y^{\BLX}$ to a virtual message called ``erasure''. Let $\Pera$ denote the average erasure probability of a code.
\begin{equation*}
  \Pera=\PX{\hat\mes=\era}
\end{equation*}
Previously when there was no erasures, errors were not detected. For errors and erasures decoding, erasures are detected errors, the  rest of the errors are undetected errors and  $\Pe$ denotes the undetected error probability. Thus average and conditional  (undetected) error probabilities are given by
\begin{equation*}
\Pe=\PX{\hat{\mes}\neq \mes, \hat{\mes}\neq \era} \quad\textrm{and}\quad \Pe(i)=  \PCX{\hat{\mes} \neq \mes,\hat{\mes}\neq\era}{\mes=i}
\end{equation*}
An infinite sequence $\SCe$ of block codes with errors and erasures decoding is  \emph{reliable}, if its average error probability and average erasure probability, both vanish with $\BLX$.
\begin{equation*}
 \lim_{\BLX \rightarrow \infty} \Pe^{(\BLX)}=0
 \qquad\textrm{and}\qquad
    \lim_{\BLX \rightarrow \infty} \Pera^{(\BLX)}=0
\end{equation*}
If the erasure probability is small, then average number of retransmissions needed is also small. Hence this condition of vanishingly small $\Pera^{(\BLX)}$ ensures that the effective data-rate of a decision feedback protocol remains unchanged in spite of retransmissions.  We again restrict ourselves to reliable sequences  whose rate equal $\CX$.

We could  redefine all previous exponents for decision-feedback (df) scenarios, i.e.  for reliable codes with erasure decoding. But resulting exponents do not change with  the provision of erasures with vanishing probability for single bit or single message problems, i.e.  decision  feedback protocols such as Hybrid-ARQ does not improve  $\Eb$ or $\Emd$. Thus we only discuss the decision feedback version of $\Emdr(r)$.

\begin{definition}
For a capacity-achieving sequence with erasures, $\SC$, the missed detection exponent achieved on sequence of subsets ${\cal M}_s$ is defined as 
\begin{equation*}
\Emdre_{,\SCe}(r) \DEF \liminf_{\BLX \rightarrow \infty} \tfrac{ {\displaystyle-\ln\max_{i\in{\cal M}_s^{(\BLX)}}} \SPX{\BLX}{\hat{\mes}\neq i,\hat\mes\neq \era|\mes=i}}{\displaystyle \BLX}.
\end{equation*}
Then for a given $r<C$,  $\Emdre_{,\SCe}(r)$ is defined as,
$\Emdre_{,\SCe}(r) \DEF\sup_{\SC, {\cal M}_s} \Emdr_{,\SC,{\cal M}_s}$ 
where maximization is over ${\cal M}_s$'s  such that $ \displaystyle{\liminf_{\substack {\BLX\rightarrow\infty}}\frac{\ln|{\cal M}_s^{(\BLX)}|}{\BLX} = r}$.
\end{definition}
Next theorem shows   allowing erasures increases the missed-detection exponent for $r$ below critical rate, on symmetric channels.

\begin{theorem}
For symmetric channels 
\[\Emdre(r)\ge E_{\textrm{sp}}(r)\quad \forall\ r\in[0,\CX).\]\label{thm:mde}
\end{theorem}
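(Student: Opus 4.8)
\textbf{Proof proposal.}
Since only the lower bound is claimed, it suffices to exhibit one capacity-achieving code sequence \emph{with erasures} for which every special message (the rate of this set being $r$) has missed-detection exponent $\Esp(r)$. The plan is to reuse the optimal strategy behind Theorem \ref{thm:mdmany}, with two modifications: the special codewords are decoded using \emph{sphere-packing--radius} regions rather than the smaller ``typical-noise'' balls, and the decoder is allowed to declare \era~whenever the received sequence lands in an ambiguous zone. Concretely, I would first fix a constant-composition special codebook of composition $P_X^*$ and rate $r$, with decoding regions $\mathcal{D}_i=\{y^{\BLX}: D(V\,\|\,W\,|\,P_X^*)\le\theta_{\BLX}\}$, where $V$ is the conditional type of $y^{\BLX}$ given $x^{\BLX}(i)$ and $\theta_{\BLX}\to\Esp(r)$ is chosen so that the ``packing rate'' of such a region is $r$; by the packing lemma of \cite{CK} (this is where symmetry of $W$ is used) the codewords can be placed so that the regions $\mathcal{D}_i$ have pairwise overlaps of vanishing probability. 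Then, exactly as in Theorem \ref{thm:mdmany}, I would add $\doteq e^{\BLX\CX}$ ordinary codewords of composition $P_X^*$ by random coding, discarding those that fall inside $\bigcup_i\mathcal{D}_i$; because $r<\CX$ the union of the special regions does not swamp the $P_Y^*$-typical set, so this costs negligible rate.

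The decoder operates in two stages. If $y^{\BLX}$ lies in exactly one special region $\mathcal{D}_i$, decode the special message $i$; if it lies in two or more special regions, declare \era; if it lies in no special region, attempt an ML decoding among the ordinary codewords (restricted to outputs of type $\approx P_Y^*$), declaring \era~if this is not unambiguous. For special message $i$, a missed detection, i.e.\ an \emph{undetected} error, forces $y^{\BLX}\notin\mathcal{D}_i$ together with the decoder committing to some $j\neq i$; in every sub-case its probability is at most $\PCX{Y^{\BLX}\notin\mathcal{D}_i}{\mes=i}$, which by the very definition of the sphere-packing exponent is $\doteq e^{-\BLX\Esp(r)}$. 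The erasure probability conditioned on a special message is bounded by $\PCX{Y^{\BLX}\notin\mathcal{D}_i}{\mes=i}$ plus the probability of hitting some other special region, both vanishing; conditioned on an ordinary message it is bounded by the probability of hitting $\bigcup_i\mathcal{D}_i$ plus the usual ML-ambiguity term, again vanishing; and the ordinary messages are decoded reliably just as in the strategy for Theorem \ref{thm:mdmany}. Hence the sequence is capacity-achieving with erasures and attains $\Emdre(r)\ge\Esp(r)$.

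The delicate step, and the one I expect to be the main obstacle, is the packing argument: showing that on a symmetric channel one can place $\lceil e^{\BLX r}\rceil$ codewords of composition $P_X^*$ whose sphere-packing regions are essentially disjoint, so that ``two or more regions'' and ``an ordinary codeword inside a special region'' are exponentially rare, while their union still leaves room for $\doteq e^{\BLX\CX}$ ordinary codewords. This amounts to the (near-)tightness of the sphere-packing bound for symmetric channels and should follow from the standard packing lemma combined with the symmetry of $W$: symmetry makes the type-based regions $\mathcal{D}_i$ geometrically homogeneous, so that overlap probabilities depend only on pairwise ``distances'' and a random or expurgated placement suffices. I would also emphasize that the gain over $\Emdr(r)=E(r)$ from Theorem \ref{thm:mdmany} is obtained precisely by converting the bottleneck second-stage confusions into \emph{erasures} rather than undetected errors, which is why $\Esp(r)$ — and not merely $E(r)$ — becomes reachable below the critical rate.
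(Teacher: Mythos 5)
Your high-level architecture (two-stage decoding, special codewords plus ordinary random codewords, erasure when the decoder is unsure) matches the paper, but the key step you propose is different and, I believe, does not work. The paper's proof starts from Forney's errors-and-erasures decoder among the $\lceil e^{\BLX r}\rceil$ special codewords (\cite{forney2}), which for \emph{symmetric} channels achieves undetected-error exponent $\Esp(r)+\CX-r$ with \emph{vanishing} erasure probability; the first-stage ``special vs.\ ordinary'' decision (with exponent $\Esp(r+\slack)$) is then the bottleneck, giving $\Esp(r)$. Your proposal instead tries to make the sphere-packing balls $\mathcal{D}_i$ themselves the decoding regions and declares an erasure upon overlap.

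The gap is in the packing claim: you cannot select $\lceil e^{\BLX r}\rceil$ codewords of a fixed composition whose sphere-packing balls at radius $\theta_{\BLX}\to\Esp(r)$ have pairwise overlaps of vanishing conditional probability. By construction (your ``packing rate'' condition) those balls have output-probability $\doteq e^{-\BLX r}$ each, so $e^{\BLX r}$ of them exactly fill the output space in the exponent; equivalently, the expected number of \emph{other} special balls containing a typical output $Y^{\BLX}$ given $\mes=i$ is $\Theta(1)$. (Concretely, on a BSC with $e^{\BLX r}$ random codewords and radius $\rho$ with $H_2(\rho)=1-r/\ln 2$, the expected multiplicity is $e^{\BLX r}\cdot 2^{-\BLX r/\ln 2}=1$.) The Csisz\'ar--K\"orner packing lemma gives sparsity only for $V$-shells with $I(P_X,V)$ exceeding the rate, whereas the sphere-packing ball $\{V: \CKLD{V}{W}{P_X}\le\Esp(r)\}$ necessarily also contains $V$'s with $I(P_X,V)\le r$ (indeed the minimizer defining $\Esp(r)$ sits on that boundary), and those shells are densely populated by other codewords. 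So with your bounded-distance rule the erasure probability conditioned on a special message does \emph{not} vanish, and your $\SC$ is not a reliable sequence with erasures. Sphere-packing tightness is not available to give you disjointness at rate $r$; this is precisely why Forney's likelihood-ratio threshold decoder (rather than a bounded-distance decoder) is used, and it is there --- not in a packing argument --- that the channel's symmetry is invoked.
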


Coding strategy is similar to the no-erasure case. We first start  with an  erasure code for $\lceil e^{\BLX r}\rceil$ messages like the one in \cite{forney2}. Then add randomly generated ordinary codewords to it. Again a two-stage decoding is performed where the first stage decides between the set of ordinary codewords and the set of special codewords using a threshold distance. If this first stage chooses special codewords, the second stage applies the decoding rule in \cite{forney2} amongst special codewords. Otherwise, the second stage uses  the ML decoding among ordinary codewords.

The overall missed detection exponent $\Emdre(r)$ is bottle-necked by the first stage errors. It is because the first-stage error exponent $E_{\textrm{sp}}(r)$ is smaller than the second stage error exponent $E_{\textrm{sp}}(r)+\CX-r$. This is in contrast with the case without erasures.

\section{\uep~ at Capacity: Variable Length Block Codes with Feedback}
\label{sec:feedback} 
In the last section, we analyzed bit wise and message wise \uep~  problems for fixed length block codes (without feedback) operating at capacity. In this section, we will revisit the same problems for variable length block codes with perfect feedback, operating at capacity. Before going into the discussion of the problems, let us recall variable length block codes with feedback briefly.

A variable length block code with feedback, is composed of a coding algorithm and a decoding rule. Decoding rule determines the decoding time and the message that is decoded then.  Possible observations of the receiver can be seen as  leaves of $|{\cal Y}|$-ary tree,  as in \cite{burna2}. In this tree, all nodes at length $1$ from the root denote all $|{\cal Y}|$ possible outputs at time $t=1$. All non-leaf nodes among these split into further $|{\cal Y}|$ branches in the next time $t=2$ and the branching of the non-leaf nodes continue like this ever after. Each node of depth  $t$ in this tree corresponds to a  particular sequence, $y^t$, i.e. a history of outputs until time $t$.  The parent of node $y^t$ is its prefix $y^{t-1}$. Leaves of this tree form a prefix free source code, because decision to stop for decoding has to be a casual event. In other words the event $\{\blx=t \}$ should be measurable in the $\sigma$-field generated by $Y^{t}$. In addition we have $\PX{\tau < \infty}=1$ thus decoding time $\blx$ is  Markov stopping time with respect to receivers observation. The coding algorithm on the other hand assigns an input letter, $X_{t+1}(y^t;i)$, to each message, $i \in {\cal M}$, at each non-leaf node, $y^t$,  of this tree. The encoder stops transmission of a message when a leaf is  reached i.e. when  the decoding is complete.

 Codes we consider are block codes in the sense that transmission of each message (packet) starts only after the transmission of the previous one ends. The error probability and rate of the code are simply  given by
\begin{equation*}
  \Pe=\PX{\hat{\mes} \neq \mes}  \qquad\textrm{and},\qquad \RX=\tfrac{\ln \MX}{\EX{\blx}}
\end{equation*}
A more thorough discussion of variable length block codes with feedback can be found in \cite{burna} and \cite{burna2}.

Earlier discussion in Section \ref{sec:modele} about different kinds of errors is still valid as is but we need to slightly modify our discussion about the reliable sequences. A reliable sequence of variable length block codes with feedback, $\SCf$, is  any countably infinite collection of codes indexed by integers, such that
\begin{equation*}
  \lim_{\inx  \rightarrow \infty} \Pe^{(\inx)}=0
\end{equation*}
In the rate and exponent definitions for reliable sequences, we replace block-length $\BLX$ by the expected decoding time $\EX{\blx}$. Then a capacity achieving sequence with  feedback is a reliable sequence of variable length block codes with feedback whose rate is $\CX$

It is worth noting the importance of our assumption that all the entries of the transition probability matrix, $W_{Y|X}$ are positive. For any channel with a $W_{Y|X}$ which has one or more zero probability transitions, it is possible to have error free  codes operating at capacity, \cite{burna}. Thus all the exponents discussed below are infinite for DMCs with one or more zero probability transitions.

\subsection{Special  bit}
Let us consider a capacity achieving sequence $\SCf$  whose message sets are of the form ${\cal M}^{(\inx)}={\cal M}_1 \times {\cal M}_2^{(\inx)}$ where ${\cal M}_1=\{0, 1\}$. Then the error exponent of the $\mes_1$, \emph{i.e.}, the initial bit, is defined as follows. 

\begin{definition} 
For a capacity achieving sequence with feedback, $\SCf$, with message sets ${\cal M}^{(\inx)}$ of the form ${\cal M}^{(\inx)}={\cal M}_1 \times {\cal M}_2^{(\inx)} $ where ${\cal M}_1=\{0,1\}$, the special bit error exponent is defined as
\begin{equation*}
 \FEb_{,\SCf} \DEF \liminf_{\inx \rightarrow \infty} \tfrac{ -\ln  \SPX{\BLX}{\hat{\mes}_1 \neq \mes_1}}{\EX{\blx^{(\inx)}}}
\end{equation*}
Then $\FEb$ is defined as $\FEb \DEF \sup_{\SCf} \FEb_{,\SCf}$
\end{definition}

\begin{theorem}\label{thm:bitf}
\[\FEb=\PCAP.\]
\end{theorem}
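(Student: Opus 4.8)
The plan is to prove the two inequalities separately. For the achievability direction, $\FEb \ge \PCAP$, I would build a variable-length scheme that combines a capacity-achieving feedback code for the bulk of the information with the single-special-message idea from Theorem \ref{thm:md}. Here is the construction: reserve the first phase of transmission to (reliably, at capacity, using feedback) convey the value of the special bit $\mes_1$ together with $\mes_2$. Once the receiver has (tentatively) decoded, the transmitter — who knows via feedback whether the receiver's current estimate of $\mes_1$ is correct — enters a confirmation phase exactly as in the single-special-message scheme: if the estimate of $\mes_1$ is correct, send a fixed symbol; if it is wrong, send the repetition sequence of the red-alert letter $\xr$. The receiver runs a hypothesis test on the empirical output type to decide whether to accept or to re-run the bulk decoder. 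Because an error in $\mes_1$ forces the encoder into the $\xr$-repetition branch, the probability that this branch is not detected decays with exponent $\KLD{P_Y^*(\cdot)}{W_{Y|X}(\cdot|\xr)} = \PCAP$ per channel use of the confirmation phase; since the confirmation phase is an asymptotically negligible fraction of $\EX{\blx}$ in terms of rate but carries the full exponent, and since the bulk phase can be made to run at capacity, one gets $\SPX{\BLX}{\hat\mes_1 \neq \mes_1} \doteq \exp(-\EX{\blx}\,\PCAP)$ while keeping $\RX = \CX$ and $\Pe \to 0$. The details here are essentially a repackaging of the Kudryashov-style subcode within a Yamamoto–Itoh / Burnashev-type two-phase variable-length scheme, and I expect them to go through cleanly.

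For the converse, $\FEb \le \PCAP$, I would argue as follows. Fix any capacity-achieving feedback sequence $\SCf$ with ${\cal M}_1 = \{0,1\}$. The key point is that the special bit induces a binary partition of the message set, and deciding $\mes_1$ is a binary hypothesis-testing-like problem embedded in the code. Condition on $\mes_1 = 0$ versus $\mes_1 = 1$; the decoder's final estimate $\hat\mes_1$ must, on a set of near-full probability under $\mes_1=0$, be correct, and likewise under $\mes_1 = 1$. One then wants to bound the error exponent of distinguishing these two "super-messages" by a channel quantity. The natural tool is a change-of-measure / data-processing argument on the output tree: compare the true output distribution given $\mes_1 = 1$ against the output distribution one would see if the encoder, after the decoder has committed to $\hat\mes_1$, were instead "stalling" — and bound the log-likelihood cost of forcing the decoder to flip using the best per-symbol divergence $\max_i \KLD{P_Y^*(\cdot)}{W_{Y|X}(\cdot|i)} = \PCAP$. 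More concretely, I would adapt Burnashev's converse machinery: track the evolution of the posterior (or of an appropriate log-likelihood-ratio supermartingale) for the event $\{\mes_1 = 1\}$, show that at the stopping time it must be large in magnitude, and bound its expected increment per step; the increment is controlled by a divergence between the actual output law and $P_Y^*$, whose worst case is exactly $\PCAP$. This is the converse technique flagged in the introduction as living in subsection \ref{sec:proofthmsix} and the one the authors call a generalization of \cite{burna2}.

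The main obstacle, as in all Burnashev-type converses, is handling the variable stopping time $\blx$ and the feedback dependence rigorously: the encoder's inputs at time $t+1$ depend on the full history $y^t$, so one cannot treat the channel uses as i.i.d. under a fixed distribution, and the quantity $\EX{\blx}$ appears in the denominator of the exponent, so one needs a genuine martingale/optional-stopping argument (with care about uniform integrability and the fact that the relevant log-likelihood ratio has bounded increments precisely because all transition probabilities are positive — this is where that standing assumption is used). A secondary subtlety is that, unlike the pure single-special-message problem, here the code must \emph{also} be capacity-achieving, so the converse must show that the $\PCAP$ bound on the $\mes_1$-exponent cannot be beaten even while $\RX = \CX$; I would handle this by noting that the capacity constraint only restricts the code further, so the single-special-message-type bound (which does not even use $\RX = \CX$) still applies. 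I would expect the bulk of the write-up to be the careful supermartingale construction and its stopping-time analysis, with the identification of $\PCAP$ as the extremal per-step divergence being the clean conceptual core.
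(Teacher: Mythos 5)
Both directions of your proposal run into trouble, and the paper takes a different route in each.

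On achievability, your scheme separates a capacity-achieving bulk phase (carrying $(\mes_1,\mes_2)$) from a short confirmation phase that sends a fixed symbol $x_0$ when $\tilde{\mes}_1$ is correct and an $\xr$-repetition when it is wrong. This cannot simultaneously deliver exponent $\PCAP$ and rate $\CX$: if the confirmation phase occupies a fraction $\gamma$ of $\EX{\blx}$, the bit-error exponent it contributes is at most $\gamma\PCAP$ while the rate is at most $(1-\gamma)\CX$, so letting $\gamma\to 0$ to recover full rate kills the exponent. There is also a distributional mismatch: if the accept branch transmits the fixed symbol $x_0$, the missed-detection test is between $W_{Y|X}(\cdot|x_0)$ and $W_{Y|X}(\cdot|\xr)$, not between $P_Y^*$ and $W_{Y|X}(\cdot|\xr)$, so $\PCAP$ is not the per-symbol exponent that test yields. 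The paper avoids both problems by making the confirmation coincide with the bulk: $\mes_1$ alone is sent with a negligible length-$\sqrt{\inx}$ repetition code, and the entire length-$\inx$ second phase is the $\Emd=\PCAP$ codebook of Theorem~\ref{thm:md}, in which the special message is reserved for signalling ``first-phase error.'' The red-alert codeword and the capacity-achieving code for $\mes_2$ coexist in the long phase, so the full $\PCAP$ accrues over essentially all of $\EX{\blx}$ with no rate loss, and the vanishing erasure probability is absorbed by retransmission.

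On the converse, a direct Burnashev-style supermartingale on the posterior of $\{\mes_1=1\}$ does not obviously yield $\PCAP$: the per-step increment of $\ln\tfrac{\PCX{Y_t}{\mes_1=0,Y^{t-1}}}{\PCX{Y_t}{\mes_1=1,Y^{t-1}}}$ compares two \emph{mixtures} over half of ${\cal M}$ each, whereas the function $\FX{\cdot}$ of Lemma~\ref{lem:con1}, which is what produces $\PCAP$ in the paper's analysis, is built around comparing the marginal output law against a single channel input $W_{Y|X}(\cdot|\bx_t(i))$. The paper instead uses a short reduction: given any special-bit code $\SC$, it builds a single-special-message code $\SC'$ with the same rate and expected delay by having the receiver draw a uniformly random reference value of $\mes_1$ at time $0$, announce it via feedback, and encode the special message $\ames=0$ by flipping that bit. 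The missed-detection probability of $\ames=0$ in $\SC'$ then equals the bit-error probability in $\SC$, giving $\FEb\le\FEmd$, and the converse part of Theorem~\ref{thm:mdf} finishes the argument. Your instinct to lean on the Burnashev machinery is the right one for $\FEmd$ itself, but for $\FEb$ the reduction is both shorter and actually works.
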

Recall that without feedback, even a single bit could not achieve any positive error exponent at capacity, Theorem \ref{thm:bit}. But feedback together with variable decoding time connects the message wise \uep~ and the bit wise \uep~ and results in a positive exponent for bit wise \uep. Below described strategy show how schemes for protecting a special message can be used to protect a special bit.

\begin{optstr}
  We use  a length  $(\inx+\sqrt{\inx})$ fixed length block code with errors and erasures decoding as a building block for our code.  Transmitter first transmits $\mes_1$ using a short repetition code of length $\sqrt{\inx}$. If the tentative decision about $\mes_1$, $\tilde{\mes}_1$, is correct after this repetition code, transmitter  sends   $\mes_2$  with a length $\inx$  capacity achieving code. If $\tilde{\mes}_1$ is incorrect after the repetition code, transmitter  sends the symbol $\xr$ for $\inx$ time units where $\xr$ is the input letter $i$ maximizing the $\KLD{P_{Y}^{*}(\cdot)}{W_{Y|X}(\cdot|i)}$. If the output sequence in the second phase, $Y_{\sqrt{\inx}+1}^{\sqrt{\inx}+\inx}$, is not a typical sequence of  $P_{Y}^{*}$, an erasure is declared for the block. And the same message is retransmitted by repeating the same strategy afresh. Else receiver  uses an ML decoder to chose $\hat{\mes}_2$ and $\hat{\mes}=(\tilde{\mes}_1, \hat{\mes}_2)$.

 The erasure probability is vanishingly small, as a result the undetected error probability of $\mes_i$ in fixed length erasure code is approximately equal to the  error probability of $\mes_i$ in the variable length block code. Furthermore $\EX{\blx}$ is roughly  $(\inx+\sqrt{\inx})$ despite the retransmissions. A decoding error for $\mes_1$ happens only when $\tilde{\mes}_1 \neq \mes_1$ and the empirical distribution of the output sequence in the second phase is close to $P_{Y}^*$. Note that latter event happens with probability  $\doteq e^{-\PCAP \EX{\blx}}$.
\end{optstr}

\subsection{Many special bits}
We now analyze the situation where instead of a single special bit, there are approximately $\EX{\blx}r/\ln 2$ special bits out of the total $\EX{\blx} \CX / \ln 2$ (approx.)  bits. Hence we consider the capacity achieving sequences with feedback having message sets  of the form ${\cal M}^{(\inx)}={\cal M}_1^{(\inx)} \times {\cal M}_2^{(\inx)}$. Unlike the previous subsection where size of ${\cal M}_1^{(\inx)}$ was fixed, we now allow its size to vary with the index of the code. We restrict ourselves to the cases where $\displaystyle{ \liminf_{\inx \rightarrow \infty } \tfrac{\ln |{\cal M}_1^{(\inx)}|}{\EX{\tau^{(\inx)}}}=r}$. This limit gives us the rate of the special bits. It is worth noting at this point that even when the rate  $r$ of special bits is zero, the number of special bits might not be bounded, i.e. $\displaystyle{\liminf_{ \inx \rightarrow \infty} |{\cal M}_1^{(\inx)}|}$ might be infinite.  The error exponent  $\FEbr_{,\SCf}$ at a given rate $r$ of special bits is  defined  as follows,

\begin{definition}
For any capacity achieving sequence with feedback $\SCf$ with the message sets ${\cal M}^{(\inx)}$ of the form 
${\cal M}^{(\inx)}={\cal M}_1^{(\inx)} \times {\cal M}_2^{(\inx)} $,  $r_{\SCf}$ and  $\FEbr_{,\SCf}$ are defined as
\begin{equation*}
r_{\SCf}\DEF \liminf_{\inx \rightarrow \infty }  \tfrac{\ln |{\cal M}_1^{(\inx)}|}{\EX{\tau^{(\inx)}}}
\qquad
 \FEbr_{,\SCf} \DEF \liminf_{\inx \rightarrow \infty} \tfrac{ -\ln \SPX{\inx}{\hat{\mes}_1\neq \mes_1}}{\EX{\blx^{(\inx)}}}
\end{equation*}
Then $\FEbr(r)$ is defined as $\FEbr(r) \DEF {\displaystyle\sup_{ \SCf: r_{ {\tiny \SCf}}\geq r } \FEbr_{,\SCf}}$
\end{definition}

Next theorem shows how this exponent decays linearly with rate $r$ of the special bits.
\begin{theorem} \label{thm:l1}
 \begin{equation*}
   \FEbr(r) =\left(1-\tfrac{r}{\CX}\right) \PCAP
   \end{equation*}
 \label{thm:rf}
\end{theorem}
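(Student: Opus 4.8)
The plan is to prove matching achievability and converse bounds for $\FEbr(r) = (1-r/\CX)\PCAP$. For the achievability direction, I would generalize the single-special-bit strategy used for Theorem \ref{thm:bitf}. Split the block of expected length $\EX{\blx}$ into two phases. In the first phase, transmit the special submessage $\mes_1 \in {\cal M}_1^{(\inx)}$ (of rate $r$) using a capacity-achieving code of length roughly $\frac{r}{\CX}\EX{\blx}$; because $r<\CX$ this phase is reliable and produces a tentative estimate $\tilde{\mes}_1$ that is correct with high probability. In the second phase, of length roughly $(1-\frac{r}{\CX})\EX{\blx}$: if $\tilde{\mes}_1$ was correct (as confirmed via feedback), transmit $\mes_2$ with an ordinary capacity-achieving code; if $\tilde{\mes}_1$ was incorrect, transmit the repetition sequence of the red-alert letter $\xr$. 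The receiver checks whether the second-phase output type is close to $P_Y^*$: if not, it declares an erasure and the whole block is retransmitted afresh; if yes, it ML-decodes $\mes_2$ and outputs $\hat{\mes}=(\tilde{\mes}_1,\hat{\mes}_2)$. Exactly as in Theorem \ref{thm:bitf}, erasures have vanishing probability, $\EX{\blx}$ is essentially unchanged by retransmissions, and the overall rate is still $\CX$. A decoding error for $\mes_1$ requires both $\tilde{\mes}_1 \neq \mes_1$ \emph{and} the second-phase output type being close to $P_Y^*$ despite $\xr$ being sent; the latter event has probability $\doteq \exp(-\PCAP \cdot (1-\tfrac{r}{\CX})\EX{\blx})$, which gives the claimed exponent.

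For the converse, I would argue that no capacity-achieving sequence with feedback can beat $(1-r/\CX)\PCAP$ for the special bits. The key idea is a time-sharing / accounting argument: since the overall rate must equal $\CX$ and the special bits already carry rate $r$, only an expected fraction of at most $(1 - r/\CX)$ of the transmission time is "left over" after accounting for the communication load needed to deliver $\mes_1$ reliably. On that remaining time budget, protecting the special bit against the confusable alternatives can do no better than the single-special-message bound $\PCAP$ of Theorem \ref{thm:md}/\ref{thm:mdf}, rescaled by that fraction. Making this rigorous would likely route through a change-of-measure / divergence argument in the spirit of Burnashev-type converses: condition on $\mes_1 = a$ versus $\mes_1 = b$ for two values of the special submessage, relate the error probability $\SPX{\inx}{\hat{\mes}_1 \neq \mes_1}$ to a binary hypothesis test between the two induced output distributions, and bound the relevant divergence by the expected number of channel uses times $\PCAP$ minus the divergence "already spent" on conveying which of the $|{\cal M}_1^{(\inx)}|$ values was sent (which is at least $\approx r \cdot \EX{\blx}$ by a Fano-type / data-processing argument, forcing capacity-rate usage). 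The $(1-r/\CX)$ factor emerges as $(\CX - r)/\CX$ after normalizing.

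The main obstacle I anticipate is the converse, specifically making the "divergence budget" argument airtight with feedback and variable stopping times. One has to be careful that the stopping time $\blx$ is a random variable correlated with the channel behavior, that the decoder's tentative vs. final decisions interleave with the encoder's adaptive choices, and that the lower bound on the divergence spent conveying $\mes_1$ genuinely scales like $r\,\EX{\blx}$ rather than something smaller (this is where operating \emph{at capacity} is essential — it leaves no slack). I would expect to lean on the martingale/stopping-time machinery already developed for Burnashev's exponent (as referenced via \cite{burna2}) and on the converse technique flagged in the introduction for Theorem \ref{thm:mdf} (subsection \ref{subsec:fmdconverse}), adapting it to carry an extra "$\mes_1$-information" term. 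The achievability side, by contrast, should be a fairly routine adaptation of the already-described scheme for Theorem \ref{thm:bitf}, with the only new ingredient being that the first phase now carries a positive-rate submessage rather than a single bit.
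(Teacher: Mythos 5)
Your achievability plan is essentially the paper's: a two‑phase erasure code, with the first phase spending a fraction $r/\CX$ of the block to deliver $\mes_1$, and the second phase encoding $\mes_2$ while reserving a single ``red‑alert'' codeword (the repetition of $\xr$) to signal ``first‑phase error''; the $(1-r/\CX)$ factor falls out because the red‑alert protection acts only on the second phase. The paper packages the second phase as an instance of the Theorem~\ref{thm:md} code (with the special message meaning ``$\tilde{\mes}_1\neq\mes_1$''), which is operationally the same as what you describe, so no issue there.

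The gap is in the converse. You correctly guess that one should lean on Burnashev‑type martingale/stopping‑time machinery and on the single‑special‑message lower bound, but the specific route you sketch---pairwise change‑of‑measure between $\mes_1=a$ and $\mes_1=b$ and a ``divergence budget'' subtraction---is not what makes the argument work and I don't think it cleanly can. With feedback and an adaptive decoder there is no reason for the relevant comparison to be a \emph{pair} of $\mes_1$ values, and a pairwise test will not track the event $\{\hat{\mes}_1\neq\mes_1\}$ tightly enough to extract the $(1-r/\CX)$ factor. The paper's converse instead (i) defines a stopping time $\ts$ as the \emph{first} instant at which the posterior of some $i\in{\cal M}_1$ exceeds $1-\delta$, and shows via the Burnashev capacity bound $\EX{\HXz - \HX{\ts}}\le \CX\EX{\ts}$ that $\EX{\ts}\gtrsim\tfrac{\ln|{\cal M}_1|}{\CX}$; (ii) for each $Y^{\ts}$ it collapses the decoding problem on the residual horizon to a \emph{single‑special‑message} problem by defining an auxiliary message $\ames$ with the grouping $\partit_0=\{(i,j): i\neq\tilde{\mes}_1(Y^{\ts})\}$ playing the role of the special message; and (iii) it applies Lemma~\ref{lem:con1} conditionally on $Y^{\ts}$ and then takes expectations using the concavity and monotonicity of $\FX{\cdot}$, $\FX{\CX}=\PCAP$, and Jensen/log‑sum inequalities. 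Steps (i) and (ii)---the specific stopping rule and the ``first‑phase‑error‑as‑special‑message'' reduction---are the missing ideas in your sketch, and they are precisely what lets you multiply the per‑time‑unit bound by $\EX{\blx-\ts}$ rather than by $\EX{\blx}$. Also note that Lemma~\ref{lem:con1} is not merely the Theorem~\ref{thm:mdf} converse applied as a black box: it is a nonuniform‑prior, finite‑$n$, per‑message lower bound valid for arbitrary initial entropy, which is exactly the flexibility needed to use it under conditioning on $Y^{\ts}$ (where the posterior over $\ames$ is not uniform). Your proposal does not yet contain that tool, so as written the converse would not close.
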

\vspace{-.2cm}
Notice that the exponent $\FEbr(0) = \PCAP$, i.e. it is as high as the exponent in the single bit case, in spite of the fact that here the number of bits can be growing to infinity with $\EX{\tau}$. This linear trade off between rate and reliability reminds us of Burnashev's result \cite{burna}.

\begin{optstr}
 Like the single bit case, we  use a fixed length block code with  erasures as our building block. First transmitter sends $\mes_1$ using a capacity achieving code of length $\tfrac{r}{\CX}\inx$. If the tentative decision $\tilde{\mes}_1$ is correct, transmitter sends $\mes_2$
with a capacity achieving code of length $ (1-\tfrac{r}{\CX})\inx$.  Otherwise transmitter sends the channel input $\xr$ for $(1-\frac{r}{\CX})\inx$ time units. If the output sequence in the second phase is not typical with $P_Y^{*}$ an erasure is declared and same strategy is  repeated afresh. Else receiver uses a ML decoder to decide $\hat{\mes}_2$ and decodes the message $\mes$ as $\hat{\mes}= (\tilde{\mes}_1,\hat{\mes}_2)$.   A decoding error for $\mes_1$ happens only when an error happens in the first phase and the output sequence in the second phase is typical with $P_Y^{*}$ when the reject codeword is sent. But the probability of the later event is $\doteq e^{- (1-\tfrac{r}{\CX}) \PCAP \inx}$. The factor of $(1-\tfrac{r}{\CX})$ arises because the relative duration of the second phase to the over all communication block. Similar to the single bit case, erasure probability remains vanishingly small in this case.  Thus not only the expected decoding time of the variable length block code is roughly equal to the block length of the fixed length block code, but also its error probabilities are roughly equal to the corresponding error probabilities associated with the fixed length block code. 
\end{optstr}

\subsection{Multiple layers of priority}
We can generalize this result to the case when there are multiple levels of priority, where the most important layer contains $\EX{\blx}r_1/\ln 2$ bits, the second-most important layer contains $\EX{\blx}r_2/\ln 2$ bits and so on.  For an $L$-layer situation, message set ${\cal M}^{(\inx)}$ is of the form ${\cal M}^{(\inx)}={\cal M}_1^{(\inx)} \times {\cal M}_2^{(\inx)} \times \cdots \times {\cal M}_L^{(\inx)} $. We  assume without loss of generality that  the order of importance of the $\mes_i$'s is $\mes_1 \succ \mes_2 \succ \cdots \succ \mes_L$. Hence we have $\Pe^{\mes_1} \leq \Pe^{\mes_2}\leq \cdots \leq \Pe^{\mes_L}$.

Then for any $L$-layer capacity achieving sequence with feedback, we define the error exponent of the $s^{\mbox{th}}$ layer as
\begin{equation*}
\FEbr_{,s,\SCf} = \liminf_{\inx \rightarrow \infty} \tfrac{- \ln \SPX{\inx}{\hat{\mes}_s \neq \mes_s}}{\EX{\blx^{(\inx)}} }.
\end{equation*}
The achievable error exponent region of the $L$-layered capacity achieving sequences with feedback is the set of all achievable exponent vectors $(\FEbr_{,1,\SCf}, \FEbr_{,2,\SCf},\ldots,\FEbr_{,L-1,\SCf})$. The following theorem determines that region. \vspace{.1cm}

\begin{theorem} \label{thm:many}
Achievable error exponent region of the $L$-layered capacity achieving sequences with feedback, for rate vector $(r_1,r_2,\ldots,r_{L-1})$ is the set of  vectors $(E_1,E_2,\ldots,E_{L-1})$ satisfying,
\begin{equation*}
  E_i \leq   \left(1-\frac{\sum_{j=1}^i r_j}{\CX} \right) \PCAP  \qquad  \forall i \in \{1,2,\ldots, (L-1)\}.
\end{equation*}
\end{theorem}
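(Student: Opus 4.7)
I would separate the problem into converse and achievability. Both reduce cleanly to the two-layer case already handled by Theorem~\ref{thm:rf} (Theorem~\ref{thm:l1}); the new content is a merging argument in the converse and an onion-peeling extension of the two-phase coding scheme in the achievability direction.

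\textbf{Converse.} Fix $i \in \{1,\ldots,L-1\}$ and merge the top $i$ layers into a single combined special-bit group $(\mes_1,\ldots,\mes_i)$ of rate $\sum_{j=1}^{i} r_j$. The rest of the bits form the ordinary layer of rate $\CX - \sum_{j=1}^{i} r_j$, so we are in the setting of Theorem~\ref{thm:rf}. The priority ordering $\mes_1 \succ \cdots \succ \mes_{L-1}$ yields $\PX{\hat\mes_j \neq \mes_j} \leq \PX{\hat\mes_i \neq \mes_i}$ for all $j \leq i$, so by the union bound
\begin{equation*}
\PX{(\hat\mes_1,\ldots,\hat\mes_i) \neq (\mes_1,\ldots,\mes_i)} \;\leq\; \sum_{j=1}^{i} \PX{\hat\mes_j \neq \mes_j} \;\leq\; i \cdot \PX{\hat\mes_i \neq \mes_i}.
\end{equation*}
The pre-factor $i$ is sub-exponential, so the combined group's exponent is at least $\FEbr_{,i,\SCf}$. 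Theorem~\ref{thm:rf} applied to this two-layer picture therefore forces $\FEbr_{,i,\SCf} \leq (1 - \tfrac{\sum_{j=1}^{i} r_j}{\CX})\, \PCAP$, which is the desired bound.

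\textbf{Achievability.} I would extend the two-phase scheme of Theorem~\ref{thm:rf} into an $L$-phase onion-peeling scheme, again using a fixed-length-$\inx$ block with errors-and-erasures decoding as the building block. Partition the block into $L$ phases of durations $\inx_k = \tfrac{r_k}{\CX}\inx$, where $r_L \DEF \CX - \sum_{j=1}^{L-1} r_j$ accounts for the ordinary layer. In phase $k$ the transmitter uses the feedback to compute the receiver's tentative decisions $\tilde\mes_1,\ldots,\tilde\mes_{k-1}$. If all of them match the true values, it sends $\mes_k$ with a capacity-achieving length-$\inx_k$ code; if any of them is wrong, it sends the symbol $\xr$ for the entire remainder of the block. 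After each phase the receiver compares the phase's empirical output type against $P_Y^*$; any failed type test triggers an erasure and a full retransmission. An undetected layer-$s$ error requires both $\tilde\mes_s \neq \mes_s$ and each subsequent $\xr$-mode phase producing a $P_Y^*$-typical output, so
\begin{equation*}
\PX{\hat\mes_s \neq \mes_s} \;\doteq\; \exp\Big(-\PCAP \sum_{k=s+1}^{L} \inx_k\Big) \;=\; \exp\Big(-\PCAP \big(1 - \tfrac{\sum_{j=1}^{s} r_j}{\CX}\big)\inx\Big).
\end{equation*}
The erasure probability vanishes because each tentative code operates at rate $r_k < \CX$ and capacity-code outputs pass the $P_Y^*$-type test with probability $1-o(1)$; consequently $\EX{\blx} = \inx(1+o(1))$ even after accounting for retransmissions, and the target exponent is delivered for every layer $s \in \{1,\ldots,L-1\}$.

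\textbf{Main obstacle.} The delicate part is the achievability analysis: one must simultaneously calibrate the $P_Y^*$-type test so that it (a) passes capacity-code phases with probability $1-o(1)$, (b) rejects $\xr$-mode phases at the full exponent $\PCAP$, and (c) keeps the branching tentative-decision events across phases sufficiently decoupled even though the feedback ties the transmitter's choice in each phase to all earlier decoding outcomes. The converse, though short, leans on Theorem~\ref{thm:rf} being valid for \emph{every} rate $\sum_{j=1}^{i} r_j \in [0,\CX)$; this is precisely what makes the merging argument sharp and produces the clean per-layer bound $(1 - \tfrac{\sum_{j=1}^{i} r_j}{\CX})\PCAP$.
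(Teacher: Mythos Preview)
Your proposal is correct and follows essentially the same approach as the paper: the converse merges the top $i$ layers into a single super-group and invokes Theorem~\ref{thm:rf} (the paper phrases this as a contradiction, but the content is identical to your union-bound reduction), and the achievability is exactly the onion-peeling scheme the paper uses, where each phase employs a capacity-achieving code whose special message (equivalently, the $\xr$-repetition plus $P_Y^*$-type test) signals ``error in a previous phase'' and triggers an erasure. The only cosmetic difference is that the paper packages each phase as an instance of the $\Emd=\PCAP$ code from Theorem~\ref{thm:md}, whereas you describe the $\xr$/type-test mechanism explicitly; these are the same construction.
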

Note that the least important layer cannot achieve any positive  error exponent because we are communicating at capacity, i.e. $E_L=0$. 

\begin{optstr}
Transmitter first sends  the most important layer, $\mes_1$, using a capacity achieving code of length $\tfrac{r_1}{\CX} \inx$. If it is decoded correctly, then it sends the next layer with a capacity achieving code of length $\tfrac{r_2}{\CX} \inx$. Else it starts sending the input letter $\xr$ for not only $\tfrac{r_2}{\CX} \inx$ time units but also for all remaining $L-2$ phases. Same strategy is repeated  for $\mes_3, \mes_4,\ldots,\mes_L$. 

Once the whole block of channel outputs,  $Y^{\inx}$,  is observed; receivers checks the empirical distribution of the output in all of the phases except the first one. If they are all typical with $P_{Y}^{*}$ receiver uses the tentative decisions to decode,  $\hat{\mes}=(\tilde{\mes}_1, \tilde{\mes}_2, \ldots \tilde{\mes}_{L})$. If one or more of the output sequences are not typical with $P_{Y}^{*}$ an erasure is declared for the whole block and transmission starts from scratch. 
\end{optstr}

For each layer $i$, with the above strategy we can achieve an exponent as if there were only two kinds of bits (as in Theorem \ref{thm:l1})
\begin{itemize}
\item bits in layer $i$ or in  more important layers $k<i$ (i.e. special bits)
\item  bits in less important layers (i.e. ordinary bits).
\end{itemize}
Hence Theorem \ref{thm:many} does not only specify  the optimal performance when there are multiple layers, but also shows that the performance we observed in Theorem  \ref{thm:l1}, is  successively refinable. Figure \ref{fig:layers} shows these simultaneously achievable exponents of Theorem \ref{thm:rf},  for a particular rate vector $(r_1,r_2,\ldots,r_{L-1})$.

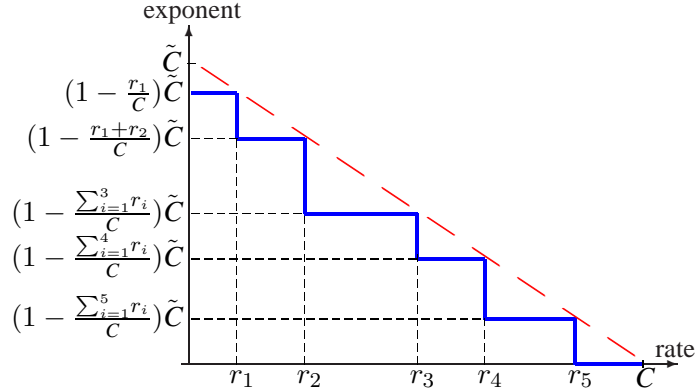
\begin{figure}[!h]
   \setlength{\unitlength}{1cm}
   \centering
   \begin{picture}(7.5,5)(-1.5,-.5)
   \put(0.1,0){\vector(0,1){4.5}}
   \put(0,0){\vector(1,0){6.6}}
   \put(-.5,4.6){\small exponent}
   \put(6.3,.1){\small rate}
   \linethickness{0.35mm}
   \color{blue}
   \put(0.0,3.6){\line(1,0){0.6}}
   \put(0.6,3.6){\line(0,-1){.6}}
   \put(0.6,3.0){\line(1,0){0.9}}
   \put(1.5,3.0){\line(0,-1){1}}
   \put(1.5,2.0){\line(1,0){1.5}}
   \put(3.0,2.0){\line(0,-1){.6}}
   \put(3.0,1.4){\line(1,0){0.9}}
   \put(3.9,1.4){\line(0,-1){.8}}
   \put(3.9,0.6){\line(1,0){1.2}}
   \put(5.1,0.6){\line(0,-1){.6}}
   \put(5.1,0.0){\line(1,0){.9}}
   \color{black}
   \linethickness{0.025mm}
   \multiput(0,3.0)(.17,0){4}{\line(1,0){.1}}
   \multiput(0,2.0)(.17,0){9}{\line(1,0){.1}}
   \multiput(0,1.4)(.17,0){18}{\line(1,0){.1}}
   \multiput(0,0.6)(.17,0){23}{\line(1,0){.1}}
   \multiput(0.6,0)(0,.17){18}{\line(0,1){.1}}
   \multiput(1.5,0)(0,.17){12}{\line(0,1){.1}}
   \multiput(3.0,0)(0,.17){9}{\line(0,1){.1}}
   \multiput(3.9,0)(0,.17){4}{\line(0,1){.1}}

   \linethickness{0.06mm}
   \put(0.5,-.25){$r_1$}
   \put(1.4,-.25){$r_2$}
   \put(2.9,-.25){$r_3$}
   \put(3.8,-.25){$r_4$}
   \put(5.0,-.25){$r_5$}
   \put(5.9,-.30){$\CX$}
   \put(6.0,-0.05){\line(0,1){.1}}

   \put(-0.05,4.0){\line(1,0){.1}}
   \put(-.4,3.9){$\PCAP$} 
   \put(-1.7,3.5){$(1-\tfrac{r_1}{\CX})\PCAP$} 
   \put(-2.2,2.9){$(1-\tfrac{r_1+r_2}{\CX})\PCAP$} 
   \put(-2.4,1.9){$(1-\tfrac{\sum_{i=1}^{3}\!r_i}{\CX})\PCAP$} 
   \put(-2.4,1.3){$(1-\tfrac{\sum_{i=1}^{4}\!r_i}{\CX})\PCAP$} 
   \put(-2.4,0.5){$(1-\tfrac{\sum_{i=1}^{5}\!r_i}{\CX})\PCAP$} 

   \linethickness{0.2mm}
    \color{red}
   \multiput(0,3.95)(0.6,-.4){10}{\line(3,-2){.36}}
   \end{picture}
\caption{Successive refinability for multiple layers of priority, demonstrated on an example with six layers; $\sum_{i=1}^{6}r_i=\CX$.}
\label{fig:layers} 
   \end{figure}

Note that the most important layer can achieve an exponent close to $\PCAP$ if its rate is close to zero. As we move to the layers with
decreasing importance, the achievable error exponent decays gradually.

\subsection{Special message}
Now consider one particular message, say the first one,  which requires small missed-detection probability. Similar to the no-feedback case, define $\FEmd$  as its missed-detection exponent at capacity.

\begin{definition}
  For any capacity achieving  sequence with feedback, $\SCf$, missed detection exponent is defined as
\begin{equation*}
  \FEmd_{,\SCf} \DEF\liminf_{\inx \rightarrow \infty}\tfrac{- \ln \SPCX{\inx}{\hat{\mes} \neq 1}{\mes =1}}{\EX{\tau^{(\inx)}}}.
\end{equation*}
Then $\FEmd$ is defined as $\FEmd  \DEF \sup_{\SCf} \FEmd_{,\SCf}$.
\end{definition}

\begin{theorem}
\[\FEmd=\PCAP.\] \label{thm:mdf}
\end{theorem}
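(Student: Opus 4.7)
The plan is to prove both directions: achievability $\FEmd\geq\PCAP$ and converse $\FEmd\leq\PCAP$.

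Achievability is immediate: any fixed length-$n$ block code is a degenerate variable-length code with feedback (setting $\tau\equiv n$ and ignoring the feedback link). Applying the Optimal strategy described right after Theorem \ref{thm:md} — repetition of the $\psed$-attaining input $\xr$ for the special codeword, nested inside a capacity-achieving code for the ordinary messages — gives a capacity-achieving sequence with feedback whose normalization $\EX{\tau}=n$ yields $\FEmd\geq\Emd=\PCAP$.

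For the converse, I plan a change-of-measure argument. Fix a capacity-achieving sequence $\SCf$ with $\bar n=\EX{\tau}$ and missed-detection probability $\alpha_1$, and let $P_1$, $P_{\neq 1}$ denote the conditional distributions of the observation tree $Y^{\tau}$ under $\mes=1$ and $\mes\neq 1$, respectively. Reliability forces the ``false-alarm-like'' quantity $\beta\DEF P_{\neq 1}\{\hat\mes=1\}\leq \Pe/(1-|\MX|^{-1})\to 0$, while $P_1\{\hat\mes\neq 1\}=\alpha_1$. Applying the data-processing inequality for KL divergence to the decoder's binary partition $\{\hat\mes=1\}$ vs.\ $\{\hat\mes\neq 1\}$ yields
\begin{equation*}
\KLD{P_{\neq 1}}{P_1}\;\geq\;d(\beta\,\|\,1-\alpha_1)\;\geq\;-\log\alpha_1 - o(\bar n),
\end{equation*}
where $d(\cdot\|\cdot)$ is binary KL divergence. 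The goal is then to complement this by $\KLD{P_{\neq 1}}{P_1}\leq\PCAP\,\bar n+o(\bar n)$, which combined with the previous display gives $-\log\alpha_1\leq\PCAP\,\bar n+o(\bar n)$ and hence $\FEmd_{,\SCf}\leq\PCAP$.

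The upper bound on the divergence would proceed in two ingredients. First, Wald's identity / optional stopping on the log-likelihood ratio (a martingale under $P_{\neq 1}$) gives the chain-rule decomposition
\begin{equation*}
\KLD{P_{\neq 1}}{P_1}\;=\;\ECX{\sum_{t=1}^{\tau}\KLD{Q_t(\cdot|Y^{t-1})}{W(\cdot|x_t(1,Y^{t-1}))}}{\mes\neq 1},
\end{equation*}
where $Q_t(\cdot|Y^{t-1})$ is the output distribution at time $t$ under the posterior on $\mes\neq 1$ given $Y^{t-1}$. Second, the per-letter identity $\KLD{Q}{W(\cdot|x)}=\KLD{Q}{P_Y^*}+\sum_{y}Q(y)\log\tfrac{P_Y^*(y)}{W(y|x)}$, together with $\sum_y P_Y^*(y)\log\tfrac{P_Y^*(y)}{W(y|x)}\leq\PCAP$ and the uniform bound $|\log(P_Y^*(y)/W(y|x))|\leq K$ (available because all transition probabilities are positive), yields
\begin{equation*}
\KLD{Q_t}{W(\cdot|x_t(1,Y^{t-1}))}\;\leq\;\PCAP+\KLD{Q_t}{P_Y^*}+K\,\|Q_t-P_Y^*\|_1.
\end{equation*}

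The main obstacle — and what I expect to be the crux of subsection \ref{subsec:fmdconverse} — is controlling the aggregate correction $\ECX{\sum_{t\leq\tau}[\KLD{Q_t}{P_Y^*}+K\|Q_t-P_Y^*\|_1]}{\mes\neq 1}=o(\bar n)$. The intuition is that Fano's inequality applied at the stopping time yields $\MI{\mes}{Y^{\tau}}\geq \bar n\,\CX-o(\bar n)$; since the per-step mutual information is at most $\CX$ with equality forcing the output distribution to coincide with the \emph{unique} capacity-achieving output $P_Y^*$, the average per-step output distribution must converge to $P_Y^*$, and Pinsker's inequality takes care of the $\|Q_t-P_Y^*\|_1$ term. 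The technical delicacy lies in making this uniform-in-$t$ statement quantitative under a random stopping time $\tau$ and in handling the negligible discrepancy between the mixture over $\mes\neq 1$ and the full mixture over all $\mes$ — exactly the place where the novel converse technique of subsection \ref{subsec:fmdconverse} appears to be needed.
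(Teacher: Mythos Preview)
Your achievability is identical to the paper's. Your converse is correct in outline but takes a genuinely different route from the paper. The paper packages the per-letter tradeoff into a single concave function
\[
\FX{R}=\max_{\substack{\alpha,x_1,x_2,P_X^1,P_X^2:\\ \alpha I(P_X^1,W)+(1-\alpha)I(P_X^2,W)\geq R}}\alpha\,\KLD{(P^1W)_Y}{W(\cdot|x_1)}+(1-\alpha)\,\KLD{(P^2W)_Y}{W(\cdot|x_2)},
\]
bounds each per-step divergence by $\FX{\CMI{\mes}{Y_t}{Y^{t-1}}}$, and then applies Jensen's inequality to the concave $\FX{\cdot}$ together with the Fano/optional-stopping lower bound on $\EX{\sum_{t\leq\tau}\CMI{\mes}{Y_t}{Y^{t-1}}}$. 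This yields a clean Lemma~\ref{lem:con1} valid at every rate $R\in[0,\CX]$, with $\FX{\CX}=\PCAP$ as the special case. Your approach instead linearizes around $P_Y^*$ and must separately control the correction $\sum_t\bigl[\KLD{Q_t}{P_Y^*}+K\|Q_t-P_Y^*\|_1\bigr]$. The obstacle you flag is not deep: the KKT identity $I(P_X,W)+\KLD{(PW)_Y}{P_Y^*}=\sum_x P_X(x)\KLD{W(\cdot|x)}{P_Y^*}\leq\CX$ gives $\KLD{Q_t}{P_Y^*}\leq\CX-I_t$ pointwise, so Fano plus optional stopping (exactly the martingale $V_n$ the paper uses) makes $\EX{\sum_{t\leq\tau}\KLD{Q_t}{P_Y^*}}\leq \CX\,\EX{\tau}-\bigl(\HXz-h(\Pe)-\Pe\ln(|\MX|-1)\bigr)=o(\bar n)$, and Pinsker plus Cauchy--Schwarz then dispatches the $\|Q_t-P_Y^*\|_1$ term. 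Two remarks: (i) it is slightly cleaner to work with the full mixture $P$ rather than $P_{\neq 1}$, as the paper does---the data-processing step then reads $\EX{\ln\tfrac{P(Y^\tau)}{P(Y^\tau|\mes=1)}}\geq -\ln 2+(1-\Pe-|\MX|^{-1})\ln\tfrac{1}{\alpha_1}$ and you avoid the $P$ vs.\ $P_{\neq 1}$ bookkeeping; (ii) your argument is intrinsically a capacity-only argument, whereas the paper's $\FX{\cdot}$ machinery is reused wholesale in the converses of Theorems~\ref{thm:rf} and~\ref{thm:many}, which is the real payoff of their formulation.
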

Theorem \ref{thm:md} and \ref{thm:mdf} implies following corollary,
\begin{corollary}
Feedback doesn't improve the missed detection  exponent of a single special message: $\FEmd=\Emd$.  
\end{corollary}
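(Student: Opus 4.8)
The corollary is an immediate consequence of the two characterizations already established, so the plan is simply to chain them. First I would invoke Theorem~\ref{thm:md}, which identifies the no-feedback missed-detection exponent as $\Emd=\PCAP$. Next I would invoke Theorem~\ref{thm:mdf}, which identifies the feedback (variable-length) missed-detection exponent as $\FEmd=\PCAP$. Equating the two right-hand sides gives $\FEmd=\Emd$, which is exactly the claim. No further argument is needed; there is no hidden obstacle here because all the work has been pushed into the two theorems.

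It is worth recording, however, \emph{which} of the two inclusions carries the content. The inequality $\FEmd\ge\Emd$ is essentially trivial: any fixed-length block code without feedback is a degenerate variable-length code with feedback (constant stopping time, unused feedback link), so every exponent achievable in the setting of Theorem~\ref{thm:md} is also achievable in the setting of Theorem~\ref{thm:mdf}, whence $\FEmd\ge\Emd$. Thus the substantive half of the corollary is the reverse inequality $\FEmd\le\Emd=\PCAP$, i.e.\ the statement that feedback and variable decoding time cannot push the missed-detection exponent of a single special message above the $\psed$. That bound is precisely the converse direction of Theorem~\ref{thm:mdf}, whose proof is given separately (it is the delicate part, since unlike Theorem~\ref{thm:bitf} and Theorem~\ref{thm:l1} the feedback here does \emph{not} buy anything, in sharp contrast with the bit-wise case).

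So the write-up is two lines: cite Theorem~\ref{thm:md} for $\Emd=\PCAP$, cite Theorem~\ref{thm:mdf} for $\FEmd=\PCAP$, conclude $\FEmd=\Emd$. If a remark is desired, I would add the one-sentence observation above that $\FEmd\ge\Emd$ holds trivially by specialization, so that the reader sees the corollary really asserts the failure of feedback to help, with the nontrivial content residing in the converse of Theorem~\ref{thm:mdf}.
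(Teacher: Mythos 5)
Your proposal is correct and matches the paper exactly: the paper derives the corollary by simply combining Theorem~\ref{thm:md} ($\Emd=\PCAP$) with Theorem~\ref{thm:mdf} ($\FEmd=\PCAP$), with no further argument. Your added remark that the trivial direction is $\FEmd\ge\Emd$ (specialization of fixed-length codes) and that the content lies in the converse of Theorem~\ref{thm:mdf} is also consistent with how the paper structures its proofs.
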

If $\psed$ were defined as the best protection of a special message achievable at capacity, then this result could have been thought of as an analog the ``feedback does not increase capacity'' for the $\psed$. Also note that with feedback, $\FEmd$ for the special message and $\FEb$ for the special bit are equal.

\subsection{Many special messages}
\label{subsec:mdr} 
Now let us consider the problem where the first $\lceil e^{\EX{\tau} r} \rceil$ messages are special, i.e. ${\cal M}_s=\{1,2,\ldots,\lceil e^{\EX{\tau} r} \rceil \}$. Unlike previous problems, now we will also impose a uniform expected delay constraint as follows.

\begin{definition}
For any reliable variable length block code with feedback,
\begin{equation*}
  \Gamma \DEF \tfrac{\max_{i \in {\cal M}}  \ECX{\tau}{\mes=i}}{\EX{\tau}}
\end{equation*}
A reliable sequence with feedback, $\SCf$, is  a uniform delay reliable sequence with feedback if and only if 
$\displaystyle{\lim_{\inx\rightarrow \infty} \Gamma^{(\inx)} =1}$.
\end{definition}

This means that the average $\ECX{\tau}{\mes=i}$  for every message $i$ is essentially equal to $\EX{\tau}$ (if not smaller). This uniformity constraint reflects a system requirement for ensuring a robust delay performance, which is invariant of the transmitted message.\footnote{Optimal exponents in all previous problems  remain unchanged irrespective of this uniform delay constraint.} Let us define the missed-detection exponent $\FEmd (r)$ under this uniform delay constraint.

\begin{definition}
  For any uniform delay capacity achieving  sequence with feedback, $\SCf$,  the missed detection exponent achieved on sequence of subsets ${\cal M}_s$ is defined as 
\begin{equation*}
\FEmd_{,\SC, {\cal M}_s} \DEF \liminf_{\BLX \rightarrow \infty} \tfrac{ {\displaystyle-\ln\max_{i\in{\cal M}_s^{(\inx)}}} \SPX{\inx}{\hat{\mes}\neq i|\mes=i}}{\displaystyle{\EX{\tau^{(\inx)}}}}.
 \end{equation*}
Then for a given $r<C$, we define  $\FEmd(r) \DEF\sup_{\SC, {\cal M}_s} \FEmd_{,\SC,{\cal M}_s}$ where maximization is over ${\cal M}_s$'s such that $\displaystyle{\liminf_{\substack {\inx \rightarrow \infty}}\frac{\ln|{\cal M}_s^{(\inx)}|}{\EX{\blx^{(\inx)}}}=r}$.
\end{definition}

The following theorem shows that the special messages can achieve the minimum of the $\psed$ and the Burnashev's exponent at rate $r$.
\begin{theorem}
\label{thm:msru}
  \begin{equation*}
\FEmd (r) = \min \lcb \PCAP, (1-\tfrac{r}{\CX}) \DX \rcb , \quad \forall\ r<\CX.
  \end{equation*}
where $\displaystyle{\DX \DEF \max_{i,j \in {\cal X}} \KLD{W_{Y|X}(\cdot|i)}{W_{Y|X}(\cdot|j)}}$.
\end{theorem}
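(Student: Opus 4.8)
The claim is $\FEmd(r) = \min\{\PCAP, (1-\tfrac{r}{\CX})\DX\}$ for a uniform-delay capacity-achieving variable-length block code with feedback, where $\lceil e^{\EX{\tau}r}\rceil$ messages are special. I would prove this as the combination of an achievability result and a converse, and within each I would treat the two terms in the minimum essentially separately, since they correspond to two qualitatively different regimes. The term $\PCAP$ is the single-special-message bound from Theorem~\ref{thm:mdf} (feedback cannot beat the $\psed$), which already tells us $\FEmd(r)\le\PCAP$ for any $r$, including $r\to 0$. The term $(1-\tfrac{r}{\CX})\DX$ is what one gets by pretending the special messages are the \emph{only} messages: then we have a rate-$r$ code among $\lceil e^{\EX{\tau}r}\rceil$ messages occupying a fraction $\tfrac{r}{\CX}$ of the block on average, leaving a fraction $(1-\tfrac{r}{\CX})$ of the expected time to run a pure error-exponent-optimal (Burnashev) scheme, whose optimal exponent at zero rate is $\DX$. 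So the structure is: show that the minimum of these two ceilings is simultaneously achievable, and show no code can beat either ceiling.

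\textbf{Achievability.} I would build on the onion-peeling/two-phase architecture already used for Theorems~\ref{thm:bitf}, \ref{thm:l1}, and \ref{thm:mdf}. Use a fixed-length block code with erasures of length $\approx\inx$ as the building block, and decompose it into two phases. Phase~1: transmit a capacity-achieving code of length $\tfrac{r}{\CX}\inx$ carrying \emph{all} messages (special and ordinary both, or at least enough to resolve which special message was sent); with feedback the transmitter learns a tentative decision $\tilde\mes$. Phase~2, of length $(1-\tfrac{r}{\CX})\inx$: if $\tilde\mes$ is believed correct, send a capacity-achieving code for the remaining ordinary information; if the transmitted message was a special message and $\tilde\mes$ disagrees with it, send the ``alarm'' — but here, unlike the single-message case where one repeats the single letter $\xr$, one must signal which of exponentially many special messages it was, so Phase~2 should run a Burnashev-type confirmation/retransmission scheme among the special messages using the pair of letters $(i,j)$ achieving $\DX$. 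If the Phase-2 output type is atypical for $P_Y^*$, declare an erasure for the block and restart; this keeps $\EX{\tau}\approx\inx$ and the uniform-delay condition $\Gamma^{(\inx)}\to 1$. A missed detection of a special message then requires (i) a Phase-1 error \emph{and} (ii) the Phase-2 confirmation failing to catch it, whose exponent is the smaller of the $\psed$-type event $\doteq e^{-(1-\tfrac{r}{\CX})\PCAP\inx}$ wait — more carefully, the Phase-1 error plus failure-to-confirm gives exponent $\min\{\PCAP,(1-\tfrac{r}{\CX})\DX\}$ after one notes the Phase-1 error alone already costs at least $\PCAP$ (via the $P_Y^*$-type test, as in Theorem~\ref{thm:bitf}) and the confirmation failure over a fraction $(1-\tfrac{r}{\CX})$ of the block costs $(1-\tfrac{r}{\CX})\DX$. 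I would verify the rate is still $\CX$ (the discarded ordinary codewords landing near special codewords are exponentially negligible) and that $\Pe\to 0$, $\Pera\to 0$.

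\textbf{Converse.} Two bounds. The bound $\FEmd(r)\le\PCAP$ is Theorem~\ref{thm:mdf} verbatim: even a single special message cannot beat $\PCAP$, so neither can many. The bound $\FEmd(r)\le(1-\tfrac{r}{\CX})\DX$ is the harder piece, and I expect it to be the main obstacle. The intuition is a divergence/change-of-measure argument à la Burnashev: condition on a special message $i$ being sent; the decoder, to be reliable overall at capacity, must spend on average roughly $\tfrac{r}{\CX}\EX{\tau}$ of its time ``resolving $i$ against the other special messages'' (you cannot distinguish $e^{\EX{\tau}r}$ messages reliably in fewer channel uses than rate-$\CX$ coding requires), and the uniform-delay constraint forces $\ECX{\tau}{\mes=i}\approx\EX{\tau}$, so only $(1-\tfrac{r}{\CX})\EX{\tau}$ expected time remains in which the ``final confirmation'' of $i$ can happen; over that window the best achievable confirmation exponent is $\DX$, by the standard two-letter converse for the zero-rate reliability of a binary hypothesis with feedback (the argument underlying Burnashev's exponent, generalized as in the footnote to subsection~\ref{sec:proofthmsix} following \cite{burna2}). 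Formally I would use a stopping-time decomposition: let $\sigma$ be the last time before $\tau$ at which the posterior on message $i$ is below a threshold; bound $\ECX{\sigma}{\mes=i}$ from below using the rate-$r$ ``mutual information'' budget and the uniform-delay hypothesis, and bound $\ECX{\tau-\sigma}{\mes=i}$ times $\DX$ from below by $-\ln\PCX{\hat\mes\neq i}{\mes=i}$ via a martingale / optional-stopping argument on the log-likelihood ratio between $W(\cdot|i)$ and the worst confusable $W(\cdot|j)$. Combining, $-\ln\PCX{\hat\mes\neq i}{\mes=i}\le (1-\tfrac{r}{\CX}+o(1))\DX\,\EX{\tau}$, which is the claim. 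The delicate points are making the ``fraction $\tfrac{r}{\CX}$ of time is unavailable'' step rigorous without assuming anything about the code structure, and handling the interleaving of the resolution and confirmation phases (which need not be temporally separated in an arbitrary code) — this is exactly where the generalization of the \cite{burna2} technique referenced for subsection~\ref{sec:proofthmsix} should do the work.
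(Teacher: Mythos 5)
Your converse plan and the paper's differ mainly in economy. You propose to redevelop a Burnashev-type stopping-time decomposition directly on the original code, whereas the paper simply restricts attention to the subcode on ${\cal M}_s$ and invokes Burnashev's converse as a black box. Under the uniform-delay constraint, $\ECX{\tau}{\mes\in{\cal M}_s} \le \Gamma\EX{\tau}$ with $\Gamma \to 1$, so the restricted sequence has rate $\to r$ relative to its own expected delay and error probability bounded by the worst special-message conditional error probability; Burnashev then gives the $(1-\tfrac{r}{\CX})\DX$ ceiling directly. Your plan could in principle be made rigorous (the pointer to the \cite{burna2} technique is apt), but the restriction is much shorter. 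Your $\PCAP$-ceiling via Theorem~\ref{thm:mdf} matches the paper, which likewise restricts to one special message plus the ordinaries and uses the uniform-delay condition to compare the two expected delays.

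The achievability has a genuine gap. Your two-phase architecture allocates $\tfrac{r}{\CX}\inx$ to Phase~1 and $(1-\tfrac{r}{\CX})\inx$ to Phase~2, and puts the burden of catching a Phase-1 misidentification of a special message entirely on Phase~2. When a special message is misdecoded in Phase~1, the best the transmitter can do in the $(1-\tfrac{r}{\CX})\inx$ channel uses of Phase~2 is to send the red-alert letter $\xr$ (or a $\DX$-achieving reject letter), which caps the missed-detection exponent on that path at $(1-\tfrac{r}{\CX})\PCAP$ (respectively $(1-\tfrac{r}{\CX})\DX$). Since $\PCAP \le \DX$ by convexity of divergence, your scheme yields only $\min\{(1-\tfrac{r}{\CX})\PCAP,\,(1-\tfrac{r}{\CX})\DX\} = (1-\tfrac{r}{\CX})\PCAP$, strictly below the claimed $\min\{\PCAP,(1-\tfrac{r}{\CX})\DX\}$ whenever $0<r<(1-\tfrac{\PCAP}{\DX})\CX$. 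The missing idea is the paper's indicator bit $b = \IND{\mes\in{\cal M}_s}$, transmitted in a negligible micro-phase of length $\sqrt{\inx}$ \emph{before} the rate split. Conditional on $\hat{b}=0$, the transmitter has the \emph{entire} remaining length-$\inx$ block to run the single-special-message scheme of Theorem~\ref{thm:md}, reserving its emergency message for ``the indicator bit was wrong,'' and thereby obtains exponent $\PCAP$ rather than $(1-\tfrac{r}{\CX})\PCAP$. Conditional on $\hat{b}=1$, a Yamamoto--Itoh scheme over ${\cal M}_s$ supplies the $(1-\tfrac{r}{\CX})\DX$ term. Without the indicator bit you cannot decouple the length of the miss-catching phase from the length of the special-confirmation phase, and the first term of the min is unattainable. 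There is also a secondary inconsistency in your description of Phase~1 (a length-$\tfrac{r}{\CX}\inx$ block cannot carry all $\approx e^{\CX\inx}$ messages at capacity) which the indicator-bit decomposition resolves cleanly.
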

For $r \in [0, (1-\tfrac{\PCAP}{\DX})\CX]$ each special message achieves the  best missed detection exponent $\PCAP$ for a single special message, as if the rest of the special messages were absent. For $r \in [(1-\tfrac{\PCAP}{\DX})\CX, \CX)$  special messages achieve the Burnashev's exponent as if the ordinary messages were absent.

The optimal strategy is based on transmitting a special bit first. This result demonstrates, yet another time,  how feedback connects bit-wise \uep~  with message-wise \uep. In the optimal strategy for bit-wise \uep~  with many bits a special message was used, whereas now in message wise \uep~  with many messages a special bit is used. The roles of bits and messages, in two optimal strategies are simply  swapped between the two cases.

\begin{optstr}    
We combine the strategy for achieving $\PCAP~$ for a special bit and the Yamamoto-Itoh strategy for achieving Burnashev's exponent \cite{itoh}. In the first phase, a special bit, $b$, is sent with a repetition code of $\sqrt{\inx}$ symbols. This is the indicator  bit for special messages: it is $1$ when a special message is to be sent and $0$ otherwise.

If $b$ is decoded incorrectly as $\hat b=0$, input letter $x_r$ is sent for the remaining $\inx$ time unit.  If it is decoded correctly as $\hat b=0$, then the ordinary message is sent using a codeword from a capacity achieving code. If the output sequence in the second phase is typical with $P_Y^{*}$ receiver  use an ML decoder to chose one of the ordinary messages,  else an erasure is declared for $(\inx+\sqrt{\inx})$ long block.

If $\hat{b}=1$, then a length $\inx$ two phase code with errors and erasure decoding, like the one given in \cite{itoh}  by  Yamamoto and Itoh, is used to send the message. In the communication phase  a length $\tfrac{r}{\CX}\inx$ capacity achieving code is used to send the message, $\mes$, if $\mes \in {\cal M}_s$.  If $\mes \notin {\cal M}_s$ an arbitrary codeword from the length $\tfrac{r}{\CX}\inx$  capacity achieving code is sent. In the control phase, if $\mes \in {\cal M}_s$ and if it is decoded correctly at the end of communication phase, the accept letter $\xa$ is sent for $(1-\tfrac{r}{\CX})\inx$  time units, else the  reject letter, $\xd$, is sent for $(1 -\tfrac{r}{\CX})\inx$ time units. If the empirical distribution  in the control phase is typical with $W_{Y|X}(\cdot|\xa)$ then special message decoded at the end of the communication phase becomes the final $\hat{\mes}$, else an erasure is  declared for $(\inx+\sqrt{\inx})$ long block.

Whenever an erasure is declared for the whole block, transmitter and receiver applies above strategy  again from  scratch. This scheme is repeated  until a non-erasure decoding is reached.
\end{optstr}

\section{Avoiding False Alarms}\label{sec:fa}
In the previous sections while investigating message wise \uep~  we have only considered the missed detection formulation of the problems. In this section we will focus on an alternative formulation of message wise \uep~  problems based on false alarm probabilities. 
\subsection{Block Codes without Feedback}
We first consider the no-feedback case. When false-alarm of a special message is a  critical event, e.g. the ``reboot'' instruction,  the false alarm probability  $\Pr\!\left[\hat \mes= 1|\mes\neq 1\right]$ for this message should be minimized, rather than the missed detection probability  $\Pr\!\left[\hat{\mes}\neq  1|\mes= 1\right]$.

 Using Bayes' rule and assuming uniformly chosen messages we get,
\begin{align*}
  \PX{\hat \mes= 1|\mes\neq 1}
&=\frac{\PX{\hat \mes= 1,\mes\neq 1}}{\PX{\mes\neq 1}}\\
&=\frac{\sum_{j\neq 1}\PX{\hat \mes= 1|\mes= j}}{( |{\cal M}|-1)}.
\end{align*}
In classical error exponent analysis, \cite{gallager_book}, the error probability for a given message usually means its missed detection probability. However, examples such as the ``reboot'' message necessitate this notion of  false alarm probability.
\begin{definition}
 For a capacity-achieving sequence, $\SC$, such that
\begin{equation*}
 \limsup_{\BLX \rightarrow \infty}   \SPCX{\BLX}{\hat \mes \neq  1}{\mes= 1}=0,
 \end{equation*}
 false alarm exponent is defined as  
\begin{equation*}
\notag \Efa_{,\SC} \DEF \liminf_{\BLX \rightarrow \infty} \tfrac{ -\ln \SPCX{\BLX}{\hat \mes= 1}{\mes\neq 1}}{\BLX}.
\end{equation*}
Then $\Efa$ is defined as $\Efa \DEF\sup_{\SC} \Efa_{,\SC}$.  
\end{definition}

Thus  $\Efa$ is the best exponential decay rate of false alarm probability with $\BLX$. Unfortunately we do not have the exact expression for $\Efa$. However upper bound given below  is sufficient to demonstrate the improvement introduced by feedback and variable decoding time.
\begin{theorem}
\label{thm:fa}
\begin{equation*}
 \Efal \leq \Efa\leq \Efau.
\end{equation*}
The upper and lower bounds to the false alarm exponent are given by
\begin{align*}
  \Efal & \DEF \max_{i \in {\cal X}}  \min_{\substack{ V_{Y|X}:\\ \sum_{j} V_{Y|X}(\cdot|j) P_X^{*}(j) =  W_{Y|X}(\cdot|i)}}  \CKLD{V_{Y|X}(\cdot|X)}{W_{Y|X}(\cdot|X)}{P_X^{*}}\\
  \Efau & \DEF \max_{i \in {\cal X}} \CKLD{W_{Y|X}(\cdot|i)}{W_{Y|X}(\cdot|X)}{P_X^{*}}.
\end{align*}
\end{theorem}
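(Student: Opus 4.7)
\textbf{Proof proposal for Theorem~\ref{thm:fa}.} The two bounds are proved separately. The achievability essentially swaps the roles of the special and ordinary codewords relative to the optimal missed-detection scheme of Theorem~\ref{thm:md}; the converse rests on a change-of-measure/Jensen bound whose sharpness depends on a decorrelation step that is the main obstacle.

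\emph{Achievability ($\Efa\ge\Efal$).} I would fix $i^{\star}\in\mathcal X$ attaining the outer maximum in $\Efal$, set the codeword for message~1 to $(i^{\star},\ldots,i^{\star})$, and use a capacity-achieving constant-composition codebook of composition $P_X^{\star}$ for the remaining $|\MX|-1$ messages, deleting any codeword whose likely output type lies inside a $\delta$-neighborhood of $W_{Y|X}(\cdot|i^{\star})$. Since that neighborhood has exponentially small probability under $P_X^{\star}$-type inputs, the pruning removes only an exponentially small fraction of codewords and the rate remains $\CX$. The decoder declares $\hat{\mes}=1$ iff the empirical type of $Y^{\BLX}$ lies in that $\delta$-neighborhood, and otherwise runs the ordinary decoder. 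Missed detection vanishes by the law of large numbers applied to the constant input, and a false alarm requires an input of composition $\approx P_X^{\star}$ to produce an output type $\approx W_{Y|X}(\cdot|i^{\star})$; by the conditional-type/Sanov bound, its exponent equals
\[
\min_{V_{Y|X}:\,\sum_j V_{Y|X}(\cdot|j)P_X^{\star}(j)=W_{Y|X}(\cdot|i^{\star})}\CKLD{V_{Y|X}(\cdot|X)}{W_{Y|X}(\cdot|X)}{P_X^{\star}}.
\]
Letting $\delta\downarrow 0$ and maximizing over $i^{\star}$ gives $\Efa\ge\Efal$.

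\emph{Converse ($\Efa\le\Efau$).} Let $\SC$ be any capacity-achieving sequence with $\SPCX{\BLX}{\hat{\mes}\ne 1}{\mes=1}=\epsilon_{\BLX}\to 0$, and let $D_1^{(\BLX)}$ denote the decoding region of message~1. For every $j\ne 1$, the data-processing inequality for KL divergence applied to the indicator of $D_1^{(\BLX)}$ yields
\[
W^{\BLX}(D_1^{(\BLX)}\mid\bx^{\BLX}(j))\ge\exp\!\left(-\frac{\KLD{W^{\BLX}(\cdot|\bx^{\BLX}(1))}{W^{\BLX}(\cdot|\bx^{\BLX}(j))}+h(\epsilon_{\BLX})}{1-\epsilon_{\BLX}}\right),
\]
where $h(\cdot)$ is the binary entropy function. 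Averaging this bound over $j\in\{2,\ldots,|\MX|\}$ and invoking Jensen's inequality on the convex map $e^{-x}$ gives
\[
\SPCX{\BLX}{\hat{\mes}=1}{\mes\ne 1}\ \ge\ \exp\!\left(-\frac{\overline D^{(\BLX)}+h(\epsilon_{\BLX})}{1-\epsilon_{\BLX}}\right),
\]
with $\overline D^{(\BLX)}=\frac{1}{|\MX|-1}\sum_{j\ne 1}\sum_{t=1}^{\BLX}\KLD{W_{Y|X}(\cdot|\bx_t(1))}{W_{Y|X}(\cdot|\bx_t(j))}$. Writing $\overline D^{(\BLX)}/\BLX=\sum_{i,i'}\pi^{(\BLX)}(i,i')\KLD{W_{Y|X}(\cdot|i)}{W_{Y|X}(\cdot|i')}$, the joint measure $\pi^{(\BLX)}$ on $\mathcal X\times\mathcal X$ has first marginal equal to the type of $\bx^{\BLX}(1)$ and its second marginal converges to $P_X^{\star}$ by capacity achievement. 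If $\pi^{(\BLX)}$ could asymptotically be replaced by the product of its marginals, then a single $\max_i(\cdot)$ would upper bound the sum by $\Efau$ directly, completing $\Efa\le\Efau$.

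\emph{Main obstacle.} The principal hurdle is decorrelating the symbol $\bx_t(1)$ from the time-$t$ composition of the rest of the codebook, so that $\pi^{(\BLX)}$ asymptotically factors as $\hat P_{\bx^{\BLX}(1)}\otimes P_X^{\star}$. I would attempt this either by (i) a symmetrization step that averages the above bound over independent column permutations of the codeword matrix (performance-preserving on a memoryless channel), or (ii) a direct type-counting argument that uses the capacity-achieving property to show the time-$t$ codebook composition lies in an $o(1)$-ball around $P_X^{\star}$ on all but $o(\BLX)$ of the time indices. Either route would yield $\overline D^{(\BLX)}\le\BLX\,\Efau+o(\BLX)$ and hence $\Efa\le\Efau$.
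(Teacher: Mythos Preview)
Your proposal is correct and follows essentially the same route as the paper. The achievability is identical (repetition codeword $x_{f_l}^{\BLX}$ for the special message, fixed-composition capacity-achieving code for the rest, decode $\hat{\mes}=1$ iff the output type is near $W_{Y|X}(\cdot\mid x_{f_l})$). For the converse there is only a cosmetic reshuffling: the paper applies the data-processing inequality directly to $\KLD{\Pr[\cdot\mid\mes=1]}{\Pr[\cdot\mid\mes\neq 1]}$ and then invokes convexity of KL in its second argument, whereas you apply data processing pairwise and then Jensen on $e^{-x}$. Both produce the same averaged divergence $\overline D^{(\BLX)}$.

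Your ``main obstacle'' is overstated, and your route~(ii) is exactly what the paper does, but more directly than your $\pi$-factorization framing suggests. Let $P_{X_k}$ be the time-$k$ empirical composition of the codebook. Fano's inequality plus the concavity of mutual information give $\sum_{k} \IND{I(P_{X_k},W)\le \CX-\epsilon}\le \BLX\,\epsilon^{(\BLX)}$ with $\epsilon^{(\BLX)}\to 0$, so $P_{X_k}$ is within $o(1)$ of $P_X^{*}$ for all but $o(\BLX)$ indices. Now bound $\overline D^{(\BLX)}$ \emph{time by time}: at any ``good'' index $k$,
\[
\sum_{i'} P_{X_k}(i')\,\KLD{W_{Y|X}(\cdot\mid \bx_k(1))}{W_{Y|X}(\cdot\mid i')}
\ \le\ \max_{i\in\mathcal X}\ \CKLD{W_{Y|X}(\cdot\mid i)}{W_{Y|X}(\cdot\mid X)}{P_X^{*}}+o(1)=\Efau+o(1),
\]
where the outer $\max_i$ simply absorbs whatever letter $\bx_k(1)$ happens to be; at the $o(\BLX)$ ``bad'' indices bound crudely by $\DX$. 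Summing gives $\overline D^{(\BLX)}\le \BLX\,\Efau+o(\BLX)$. No decorrelation between $\bx_k(1)$ and the column composition is ever needed, because you never have to condition on the value of $\bx_k(1)$; the $\max_i$ in the definition of $\Efau$ does that work for you. Your route~(i) (column permutation symmetrization) would also work, but is unnecessary.
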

The maximizers of the optimizations for $\Efal$ and $\Efau$ are denoted by  $\xfl$ and $\xfu$
\begin{align*}
\Efal
&=  \min_{\substack{ V_{Y|X}:\\ \sum_{j} V_{Y|X}(\cdot|j) P_X^{*}(j) =  W_{Y|X}(\cdot|\xfl)}}  
\CKLD{V_{Y|X}(\cdot|X)}{W_{Y|X}(\cdot|X)}{P_X^{*}}\\
\Efau
&=   \CKLD{V_{Y|X}(\cdot|\xfu)}{W_{Y|X}(\cdot|X)}{P_X^{*}}.  
\end{align*}

\begin{strl}  
Codeword for the special message $\mes=1$ is a repetition sequence of input letter $\xfl$. Its decoding region is the typical `noise ball' around it, the output sequences whose empirical distribution is approximately equal to  $W_{Y|X}(\cdot|\xfl)$. For the ordinary messages, we use a capacity achieving code-book where all codewords have the same empirical distribution (approx.) $P_X^*$. Then for $y^{\BLX}$ whose empirical distribution is not in the typical `noise ball' around the special codeword,  receiver makes an ML decoding among the  ordinary codewords.
\end{strl}
Note the contrast between this strategy for achieving $\Efal$ and the optimal strategy for achieving $\Emd$. For achieving $\Emd$, output sequences of any  type other than the ones close to $P_Y^*$ were decoded as the special message; whereas for achieving $\Efa$, only the output sequences of types that are close to $W_{Y|X}(\cdot|\xfl)$ are decoded as the special  message. 
\begin{figure}[!h]
\psfrag{exp(nR)}{} \psfrag{exp(n C)}{} \psfrag{\mes=1}{\!$\hat{\mes} =1$}
\centerline{\psfig{figure=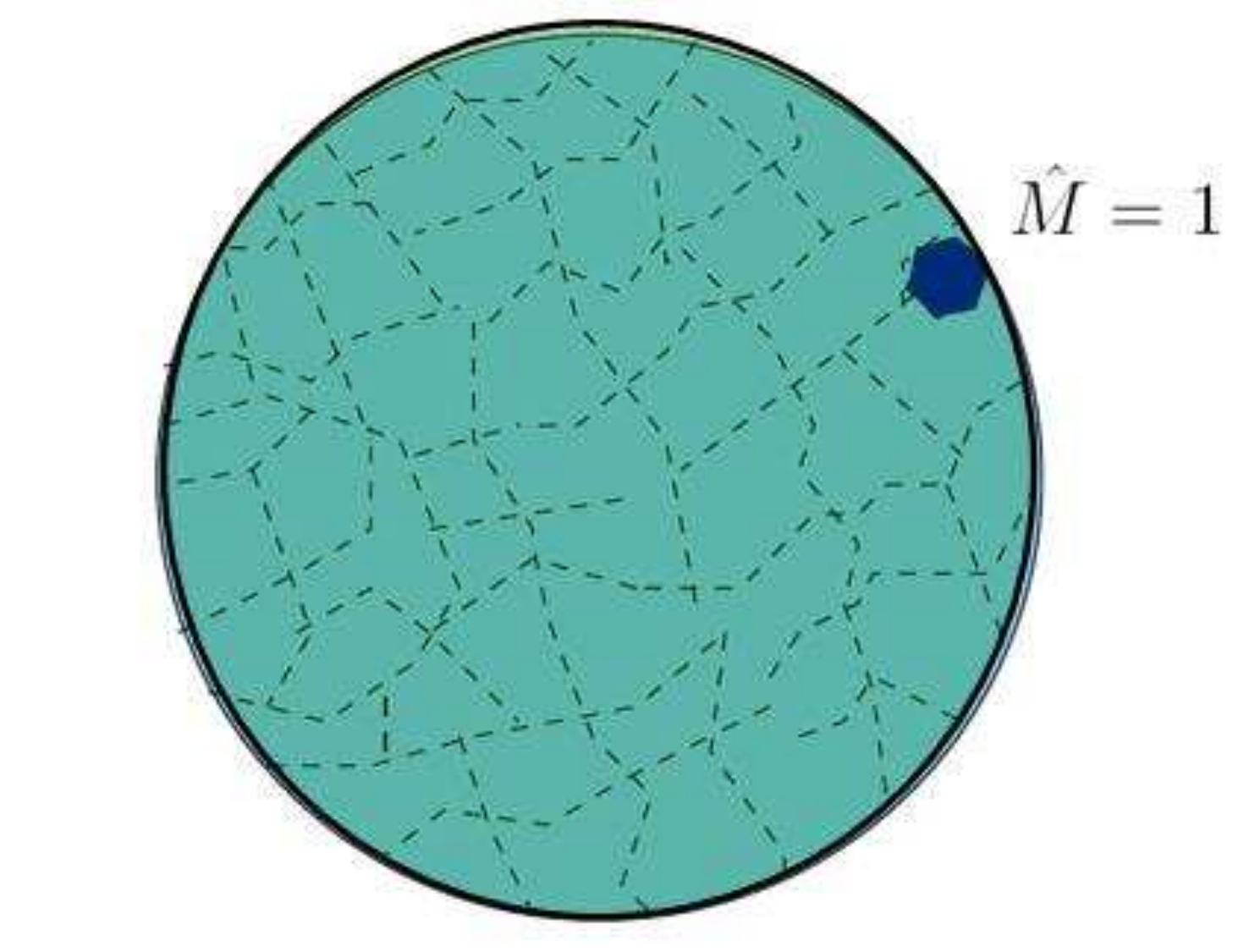,width=6cm}} 
\caption{Avoiding false-alarm} 
\label{fig:fa} 
\vspace{-0cm}
\end{figure}

\begin{intexp}   
A false alarm exponent for the special message corresponds to having the smallest possible decoding region for the special  message. This ensures that when some ordinary message is transmitted, probability of the event $\{\hat{\mes}=1\}$ is exponentially small. We cannot make it too small though, because when the special message is transmitted, the probability of the very same event should be almost one. Hence the decoding region of the special message  should at least contain the typical noise ball around the special codeword. The blue region in Fig. \ref{fig:fa} denotes such a region.
\end{intexp}

Note that $\Efal$ is larger than channel capacity $\CX$ due to the  convexity of KL divergence.
\begin{align*}
 \Efal
&= \max_{i \in {\cal X}} \min_{\substack{ V_{Y|X}:\\ \sum_{j} V_{Y|X}(\cdot|j) P_X^{*}(j) =  W_{Y|X}(\cdot|i)}} \CKLD{V_{Y|X}(\cdot|X)}{W_{Y|X}(\cdot|X)}{P_X^{*}}\\
&> \max_{i \in {\cal X}}  \min_{\substack{ V_{Y|X}:\\ \sum_{j} V_{Y|X}(\cdot|j) P_X^{*}(j) =  W_{Y|X}(\cdot|i)}}  \KLD{\sum_{k} P_X^{*}(k) V_{Y|X}(\cdot|k)}{\sum_{k'} P_X^{*}(k') W_{Y|X}(\cdot|k')}\\
&= \max_{i \in {\cal X}}     \KLD{ W_{Y|X}(\cdot|i)}{  P_Y^{*}(\cdot)}\\
&= \CX
\end{align*}
where   $P_Y^{*}$ denotes the output distribution corresponding to the capacity achieving input distribution $P_X^{*}$ and  the last equality follows from KKT condition for achieving capacity we mentioned previously  \cite[Theorem 4.5.1]{gallager_book}.

Now we can compare our result for a special message with the similar result for classical situation where all messages are treated equally. It turns out that if every message in a capacity-achieving code demands equally good  false-alarm exponent, then this uniform exponent cannot be larger than $\CX$.  This result seems to be  directly connected with the problem of identification via channels \cite{channelid}. We can prove the achievability part of their capacity theorem using an extension of the achievability part of $\Efal$. Perhaps a new converse of their result is also possible using such results. Furthermore we see that reducing the demand of false-alarm exponent to only one message, instead of all, enhances it from $\CX$ to at least $\Efal$.

\subsection{Variable Length Block Codes with Feedback}
Recall that feedback does not improve the missed-detection exponent for a special message. On the contrary, the false-alarm exponent of a special message is improved  when feedback is available and variable decoding time is allowed. We again restrict to uniform delay capacity achieving sequences with feedback, i.e. capacity achieving sequences satisfying ${\displaystyle \lim_{\inx\rightarrow \infty} \Gamma^{(\inx)} =1}$.

\begin{definition}
   For a uniform delay capacity-achieving sequence with feedback, $\SCf$, such that
\begin{equation*}
 \limsup_{\inx \rightarrow \infty}   \SPCX{\inx}{\hat \mes \neq  1}{\mes= 1}=0,
 \end{equation*}
 false alarm exponent is defined as  
\begin{equation}
\notag \FEfa_{,\SCf} \DEF \liminf_{\inx \rightarrow \infty} \tfrac{ -\ln \SPCX{\inx}{\hat \mes= 1}{\mes\neq 1}}{\EX{\blx^{\inx}}}.
\end{equation}
Then $\FEfa$ is defined as $\FEfa \DEF\sup_{\SCf} \FEfa_{,\SCf}$.
\end{definition}

\begin{theorem}
\[\FEfa=\DX.\] \label{thm:faf}
\end{theorem}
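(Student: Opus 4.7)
The plan is to prove both inequalities.

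For the converse $\FEfa\leq\DX$, I recast the question as a binary hypothesis test with $H_0:\mes=1$ versus $H_1:\mes\neq 1$. The reliability requirement in the definition of $\FEfa$ forces the Type~I error $\alpha_\inx\DEF\SPCX{\inx}{\hat\mes\neq 1}{\mes=1}$ to vanish, while the Type~II error is exactly the false-alarm probability $\beta_\inx$. Applying the data-processing inequality for KL divergence to the binary post-processing $\{\hat\mes=1\}$ vs $\{\hat\mes\neq 1\}$ gives
\begin{equation*}
d(\alpha_\inx\|1-\beta_\inx)\leq\KLD{P_{Y^\tau,\tau\mid H_0}}{P_{Y^\tau,\tau\mid H_1}},
\end{equation*}
where $d(\cdot\|\cdot)$ is the binary KL divergence. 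Letting $\alpha_\inx\to 0$ yields $-\ln\beta_\inx\leq\KLD{P_{Y^\tau,\tau\mid H_0}}{P_{Y^\tau,\tau\mid H_1}}(1+o(1))$. Since $P_{Y^\tau,\tau\mid H_1}$ is a uniform mixture of the conditional distributions $P_{Y^\tau,\tau\mid\mes=j}$ over $j\neq 1$, convexity of $D(P\|\cdot)$ in its second argument yields $\KLD{P_{Y^\tau,\tau\mid H_0}}{P_{Y^\tau,\tau\mid H_1}}\leq\max_{j\neq 1}\KLD{P_{Y^\tau,\tau\mid\mes=1}}{P_{Y^\tau,\tau\mid\mes=j}}$. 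A Wald-type chain rule for the stopping time $\tau$ then shows that each channel use contributes at most $\DX$ to the expected cumulative log-likelihood ratio regardless of the feedback strategy, so $\KLD{P_{Y^\tau,\tau\mid\mes=1}}{P_{Y^\tau,\tau\mid\mes=j}}\leq\ECX{\tau}{\mes=1}\cdot\DX$. Dividing by $\EX{\tau}$ and invoking the uniform-delay hypothesis $\Gamma^{(\inx)}\to 1$ gives $\FEfa\leq\DX$.

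For the achievability $\FEfa\geq\DX$, I would construct an explicit two-phase variable-length protocol with retransmissions. Fix $\xa,\xd\in{\cal X}$ achieving $\DX=\KLD{W_{Y|X}(\cdot|\xa)}{W_{Y|X}(\cdot|\xd)}$, and a target parameter $\inx$ with $|{\cal M}|=\lceil e^{\inx\CX}\rceil$. In Phase~1 (length $\sqrt{\inx}$) the transmitter sends the indicator bit $b=\IND{\mes=1}$ using the repetition codeword $\xa^{\sqrt{\inx}}$ (if $b=1$) or $\xd^{\sqrt{\inx}}$ (if $b=0$); the receiver forms $\tilde b$ via a one-sided acceptance test that sets $\tilde b=1$ only when the empirical output distribution lies in a shrinking neighborhood of $W_{Y|X}(\cdot|\xa)$, so that $\PCX{\tilde b=1}{b=0}\doteq e^{-\sqrt{\inx}\DX}$ by Sanov, while $\PCX{\tilde b=0}{b=1}$ vanishes only polynomially (by concentration of the empirical type around $W_{Y|X}(\cdot|\xa)$). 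Phase~2 branches on $\tilde b$, which the transmitter knows through feedback: when $\tilde b=0$ the transmitter sends $\mes$ through a length-$\inx$ capacity-achieving codebook and the receiver performs ordinary ML decoding; when $\tilde b=1$ the transmitter sends $\xa^{\inx}$ if $\mes=1$ and $\xd^{\inx}$ if $\mes\neq 1$, and the receiver declares $\hat\mes=1$ iff the Phase~2 output type lies in the acceptance set of $W_{Y|X}(\cdot|\xa)$, otherwise declaring an erasure and restarting the whole protocol from scratch.

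A direct calculation verifies that the rate tends to $\CX$, the uniform delay $\Gamma^{(\inx)}$ tends to $1$ (both $\ECX{\tau}{\mes=1}$ and $\ECX{\tau}{\mes\neq 1}$ converge to $\inx+\sqrt{\inx}$), and missed-detection vanishes. False alarm can occur only when $\tilde b=1$ while $\mes\neq 1$ (probability $\doteq e^{-\sqrt{\inx}\DX}$) \emph{and} the Phase~2 output generated by $\xd^{\inx}$ lands in the acceptance set of $W_{Y|X}(\cdot|\xa)$ (conditional probability $\doteq e^{-\inx\DX}$ by Sanov), so the per-attempt false-alarm probability is $\doteq e^{-(\inx+\sqrt{\inx})\DX}$; summing over the geometric number of retransmissions preserves this exponent and yields $\FEfa\geq\DX$. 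The main technical obstacle is the converse bookkeeping: justifying the Wald-type stopping-time chain-rule inequality $\KLD{P_{Y^\tau,\tau\mid\mes=1}}{P_{Y^\tau,\tau\mid\mes=j}}\leq\ECX{\tau}{\mes=1}\cdot\DX$ rigorously for feedback codes whose branches terminate at different depths requires a careful decomposition over the receiver's $|{\cal Y}|$-ary observation tree and a dominated-convergence interchange of expectation and summation, in the spirit of \cite{burna2}. On the achievability side, the delicate point is ensuring that the cascade of retransmissions neither inflates $\EX{\tau}$ beyond $\inx+o(\inx)$ nor compounds the false-alarm probability, which reduces to controlling the per-attempt erasure probabilities in both conditional scenarios $\mes=1$ and $\mes\neq 1$.
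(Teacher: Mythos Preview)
Your converse is essentially the paper's: data-processing on the binary partition $\{\hat\mes=1\}$ versus its complement, followed by the stopping-time bound $\KLD{P_{Y^\tau\mid\mes=1}}{P_{Y^\tau\mid\mes\neq 1}}\le\DX\,\ECX{\tau}{\mes=1}$ (which the paper simply cites from \cite{burna2}), and finally the uniform-delay constraint. Your intermediate passage through $\max_{j\neq 1}$ via convexity is an unnecessary but harmless detour.

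Your achievability, however, has a real gap. In the $\tilde b=0$ branch you let the length-$\inx$ ML decoder run over the whole capacity-achieving codebook, and you then assert that ``false alarm can occur only when $\tilde b=1$.'' That is false: when $\mes\neq 1$ and $\tilde b=0$ (which happens with probability close to $1$), the ML decoder can output $\hat\mes=1$, and this \emph{is} a false alarm. For a block code without feedback the exponent of this event is at most $\Efau$, and the paper establishes $\Efau<\DX$ right after the statement of the theorem. Hence the first term in
\[
\PCX{\hat\mes=1}{\mes\neq 1}=\PCX{\hat\mes=1,\tilde b=0}{\mes\neq 1}+\PCX{\hat\mes=1,\tilde b=1}{\mes\neq 1}
\]
dominates and caps your overall exponent strictly below $\DX$.

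The paper's fix is simple but essential: in the $\hat b=0$ branch it uses a capacity-achieving inner code in which message~$1$ is assigned the special codeword (so that $\Emd_{,\SC'}=\PCAP$), and whenever the inner decoder outputs $1$ it declares an \emph{erasure} for the whole block rather than $\hat\mes=1$. This guarantees that $\hat\mes=1$ can only be produced in the $\hat b=1$ branch, so the false-alarm analysis really does reduce to the Sanov event you describe. The erasure rate stays $o(1)$ because (i) when $\mes=1$ the event $\hat b=0$ already has probability $\le e^{-\mu\sqrt\inx}$, and (ii) when $\mes\neq 1$ the inner decoder outputs $1$ with probability at most $\tfrac{|{\cal M}|}{|{\cal M}|-1}\Pe'^{(\inx)}\to 0$. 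Adding exactly this erasure clause to your $\tilde b=0$ branch repairs the argument.
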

Note that $\DX > \Efau$. Thus  feedback strictly improves the false alarm exponent, $\FEfa > \Efa$.

\begin{optstr}
We  use a  strategy similar to the one employed  in proving Theorem \ref{thm:msru} in  subsection \ref{subsec:mdr}. In the first phase, a length  $\sqrt{\inx}$ code is used to convey whether $\mes=1$ or not, using a special bit $b=\IND{\mes=1}$.
  \begin{itemize}
  \item  If $\hat{b}=0$, a length $\inx$ capacity achieving code with $\Emd=\PCAP$ is used. If the decoded message for the length $\inx$ code is $1$, an erasure is declared for $(\inx+\sqrt{\inx})$ long block. Else the decoded message of length $\inx$ code becomes the decoded message for the whole $(\inx+\sqrt{\inx})$ long block.
  \item  If $\hat{b}=1$,
    \begin{itemize}
    \item and $\mes=1$, input symbol $\xa$ is transmitted for $\inx$ time units.
    \item and $\mes\neq1$, input symbol $\xd$ is transmitted for $\inx$ time units.
    \end{itemize}
  If the output sequence, $Y_{\sqrt{\inx}+1}^{\sqrt{\inx}+\inx}$,  is typical with $W_{Y|X}(\cdot|\xa)$ then $\hat{\mes}=1$ else an erasure is declared for $(\inx+\sqrt{\inx})$ long block.
  \end{itemize}
Receiver and transmitter starts from scratch if an erasure is declared at the end of second phase. 
\end{optstr}

Note that, this strategy simultaneously achieves the optimal missed-detection exponent $\PCAP$ and the optimal false-alarm exponent $\DX$ for this special message.

\section{Future directions}
\label{sec:summary}

In this paper we have restricted our investigation of \uep~ problems to data rates that are essentially equal to the channel capacity. Scenarios we have analyzed provides us with a rich class of problems when we consider data rates below capacity. 

Most of the \uep~ problems has a coding theoretic version. In these coding theoretic versions deterministic guarantees, in terms of Hamming distances, are demanded instead of the probabilistic guarantees, in terms of error exponents.  As we have mentioned in section \ref{sec:prev}, coding theoretic versions of bit-wise \uep~ problems have been studied for the case of linear codes extensively. But it seems coding theoretic versions of both message-wise \uep~ problems and bit-wise \uep~ problem for non-linear codes are scarcely investigated \cite{bas}, \cite{bd09}.

Throughout this paper, we focused on the channel coding component of communication. However, often times, the final objective is to communicate a source within some distortion constraint. Message-wise \uep~ problem itself has first come up within this framework \cite{csiszar1}.  But the source we are trying to convey can itself be heterogeneous, in the sense that  some part of its output may demand a smaller distortion than other parts. Understanding optimal methods for communicating such sources over noisy channels present many novel joint-source channel coding problems.

At times the final objective of communication is achieving some coordination between various agents \cite{paul}. In these scenarios channel is used for both communicating data and achieving coordination. A new class of problem lends itself to us when we try to figure out the  tradeoffs between error exponents of the coordination and data?

We can also actively use \uep~ in network protocols. For example, a relay can forward some partial information even if it cannot decode everything. This partial information could be characterized in terms of special bits as well as special messages. Another example is two-way communication, where \uep~ can be used for more reliable feedback and synchronization.

Information theoretic understanding of \uep~ also gives rise to some network optimization problems. With \uep, the interface to physical layer is no longer bits. Instead, it is a collection of various levels of error protection. The achievable channel resources of reliability and rate need to be efficiently divided amongst these levels, which gives rise to  many resource allocation problems.

\section{Block Codes without Feedback: Proofs}\label{sec:proofs}
In the following sections, we use the following standard notation for entropy, conditional entropy and mutual information,
\begin{align*}
  H(P_X)
&=~~\sum_{j\in {\cal X}} P_X(j) \ln \tfrac{1}{P_X(j)}\\
H(W_{Y|X}|P_X)
&=\sum_{j\in {\cal X}, k \in {\cal Y}} P_X(j)  W_{Y|X}(k|j) \ln \tfrac{1}{W_{Y|X}(k|j)}\\
I(P,W)
&=\sum_{j\in {\cal X}, k \in {\cal Y}} P_X(j)  W_{Y|X}(k|j) \ln  \tfrac{W_{Y|X}(k|j)}{\sum_{i \in {\cal X}}W_{Y|X}(k|i) P_X(i)}.
\end{align*}
In addition we denote the decoding region of a message $i \in {\cal M}$ by $\DEC{i}$, i.e.
\begin{equation*}
  \DEC{i}\DEF \{y^{\BLX}: \hat{\mes}(y^{\BLX})=i\}.
\end{equation*}
 \subsection{Proof of Theorem \ref{thm:bit}}\label{sec:proofthmone}
\begin{proof}
We first show that any capacity achieving sequence $\SC$ with $\Eb_{,\SC}$ can be used to construct another  capacity achieving sequence, $\SC'$ with $\Eb_{,\SC'}=\tfrac{\Eb_{,\SC}}{2}$,  all members of which are fixed composition codes. Then we show that $\Eb_{,\SC'}=0$ for any capacity achieving sequence, $\SC'$ which only includes fixed composition codes.  

Consider a capacity achieving sequence, $\SC$ with message sets ${\cal M}^{(\BLX)}= {\cal M}_1 \times {\cal M}_2^{(\BLX)}$, where ${\cal M}_1=\{0,1\}$. As a result of Markov inequality, at least $\tfrac{4}{5}|{\cal M}^{(\BLX)}|$ of the messages in ${\cal M}^{(\BLX)}$  satisfy,
\begin{equation}
\label{eq:uniformer1}
  \PCX{\hat{\mes_1}\neq \mes_1}{\mes=i}  \leq 5 \PX{\hat{\mes_1}\neq \mes_1}.
\end{equation}
Similarly at least   $\tfrac{4}{5}|{\cal M}^{(\BLX)}|$ of the messages in ${\cal M}^{(\BLX)}$ satisfy,
\begin{equation}
\label{eq:uniformer2}
  \PCX{\hat{\mes}\neq \mes}{\mes=i}\leq 5  \PX{\hat{\mes}\neq \mes}.
\end{equation}

Thus at least  $\tfrac{3}{5}|{\cal M}^{(\BLX)}|$ of the messages in ${\cal M}^{(\BLX)}$ satisfy both (\ref{eq:uniformer1}) and (\ref{eq:uniformer2}). Consequently at least $\tfrac{1}{10}|{\cal M}^{(\BLX)}|$  messages are of the form $(0, \mes_2)$ and satisfy equations (\ref{eq:uniformer1}) and (\ref{eq:uniformer2}).  If we group them  according to their empirical distribution at least one of the groups will have more than $\tfrac{|{\cal M}^{(\BLX)}|}{10 (\BLX+1)^{|{\cal X}|}}$ messages because the number of different empirical distributions for elements of ${\cal X}^\BLX$ is less than  $(\BLX+1)^{|{\cal X}|}$. We keep the first $\tfrac{|{\cal M}^{(\BLX)}|}{10 (\BLX+1)^{|{\cal X}|}}$  codewords of this most populous type, denote them by $\fix_A(\cdot)$ and throw away all of other codeword corresponding to the messages of the form $(0,\mes_2)$. We  do the same for the messages of the form $\mes=(1, \mes_2)$ and denote corresponding codewords by $\fix_B(\cdot)$. 

Thus we  have a length $\BLX$ code  with message set ${\cal M}'$ of the form  ${\cal M}'={\cal M}_1 \times {\cal M}_2'$ where  ${\cal M}_1=\{0,1\}$ and $|{\cal M}_2'|=\tfrac{|{\cal M}_2'|}{{10 (\BLX+1)^{|{\cal X}|}}}$. Furthermore,
\begin{equation*}
  \PCX{\hat{\mes}_1'\neq \mes_1'}{\mes'=i}  \leq 5 \PX{\hat{\mes}_1\neq \mes_1}  \qquad 
  \PCX{\hat{\mes}'\neq \mes'}{\mes'=i}\leq 5  \PX{\hat{\mes}\neq \mes} \quad \forall i \in {\cal M}'.
\end{equation*}

 Now let us  consider following  $2 \BLX$ long block code with message set ${\cal M}''={\cal M}_1 \times {\cal M}_2''\times {\cal M}_3''$  where ${\cal M}_2''={\cal M}_3''={\cal M}_2'$.  If $\mes''=(0,\mes_2'',\mes_3'')$ then $\brx(\mes'')= \fix_A(\mes_2'') \fix_B(\mes_3'')$. If $\mes''=(1,\mes_2'',\mes_3'')$ then $\brx(\mes'')= \fix_B(\mes_2'') \fix_A(\mes_3'')$. Decoder of this new length $2\BLX$ code  uses the decoder of the original length $\BLX$ code first on  $y^{\BLX}$ and then on $y_{\BLX+1}^{2\BLX}$. If the concatenation of length $\BLX$ codewords corresponding to the decoded halves,  is a codeword for an  $i \in {\cal M}''$ then $\hat{\mes}''=i$. Else an arbitrary message is decoded. One can easily see that the error probability of the length $2\BLX$ code is  less than the twice  the error probability of the length $\BLX$ code, i.e.
\begin{align*}
  \PCX{\hat{\mes}''\neq \mes''}{\mes''}
& \leq 1 - (1-  \PCX{\hat{\mes}'\neq \mes'}{\mes'=\mes_2''}) (1-  \PCX{\hat{\mes}'\neq \mes'}{\mes'=\mes_3''})\\
& \leq 2  \PX{\hat{\mes}'\neq \mes'}.
  \end{align*}
 Furthermore bit error probability of the new code is also at most twice the bit error probability of the length $\BLX$ code, i.e.
\begin{align*}    
\PCX{\hat{\mes}_1''\neq \mes_1''}{\mes_1''}
& \leq 1 - (1-  \PCX{\hat{\mes}_1'\neq \mes_1'}{\mes_1'=\mes_1''}) (1-  \PCX{\hat{\mes}_1'\neq \mes_1'}{\mes_1'=\mes_1''})\\
& \leq 2  \PX{\hat{\mes}_1' \neq \mes_1'}
  \end{align*}
Thus using these codes one can obtain a capacity achieving sequence $\SC'$ with $\Eb_{,\SC'} =\tfrac{\Eb_{,\SC}}{2}$ all members of which are fixed composition codes.

In the following discussion we focus on capacity achieving sequences, $\SC$'s  which are composed of fixed composition codes only.  We will show that $\Eb_{,\SC}=0$ for all capacity achieving  $\SC$'s  with fixed composition codes.  Consequently the discussion above implies that $\Eb=0$. 

We  call the empirical distribution of a given output sequence, $y^{\BLX}$, conditioned on the code word, $\brx(i)$,  the  conditional type of $y^{\BLX}$ given the message $i$ and denote it by $\ctype{y^{\BLX}}{i}$. Furthermore we call the set of $y^{\BLX}$'s whose conditional type with message $i$ is $V$,  the $V$-shell of $i$ and denote it by $\shell{V}{i}$.  Similarly we  denote the  set of output  sequences $y^{\BLX}$ with the empirical distribution $U_Y$, by $\shelly{U_Y}$.

 We  denote the empirical distribution of the codewords of the $\BLX^{th}$ code of the sequence by $P_X^{(\BLX)}$ and the corresponding output distribution by $P_{Y}^{(\BLX)}$, i.e. %
\begin{equation*}
  P_Y^{(\BLX)}(\cdot)=\sum_{i \in {\cal X}} W_{Y|X}(\cdot|i) P_X^{(\BLX)}(i).
\end{equation*}
We simply use  $P_X$ and $P_Y$ whenever the value of  $\BLX$ is unambiguous from the context.  Furthermore $\PYIID{\cdot}$ stands for the probability measure on ${\cal Y}^{\BLX}$ such that
 \begin{equation*}
   \PYIID{y^{\BLX}} =\prod_{k=1}^{\BLX} P_Y (y_k).
 \end{equation*}
$\Dshell{0}$ is  the set of $y^{\BLX}$'s for which  $\hat{\mes}_1=0$ and $\ctype{y^{\BLX}}{\hat{\mes}(y^{\BLX})}=V$.
\begin{equation}
\Dshell{0} \DEF \{y^{\BLX} : \ctype{y^{\BLX}}{\hat{\mes}(y^{\BLX})}=V \mbox{ and } \hat{\mes} (y^{\BLX})=(0,j) \mbox{ for some } j \in {\cal M}_2 \}
\end{equation}
In other words, $\Dshell{0}$ is the set of $y^{\BLX}$'s such that $y^{\BLX} \in \shell{V}{ \hat{\mes}(y^{\BLX})}$ and decoded value of the first bit is zero. Note that since for each $y^{\BLX} \in {\cal Y}^{\BLX}$ there is a unique $\hat{\mes}(y^{\BLX})$ and  for each $y^{\BLX} \in {\cal Y}^{\BLX}$ and message $i \in {\cal M}$ there is unique $\ctype{y^{\BLX}}{i}$; each $y^{\BLX}$ belongs to a unique $\Dshell{0}$  or  $\Dshell{1}$, i.e. $\Dshell{0}$'s and $\Dshell{1}$'s are disjoint sets that collectively cover the set ${\cal Y}^{\BLX}$.  

Let us define the typical neighborhood of $W_{Y|X}$ as $[W]$
\begin{equation}
\label{eq:wbradef}
  [W] \DEF\{V_{Y|X}:  |V_{Y|X}(j|i)P_X^{(\BLX)}(i) - W_{Y|X}(j|i)P_X^{(\BLX)}(i)| \leq \sqrt[4]{1/\BLX} \quad \forall i,j \}
\end{equation}
Let us denote the union of all $\Dshell{0}$'s for typical $V$'s by $\DshellU{0} =\displaystyle{  \bigcup_{\substack{V \in [W]}} \Dshell{0}}$. We will establish the following inequality later.  Let us assume for the moment that it holds.
\begin{equation}
  \label{eq:aux-lem}
  \PYIID{\DshellU{0}} \geq e^{\BLX(R^{(\BLX)} -(\CX +\epsilon_{\BLX}))} \left( \tfrac{1}{2}- \tfrac{|{\cal X}||{\cal Y}|}{8 \sqrt{\BLX}} - \Pe \right)
\end{equation}
where $ \displaystyle{\lim_{\BLX \rightarrow \infty} \epsilon_{\BLX}=0}$.

As a result of bound given in (\ref{eq:aux-lem}) and the  blowing up lemma \cite[Ch. 1, Lemma 5.4]{CK}, we can conclude that for any capacity achieving sequence $\SC$, there exists a sequence of $(\ell_{\BLX}, \eta_{\BLX})$ pairs satisfying $\displaystyle{\lim_{\BLX \rightarrow \infty } \eta_{\BLX}=1}$  and $\displaystyle{\lim_{\BLX \rightarrow \infty} \tfrac{\ell_{\BLX}}{\BLX}=0}$ such that
\begin{equation*}
\PYIID{\Gamma^{\ell_{\BLX}} (\DshellU{0})} \geq \eta_{\BLX}  
\end{equation*}
where $\Gamma^{\ell_{\BLX}}(A)$ is the  set of all $y^{\BLX}$'s  which differs from an element of  $A$ in at most $\ell_{\BLX}$ places.   Clearly one can repeat the same argument for $\Gamma^{\ell_{\BLX}} ( \DshellU{1})$ to get,
\begin{equation*}
\PYIID{\Gamma^{\ell_{\BLX}} (\DshellU{1})} \geq \eta_{\BLX}.  
\end{equation*}
Consequently, 
\begin{align*}
\PYIID{\Gamma^{\ell_{\BLX}} (\DshellU{0})  \bigcap  \Gamma^{\ell_{\BLX}} (\DshellU{1})}
&=
\PYIID{\Gamma^{\ell_{\BLX}} (\DshellU{0})}+ \PYIID{\Gamma^{\ell_{\BLX}} (\DshellU{1})}
-\PYIID{\Gamma^{\ell_{\BLX}} (\DshellU{0})  \bigcup  \Gamma^{\ell_{\BLX}} (\DshellU{1})}
\\
\PYIID{\Gamma^{\ell_{\BLX}} (\DshellU{0})  \bigcap  \Gamma^{\ell_{\BLX}} (\DshellU{1})} 
&\geq 2\eta_{\BLX}-1.
\end{align*}
Note that if  $y^{\BLX} \in \Gamma^{\ell_{\BLX}} (\DshellU{1})$, then there exist  at least one element  $\tilde{y}^{\BLX} \in \shelly{P_Y}$ which differs from $y^{\BLX}$  in at most $(|{\cal Y}||{\cal X}| \BLX^{3/4}+ \ell_{\BLX} )$ places.\footnote{Because of the integer constraints  $\shelly{P_Y}$ might actually be an empty set. If so we can  make a similar argument for the $U_Y^*$ which minimizes $\sum_{j} |U_Y (j) - P_Y (j)|$. However this technicality is inconsequential.} Thus we can upper bound its probability by,
\begin{equation*}
 y^{\BLX} \in \Gamma^{\ell_{\BLX}} (\DshellU{1}) \Rightarrow \PYIID{y^{\BLX}} \leq e^{-\BLX H(P_Y) - (|{\cal Y}||{\cal X}|\BLX^{3/4}+ \ell_{\BLX} ) \ln \lambda}
\end{equation*}
where $\lambda=\min_{i,j} W_{Y|X}(j|i)$. Thus we have 
\begin{equation}
\label{eq:worstsize}
|\Gamma^{\ell_{\BLX}} (\DshellU{0})  \bigcap  \Gamma^{\ell_{\BLX}} (\DshellU{1})| \geq (2\eta_{\BLX}-1)  e^{\BLX H(P_Y) + (|{\cal Y}||{\cal X}| {\BLX^{3/4}}+ \ell_{\BLX} ) \ln \lambda}. 
\end{equation}
Note that for any $y^{\BLX} \in \Gamma^{\ell_{\BLX}} (\DshellU{0})  \bigcap  \Gamma^{\ell_{\BLX}} (\DshellU{1})$,  there exist a $\tilde{y}^{\BLX} \in \shell{W}{i}$ for an $i$ of the form $i=(0,\mes_2)$ which differs from $y^{\BLX}$ in at most $ (|{\cal Y}||{\cal X}| {\BLX^{3/4}}+ \ell_{\BLX} )$ places.\footnote{Integer constraints here are inconsequential too.} Consequently
\begin{equation}
\label{eq:yprobx}
  \PCX{y^{\BLX}}{\mes=i}\geq e^{-\BLX H(W_{Y|X}|P_X)+ (|{\cal Y}||{\cal X}|{\BLX^{3/4}}+ \ell_{\BLX} ) \ln \lambda}.
\end{equation}
Since ${\cal M}_2=\tfrac{e^{\BLX R^{(\BLX)}}}{2}$ using equation (\ref{eq:yprobx}) we can lower bound the probability of $y^{\BLX}$ under the hypothesis $\mes_1=0$ as follows,
\begin{align}
  \notag
\PCX{y^{\BLX}}{\mes_1=0}
&=\sum_{j \in {\cal M}_2}  \PCX{y^{\BLX}}{\mes=(0,j)} \PCX{\mes=(0,j)}{\mes_1=0}\\
&\geq 2 e^{-\BLX (H(W_{Y|X}|P_X)+R^{(\BLX)})+ (|{\cal Y}||{\cal X}|{\BLX^{3/4}}+ \ell_{\BLX} ) \ln \lambda}. 
\label{eq:worstprob1}
\end{align}
Clearly same holds for $\mes_1=1$ too, thus
\begin{equation}
\label{eq:worstprob2}
  \PCX{y^{\BLX}}{\mes_1=1}\geq 2 e^{-\BLX (H(W_{Y|X}|P_X)+R^{(\BLX)})+ (|{\cal Y}||{\cal X}|{\BLX^{3/4}}+ \ell_{\BLX} ) \ln \lambda} .
\end{equation}
Consequently,
\begin{align} 
\notag
\PX{\hat{\mes_1} \neq \mes_1}
&\geq 
\sum_{y^{\BLX}} \tfrac{1}{2} \min (\PCX{y^{\BLX}}{\mes_1=0}, \PCX{y^{\BLX}}{\mes_1=1})\\ 
\notag
&\mathop{\geq}\limits^{(a)} 
 \sum_{y^{\BLX} \in \Gamma^{\ell_{\BLX}} (\DshellU{0})  \bigcap  \Gamma^{\ell_{\BLX}} (\DshellU{1})}
e^{-\BLX (H(W_{Y|X}|P_X)+R^{(\BLX)})+ (|{\cal Y}||{\cal X}| {\BLX^{3/4}}+ \ell_{\BLX} ) \ln \lambda} \\ 
\notag
&\mathop{\geq}\limits^{(b)}  
 (2\eta_{\BLX}-1)  e^{\BLX H(P_Y) + (|{\cal Y}||{\cal X}|{\BLX^{3/4}}+ \ell_{\BLX} ) \ln \lambda} 
e^{-\BLX (H(W_{Y|X}|P_X)+R^{(\BLX)})+ (|{\cal Y}||{\cal X}| {\BLX^{3/4}}+ \ell_{\BLX} ) \ln \lambda} \\ 
\label{eq:rewexp1}
&=(2\eta_{\BLX}-1)  e^{\BLX (I(P_X,W) -R^{(\BLX)}) + 2(|{\cal Y}||{\cal X}|{\BLX^{3/4}}+ \ell_{\BLX} ) \ln \lambda}
\end{align}
where $(a)$ follows from  equations (\ref{eq:worstprob1}) and (\ref{eq:worstprob2}) and $(b)$ follows from equation (\ref{eq:worstsize}).

Using Fano's inequality we get,
\begin{equation}
  \label{eq:rewexp2}
  \MI{\mes}{Y^{\BLX}}-\BLX R^{(\BLX)} \geq -\ln 2  -\BLX R^{(\BLX)} \Pe^{(\BLX)}
\end{equation}
where $  \MI{\mes}{Y^{\BLX}}$ is the mutual information between the message $\mes$ and channel output $Y^{\BLX}$. In addition we can upper bound $  \MI{\mes}{Y^{\BLX}}$ as follows,
\begin{align}
 \MI{\mes}{Y^{\BLX}} 
\notag
&=\sum_{i\in {\cal M}, y^{\BLX} \in {\cal Y}^{\BLX}}
\PX{i, y^{\BLX}} \ln \tfrac{\PCX{y^{\BLX}}{i}}{\PX{y^{\BLX}}}\\
\notag
&=\sum_{i\in {\cal M}, y^{\BLX} \in {\cal Y}^{\BLX}}
\PX{i, y^{\BLX}} \ln \tfrac{\PCX{y^{\BLX}}{i}}{\prod_{k=1}^{\BLX} P_Y(y_k)}
-\sum_{y^{\BLX} \in {\cal Y}^{\BLX}}
\PX{y^{\BLX}} \ln \tfrac{\PX{y^{\BLX}}}{\prod_{k=1}^{\BLX} P_Y(y_k)}\\
\notag
&\mathop{\leq}^{(a)}
\sum_{i\in {\cal M}} \tfrac{1}{|{\cal M}|} \sum_{k=1}^{\BLX} \sum_{y_k}
W_{Y|X}(y_k|\bx_{k}(i)) \ln \tfrac{W_{Y|X}(y_k|\bx_{k}(i))}{ P_Y(y_k)}\\
\label{eq:rewexp3}
&\mathop{=}^{(a)}\BLX I(P_X,W)
\end{align}
where $P_Y(\cdot)=\sum_{j \in {\cal X}}W_{Y|X}(\cdot) P_X(j)$. Step $(a)$ follows the non-negativity of KL divergence and step $(b)$ follows from the fact that all the code words are of type $P_X(\cdot)$.

Using equations (\ref{eq:rewexp1}), (\ref{eq:rewexp2}) and (\ref{eq:rewexp3})   we get
\begin{equation*}
  \PX{\hat{\mes_1} \neq \mes_1} \geq (2\eta_{\BLX}-1)  e^{-\ln 2  -\BLX R^{(\BLX)} \Pe^{(\BLX)} + 2(|{\cal Y}||{\cal X}|{\BLX^{3/4}}+ \ell_{\BLX} ) \ln \lambda}
\end{equation*}
Thus using  
$\displaystyle{\lim_{\BLX \rightarrow \infty} \Pe^{(\BLX)}=0}$, 
$\displaystyle{\lim_{\BLX \rightarrow \infty} \eta_{\BLX}=1}$ and 
$\displaystyle{\lim_{\BLX \rightarrow \infty} \tfrac{\ell_{\BLX}}{\BLX}=0}$
  we conclude that,
\begin{equation*}
  \lim_{\BLX \rightarrow \infty} \tfrac{ -\ln \SPX{\BLX}{\hat{\mes_1} \neq \mes_1}}{\BLX}=0
\end{equation*}
Now only think left, for proving $\Eb=0$, is  to establish inequality (\ref{eq:aux-lem}). One can write the error probability of the $\BLX^{th}$ code of $\SC$ as 
\begin{align}
  \Pe^{(\BLX)}
\notag 
&=\sum_{i \in {\cal M}^{(\BLX)}} \tfrac{1}{\cal M} \sum_{y^{\BLX} \in {\cal Y}^{\BLX}} (1-\IND{\hat{\mes}(y^{\BLX})={i}}) \PCX{y^{\BLX}}{\mes=i} \\  
\notag
&=\sum_{i \in {\cal M}^{(\BLX)}} e^{-\BLX R^{(\BLX)}}   \sum_{V} \sum_{y^{\BLX} \in \shell{V}{i}}  (1-\IND{\hat{\mes}(y^{\BLX})={i}}) e^{-\BLX(\CKLD{V_{Y|X}(\cdot|X)}{W_{Y|X}(\cdot|X)}{P_X}+H(V_{Y|X}|P_{X}))  } \\  
\notag  
&=\sum_{V}  e^{-\BLX(\CKLD{V_{Y|X}(\cdot|X)}{W_{Y|X}(\cdot|X)}{P_X}+H(V_{Y|X}|P_{X}) +R^{(\BLX)})  } \sum_{i \in {\cal M}^{(\BLX)}}  \sum_{y^{\BLX} \in \shell{V}{i}}   (1-\IND{\hat{\mes}(y^{\BLX})={i}})  \\
\label{eq:overcount}
&=\sum_{V}  e^{-\BLX(\CKLD{V_{Y|X}(\cdot|X)}{W_{Y|X}(\cdot|X)}{P_X}+H(V_{Y|X}|P_{X}) +R^{(\BLX)})  } (Q_{0,V}+Q_{1,V})
\end{align}
where  $Q_{k,V}=\displaystyle{\sum_{\substack{i=(k,j)\\ j \in {\cal M}_2}}   \sum_{y^{\BLX} \in \shell{V}{i}}  (1-\IND{\hat{\mes}(y^{\BLX})={i}})}$ for $k=0,1$.

Note that $Q_{k,V}$ is the sum, over the messages $i$ for which $\mes_1=k$,  of the number of the elements in $\shell{V}{i}$ that are not decoded to  message $i$. In a sense it is a measure of the contribution of  the $V$-shells of different codewords to the error probability. We will use equation (\ref{eq:overcount}) to establish lower bounds on $\PYIID{\Dshell{0}}$'s.

 Note that all elements of $\Dshell{0}$ have the same probability under $\PYIID{\cdot}$ and  
\begin{equation}
\label{eq:prbone}
  \PYIID{\Dshell{0}}=|\Dshell{0}|  e^{-\zeta \BLX} \qquad \mbox{ where } \quad \zeta=\sum_{x,y} P_X(x) V_{Y|X} (y|x)  \ln \tfrac{1}{P_Y(y)}.
\end{equation}
Note that
\begin{align*}
\zeta&=\sum_{x,y} P_X(x) V_{Y|X} (y|x)  \ln \tfrac{W_{Y|X}(y|x)}{P_Y(y)}+\sum_{x,y} P_X(x) V_{Y|X}(y|x)  \ln \tfrac{1}{W_{Y|X}(y|x)}\\
     &= I(P_X,W_{Y|X}) +\CKLD{ V_{Y|X}(\cdot|X)}{W_{Y|X}(\cdot|X)}{P_X}+H(V_{Y|X}|P_X)\\
&\qquad \hspace{4cm} +{ \small \sum_{x,y} P_X(x) (V_{Y|X} (y|x)-W_{Y|X}(y|x))  \ln \tfrac{W_{Y|X}(y|x)}{P_Y(y)}}
\end{align*}
Recall that $I(P_X,W_{Y|X}) \leq \CX$ and $\displaystyle{\min_{i,j}W_{Y|X}(i|j)=\lambda}$. Thus using the definition of $[W_{Y|X}]$ given in equation (\ref{eq:wbradef}) we get,
\begin{equation}
\label{eq:prbtwo}
\zeta
\leq \CX+\epsilon_{\BLX} +\CKLD{ V_{Y|X}(\cdot|X)}{W_{Y|X}(\cdot|X)}{P_X}+H(V_{Y|X}|P_X) \qquad \forall V_{Y|X} \in [W_{Y|X}]
\end{equation}
where $\epsilon_{\BLX}=\tfrac{|{\cal X}||{\cal Y}|}{\sqrt[4]{\BLX}} \ln \tfrac{1}{\lambda}$.

Note that
\begin{equation}
\label{eq:prbthree}
|\Dshell{0}|=|{\cal M}_2^{(\BLX)}| \cdot |\shell{V}{i}| - Q_{0,V} =\tfrac{1}{2}|\shell{V}{i}| e^{\BLX R^{(\BLX)}}- Q_{0,V}.
\end{equation}
Recalling that $\Dshell{0}$'s are disjoint and  using equations (\ref{eq:prbone}), (\ref{eq:prbtwo}) and (\ref{eq:prbthree}) we get
\begin{align}
\PYIID{\DshellU{0}}
\notag &\geq \sum_{V \in  [W] } \PYIID{\Dshell{0}}\\
\notag &\geq \sum_{V \in  [W] } e^{-\BLX (\CX +\epsilon_{\BLX})} \left(\tfrac{1}{2}|\shell{V}{i}| e^{\BLX R^{(\BLX)}}- Q_{0,V} \right)
 e^{-\BLX(\CKLD{V_{Y|X}(\cdot|X)}{W_{Y|X}(\cdot|X)}{P_X}+H(V_{Y|X}|P_{X}))}\\
\notag & \mathop{\geq}\limits^{(a)} 
 e^{\BLX(R^{(\BLX)} -(\CX +\epsilon_{\BLX}))} \left( \sum_{V \in  [W]} \tfrac{1}{2}|\shell{V}{i}|  e^{-\BLX(\CKLD{V_{Y|X}(\cdot|X)}{W_{Y|X}(\cdot|X)}{P_X}+H(V_{Y|X}|P_{X}))}  - \Pe \right) \\
\notag & = e^{\BLX(R^{(\BLX)}-(\CX +\epsilon_{\BLX}))} \left( \tfrac{1}{2}  \sum_{V \in  [W] } \sum_{y^{\BLX} \in \shell{V}{i}} \PCX{y^{\BLX}}{\mes =i}   - \Pe \right) \\
\notag & \mathop{\geq}\limits^{(b)} e^{\BLX(R^{(\BLX)} -(\CX +\epsilon_{\BLX}))} \left( \tfrac{1}{2}- \tfrac{|{\cal X}||{\cal Y}|}{8 \sqrt{\BLX}} - \Pe \right)
\end{align}
where $(a)$ follows the equation (\ref{eq:overcount}) and $(b)$ follows from the Chebyshev's inequality.\footnote{The claim in $(b)$  is identical to the one in \cite{CK}[Remark on page 34]}  
 \end{proof}

\subsection{Proof of Theorem \ref{thm:md}}
\subsubsection{Achievability:  $\Emd\ge\PCAP$}~

\begin{proof}
For each block length $\BLX$,  the special message is sent with the length $\BLX$ repetition sequence $\bx^{\BLX}(1)=(\xr,\xr\cdots,\xr)$ where $x_r$ is the input letter satisfying
\begin{equation*}
\KLD{P_{Y}^{*}(\cdot)}{W_{Y|X}(\cdot|x_r)}= \max_{i}\KLD{P_{Y}^{*}(\cdot)}{W_{Y|X}(\cdot|i)}.
\end{equation*}
The remaining $|{\cal M}|-1$ ordinary codewords are generated  randomly and independently of each other using capacity achieving input distribution $P_X^*$ i.i.d. over time.

Let us denote the empirical distribution of  a particular output sequence $y^{\BLX}$ by $\typey{y^{\BLX}}$. The receiver decodes to the special message only when the output distribution is not close to $P_Y^{*}$. Being more precise,  the set of output sequences close to $P_Y^{*}$,  $[P_Y^{*}]$, and decoding region of the special message, $\DEC{1}$, are given as follows,
  \begin{align*}
 [P_Y^{*}]
&=\{P_Y(\cdot):  \lVert P_Y(i)- P_{Y}^{*} (i) \rVert \leq  \sqrt[4]{1/\BLX}  \quad   \forall i \in {\cal Y} \}&  \DEC{1}&=\{y^{\BLX}:  \typey{y^{\BLX}} \in  [P_Y^{*}] \}.
  \end{align*}
 Since there are at most $(\BLX+1)^{|{\cal Y}|}$ different empirical output distribution for elements of ${\cal Y}^{\BLX}$ we get,
 \begin{equation*}
\SPCX{\BLX}{y^{\BLX} \notin \DEC{1}}{\mes=1} \leq (\BLX+1)^{|{\cal Y}|} e^{-\BLX \min_{Q_Y \in [P_Y^{*}]} \KLD{Q_{Y}(\cdot)}{W_{Y|X}(\cdot|\xr) }}
\end{equation*}
Thus $
\displaystyle{\lim_{\BLX \rightarrow \infty} \tfrac{- \ln \SPCX{\BLX}{y^{\BLX} \notin \DEC{1}}{\mes=1} }{\BLX}= \KLD{P_{Y}^{*}(\cdot)}{W_{Y|X}(\cdot|\xr) } =\PCAP}$.

Now the only thing we are left with to prove is that we can have low enough probability for the remaining messages. For doing that we will first calculate the average error probability of  the following  random code ensemble. 

Entries  of the codebook, other than the ones corresponding to the special message, are generated independently  using a capacity achieving input distribution $P_X^{*}$. Because of the symmetry  average error probability is same for all $i \neq 1 $ in ${\cal M}$. Let us calculate the error probability of the message $\mes=2$.

Assuming that the  second message  was transmitted, $\PCX{y^{\BLX} \in \DEC{1}}{\mes =2}$ is  vanishingly small. It is because, the output distribution for the random ensemble for ordinary codewords is  i.i.d. $P_Y^*$. Chebyshev's inequality guarantees that probability of the output type being outside a $\sqrt[4]{1/\BLX}$ ball around $P_Y^*$, i.e. $[P_Y^{*}]$,  is of the order $\sqrt{1/\BLX}$.

Assuming that the  second message  was transmitted, $\PCX{y^{\BLX} \in \cup_{i >2}\DEC{i}}{\mes =2}$ is vanishingly small  due to the standard random coding argument for achieving capacity \cite{shannon}.

Thus for any $\Pe>0$ for all large enough $\BLX$ average error probability of the code ensemble is smaller than $\Pe$ thus we  have at least one code with that $\Pe$. For that code at least half of the codewords have an error probability less then $2\Pe$.  
\end{proof}

\subsubsection{Converse:  $\Emd \leq \PCAP$}
In the  section \ref{subsec:fmdconverse}, we will prove that even with feedback and variable decoding time, the missed-detection exponent of a single special message is at most $\PCAP$. Thus $\Emd \leq \PCAP$.

\subsection{Proof of Theorem \ref{thm:mdmany}}
\subsubsection{Achievability:  $\Emdr \geq  E(r) $}~

\begin{proof}
{\noindent{\bf{Special codewords:}}} At any given block length 
$\BLX$, we start with a optimum codebook (say ${\cal C}_{\textrm special}$) for $\lceil e^{\BLX r}\rceil$ messages. Such optimum  codebook achieves error exponent $E(r)$ for every message in it. 
\begin{equation*}
\PX{\hat\mes\neq i|\mes=i} \doteq e^{-\BLX E(r)} \qquad\forall i\in{\cal M}_s \equiv\{1,2,\cdots,\lceil e^{\BLX r}\rceil\}  
\end{equation*}
Since there are at most $(\BLX+1)^{|{\cal X}|}$ different types, there is at least one type $\shellx{P_X}$ which has $\tfrac{\lceil e^{\BLX r}\rceil}{(1+\BLX)^{|{\cal X}|}}$ or more codewords. Throw away all other codewords from ${\cal C}_{\textrm special}$ and lets call the remaining fixed composition codebook as ${\cal C}'_{\textrm special}$. Codebook ${\cal C}'_{\textrm special}$ is used for transmitting the special messages.

As shown in Fig. \ref{fig:footballs}(a), let the noise ball around the codeword for the special message $i$ be ${\cal B}_i$. These balls need not be disjoint.  Let ${\cal B}$ denote the union of these balls of all special messages.
\begin{equation*}
{\cal B}=\bigcup_{i\in{\cal M}_s}{\cal B}_i  
\end{equation*}
If the output sequence $y^{\BLX}  \in {\cal B}$, the first stage of the decoder decides a special message was transmitted. The second stage then chooses the ML candidate amongst the messages in ${\cal M}_s$.

Let us define ${\cal B}_i$ precisely now.
\begin{equation*}
\label{eq:close}
  {\cal B}_i=\{y^{\BLX}: \ctype{y^{\BLX}}{i} \in {\cal W}(r+\slack, P_X)  \}
\end{equation*}
where ${\cal W}(r+\slack, P_X)= \{V_{Y|X}: \CKLD{V_{Y|X}(\cdot|X)}{W_{Y|X}(\cdot|X)}{P_X} \leq E_{\textrm{sp}}(r+\slack;P_X) \}$. Recall that the sphere-packing exponent for input type $P_X$ at rate $r$,  $E_{\textrm{sp}}(r;P_X)$ is given by,
\begin{equation*}
 E_{\textrm{sp}}(r;P_X)=\min_{V_{Y|X}: I(P_X,V_{Y|X}) \le r}\ \CKLD{V_{Y|X}(\cdot|X)}{W_{Y|X}(\cdot|X)}{P_X}
\end{equation*}

{\noindent{\bf{Ordinary codewords:}}} The ordinary codewords are generated randomly using a capacity achieving  input distribution  $P_X^*$. This is the same as Shannon's construction for achieving capacity. The random coding construction provides a simple way to show that in the cavity ${\cal B}^c$ (complement of ${\cal B}$), we can essentially fit enough typical noise-balls to achieve capacity. This avoids the complicated task of carefully choosing the ordinary codewords and their decoding regions in the cavity, ${\cal B}^c$.

If the output sequence $y^{\BLX} \in  {\cal B}^c$, the first stage of the decoder decides an ordinary message was transmitted. The second stage then chooses the ML candidate from ordinary codewords.

\vspace{.2cm}

{\noindent{\bf{Error analysis:}}} First, consider the case when a special codeword $\bx^{\BLX}(i)$ is transmitted. By Stein's lemma and definition of ${\cal B}_i$, the probability of $y^n\notin{\cal B}_i$ has exponent $E_{\textrm{sp}}(r+\slack;P_X)$. Hence the first stage error exponent is at least $E_{\textrm{sp}}(r+\slack;P_X)$.

Assuming correct first stage decoding,  the second stage error exponent for  special messages equals $E(r)$. Hence the effective error exponent for special messages is
\begin{equation*}
  \min\{E(r),E_{\textrm{sp}}(r+\slack;P_X)\}
\end{equation*}
Since $E(r)$ is at most the sphere-packing exponent $E_{\textrm{sp}}(r;P_X)$, \cite{gallager_fixed}, choosing arbitrarily  small $\slack$ ensures that missed-detection exponent of each special message equals  $E(r)$.

Now consider the situation of a uniformly chosen ordinary codeword being transmitted. We have to make sure that the error probability is vanishingly small now. In this case,  the output sequence distribution is i.i.d. $P_Y^*$ for the random coding ensemble. The first stage decoding error happens when $y^{\BLX} \in \bigcup{\cal B}_i$. Again by Stein's lemma, this exponent for any particular ${\cal B}_i$ equals $E_{\textrm{o}}$:
\begin{align*}
E_{\textrm{o}}
&= \min_{V_{Y|X} \in {\cal W}(r+\slack, P_X)}  \CKLD{V_{Y|X}(\cdot|X)}{P_{Y}^*(\cdot)}{P_X}\\
&\mathop{=}^{(a)} \min_{V_{Y|X} \in {\cal W}(r+\slack, P_X)} I(P_X,V_{Y|X}) + \KLD{(P V)_{Y}(\cdot)}{P_{Y}^*(\cdot)}\\
&\mathop{\geq}^{(b)} \min_{V_{Y|X} \in {\cal W}(r+\slack, P_X)} I(P_X,V_{Y|X}) \\
&\mathop{\geq}^{(c)} r+\slack  
\end{align*}
where in $(P V)_Y$ in $(a)$ is given by $(P V)_Y(j)=\sum_{i} P_X(i) V_{Y|X}(j|i)$, $(b)$ follows from the non-negativity of the KL divergence and $(c)$ follows from the definition of sphere-packing exponent and ${\cal W}(r+\slack, P_X)$.

Applying union bound over the  special messages, the probability of first stage decoding error after sending an ordinary message is at most $\doteq\exp(\BLX r-\BLX E_{\textrm{o}})$. We have already shown that $E_{\textrm{o}}\ge r+\slack$, which ensures that probability of first stage decoding error for ordinary messages is at most $\doteq e^{-\BLX\slack}$ for the random coding ensemble. Recall that for the random coding ensemble, average error probability of the second-stage decoding also vanishes below capacity. To summarize, we have shown these two properties of the random coding ensemble: 
\begin{enumerate}
\item Error probability of first stage decoding  vanishes as $a^{(\BLX)}\doteq\exp(-\BLX\slack)$ with $\BLX$ when a uniformly chosen ordinary message is transmitted.
\item Error probability of second stage decoding (say $b^{(\BLX)}$) vanishes with $\BLX$ when a uniformly chosen ordinary message is transmitted.
\end{enumerate}
Since the first error probability is at most $4a^{(\BLX)}$ for some $3/4$ fraction of codes in the random ensemble, and the second error probability is at most $4b^{(\BLX)}$ for some  $3/4$ fraction, there exists a particular code which satisfies both these properties. The overall error probability for ordinary messages is at most $4(a^{(\BLX)}+b^{(\BLX)})$, which vanishes with $\BLX$. We will use this particular code for the ordinary codewords. This de-randomization completes our construction of a reliable code for ordinary messages to be combined with the code  ${\cal C}_{\textrm special}$ for special messages. 
\end{proof}
\subsubsection{Converse:  $\Emdr \leq  E(r) $}
The converse argument for this result is obvious. Removing the ordinary messages from the code can only improve the error probability of the special messages. Even then, (by definition) the best missed detection exponent for the special messages equals $E(r)$.

\subsection{Proof of Theorem \ref{thm:mde}}
Let us now address the case with erasures. In this achievability result, the first stage of decoding remains unchanged from the no-erasure case.

\begin{proof}
We use essentially the same strategy as before. Let us start with a good code for $\lceil e^{\BLX r}\rceil$ messages allowing erasure decoding. Forney had shown in \cite{forney2} that, for symmetric channels an error exponent equal to $E_{\textrm{sp}}(r)+C-r$ is achievable while ensuring that erasure probability vanishes with  $\BLX$. We can use that code for these  $\lceil e^{\BLX r}\rceil$  codewords. As before, for $y^{\BLX}\in\bigcup_i{\cal B}_i$, the first stage decides a special codeword was sent. Then the second stage applies the erasure decoding method in \cite{forney2} amongst the special codewords.

With this decoding rule, when a special message is transmitted, error probability of the two-stage decoding is bottle-necked by the first stage: its error exponent $E_{\textrm{sp}}(r+\slack)$ is smaller than that of the second stage ($E_{\textrm{sp}}(r)+C-r$). By choosing arbitrarily small $\slack$, the special messages can achieve $E_{\textrm{sp}}(r)$ as their missed-detection exponent. 

The ordinary codewords are again generated i.i.d. $P_X^*$. If the first stage decides in favor of the ordinary messages,  ML decoding is implemented among  ordinary codewords. If an ordinary message was transmitted, we can ensure a vanishing error probability as before by repeating earlier arguments for no-erasure case.
\end{proof}
\section{Variable Length Block Codes with Feedback: Proofs}
\label{sec:proofsf}
In this section we will present a more detailed discussion of bit-wise and message wise \uep~ for variable length block codes with feedback by proving the Theorems \ref{thm:bitf}, \ref{thm:rf}, \ref{thm:many}, \ref{thm:mdf} and \ref{thm:msru}. In the proofs of converse results we need to discuss issues related with the conditional entropy of the messages given the observation of the receiver. In those discussion we use the following  notation for conditional entropy and conditional  mutual information,
\begin{align*}
\HX{{n}}
&=- \sum_{i \in \cal M} \PCX{\mes=i}{Y^n} \ln \PCX{\mes=i}{Y^n}\\
\CMI{\mes}{Y_{n+1}}{Y^n}
&= \HX{n}-\ECX{\HX{n+1}}{Y^n}.
\end{align*}
It is worth noting that this notation is different from widely used one, which includes a further expectation over the the conditioned variable.  ``$H(M|Y^n)$''  in the conventional notation, stands for the $\EX{\HX{n}}$ and ``$H(M|Y^{n}=y^{n})$'' stands for $\HX{n}$.
\subsection{ Proof  of Theorem \ref{thm:bitf}}
\subsubsection{Achievability:  $\FEb \geq \PCAP$}\label{sec:bitfach}~

This single  special bit exponent is achieved using the missed detection exponent of a single special message, indicating a decoding error for the special bit. The decoding error for the bit goes unnoticed when this special message is not detected. This shows how feedback connects bit-wise \uep~ to message-wise \uep~ in a fundamental manner.

\begin{proof}
We will prove that $\FEb \geq \PCAP$ by constructing a capacity achieving sequence with feedback, $\SC$, such that $\FEb_{,\SC}=\PCAP$. For that let $\SC'$ be a capacity achieving sequence such that $\Emd_{,\SC'}=\PCAP$. Note that existence of such a $\SC'$ is guaranteed as a result of Theorem \ref{thm:md}. We  first construct a two phase fixed length block code with feedback and erasures. Then using this we obtain the $\inx^{th}$ element of $\SC$.

In the first phase one of the two input symbols, $x_0$ and $x_1$, with distinct output distributions\footnote{Two input symbols $x_0$ and $x_1$ are such that  $W(\cdot|x_1)\neq W(\cdot|x_0)$ } is send for  $\lceil  \sqrt{\inx} \rceil$ time units depending on $\mes_1$. At time  $\lceil  \sqrt{\inx} \rceil$ receiver makes tentative decision $\tilde{\mes}_1$ on message $\mes_1$. Using Chernoff bound it can easily be shown that, \cite[Theorem 5]{sgb}
\begin{equation*}
  \PX{\tilde{\mes}_1 \neq \mes_1 } \leq e^{-\mu \sqrt{\inx}  }  \qquad \mbox{where~} \mu>0 
\end{equation*}
Actual value of $\mu$, however, is immaterial to us we are merely interested in  finding an upper bound on $\PX{\tilde{\mes}_1 \neq \mes_1 }$ which goes to zero as $\inx$ increases. 

In the second phase transmitter uses the $\inx^{\mbox{th}}$ member of $\SC'$. The message in the second phase, $\ames$,  is determined by $\mes_2$ depending on whether $\mes_1$ is decoded correctly  or not at the end of the first phase.
\begin{align*}
  \tilde{\mes}_1 \neq \mes_1  &\Rightarrow \ames=1\\
  \tilde{\mes}_1  = \mes_1 \mbox{ and } \mes_2=i  &\Rightarrow \ames=i+1 \quad \forall i
\end{align*}
At the end of the second phase decoder  decodes $\ames$ using the decoder of $\SC'$. If the decoded message is one, i.e. $\hat{\ames} =1$ then receiver declares an erasure, else $\hat{\mes}_1= \tilde{\mes}_1$ and  $\hat{\mes}_2=\hat{\ames}-1$.

Note that erasure probability of the two phase fixed length block code is upper bounded as
\begin{align}
\notag  \PX{\hat{\ames}=1}
&\leq   \PX{\tilde{\mes}_1 \neq \mes_1 }+ \PCX{\ames=1}{\ames \neq 1}\\
\label{eq:var-era}
&\leq   e^{-\mu \sqrt{\inx}} + \tfrac{{\cal M}^{'(\inx)}}{{\cal M}^{'(\inx)}-1} \Pe'^{(\inx)}
\end{align}
where $\Pe'^{(\inx)}$ is the error probability of the $\inx^{th}$ member of $\SC'$.

Similarly we can upper bound the probabilities of two error events associated with the two phase fixed length block code as follows
\begin{align}
\label{eq:var-ero1}
 \PX{\hat{\mes}_1 \neq \mes_1~,~  \hat{\ames}  \neq 1 }  &\leq   \Pe'^{(\inx)}(1)\\
\label{eq:var-ero}
\PX{  \hat{\mes} \neq \mes   ~,~ \hat{\ames}  \neq 1   }  &\leq    \tfrac{{\cal M}^{'(\inx)}}{{\cal M}^{'(\inx)}-1} \Pe'^{(\inx)} + \Pe'^{(\inx)}(1)
\end{align}
where $\Pe'^{(\inx)}(1)$ is the conditional error probability of the  $1^{st}$ message in the $\inx^{th}$ element of $\SC'$.

If  there is an erasure the transmitter and the receiver will  repeat what they have done again, until they get  $\hat{\ames}\neq 1$. If we sum the probabilities of all the error events, including error events in the possible repetitions we get;
\begin{align}
\label{eq:spbit-con1}    \PX{\hat{\mes}_1 \neq \mes_1} &=  \tfrac{ \PX{\hat{\mes}_1 \neq \mes_1~,~  \hat{\ames}  \neq 1 }} {1- \PX{\hat{\ames}=1}}\\
\label{eq:spbit-con2}   \PX{\hat{\mes} \neq \mes} &=  \tfrac{ \PX{  \hat{\mes} \neq \mes ~,~ \hat{\ames}  \neq 1   } } {1- \PX{\hat{\ames}=1}}
\end{align}
Note that expected decoding time of the code is
\begin{equation}
\label{eq:exp-dec1} \EX{\blx} = \tfrac{\inx+ \lceil\sqrt{\inx} \rceil }{1- \PX{\hat{\ames}=1}}\\
\end{equation}

Using equations (\ref{eq:var-era}), (\ref{eq:var-ero1}), (\ref{eq:var-ero}), (\ref{eq:spbit-con1}), (\ref{eq:spbit-con2}) and (\ref{eq:exp-dec1}) one can conclude that the resulting sequence of variable length block codes with feedback, $\SC$, is reliable. Furthermore $R_{\SC}=\CX$ and $\FEb_{,\SC}=\PCAP$.
\end{proof}
\subsubsection{Converse:  $\FEb \leq \PCAP$}~

We will use a converse result we have not proved yet, namely  converse part of Theorem \ref{thm:mdf}, i.e. $\FEmd \leq \PCAP$. 

  \begin{proof}
Consider a capacity achieving sequence, $\SC$, with message set sequence ${\cal M}^{(\inx)}=\{0,1 \} \times {\cal M}_2^{(\inx)}$. Using $\SC$ we construct another capacity achieving sequence $\SC'$ with a special message $0$,  with message set sequence   ${\amesX}^{(\inx)}=\{ 0 \} \cup {\cal M}_2^{(\inx)}$ such that $\FEmd_{,\SC'} = \FEb_{,\SC}$. This implies  $\FEb \leq \FEmd$, which together with Theorem \ref{thm:mdf}, $\FEmd \leq \PCAP$, gives us $\FEb \leq \PCAP$.

Let us denote the message of  $\SC$  by $\mes$ and that of $\SC'$ by $\ames$.  The $\inx^{\mbox{th}}$ code of $\SC'$ is as follow. At time $0$ receiver   chooses randomly an $\mes_1$ for $\inx^{\mbox{th}}$ element of $\SC$ and send its choice through feedback channel to transmitter. If the message of $\SC'$ is not $0$, i.e. $\ames \neq 0$ then the transmitter uses the codeword for $\mes=(\mes_1,\ames)$ to convey $\ames$. If $\ames=0$ receiver pick a $\mes_2$ with uniform distribution on ${\cal M}_2$ and uses the code word for $\mes=(1-\mes_1,\mes_2)$ to convey that $\ames=0$.

Receiver makes decoding using the decoder of $\SC$: if $\hat{\mes}= (\mes_1,i)$  then  $\hat{\ames}=i$, if $\hat{\mes}=(1-\mes_1,i)$ then  $\hat{\ames}=0$. One can easily show that expected decoding time and error probability of both of the codes are same. Furthermore error probability of $\mes_1$ in $\SC$ is equal to conditional error probability of message $\ames=0$ in $\SC'$ thus, $\FEmd_{,\SC'} = \FEb_{,\SC}$. 
\end{proof}
\subsection{Proof of  Theorem \ref{thm:rf}}
\subsubsection{Achievability: $\FEbr (r) \geq  \left( 1- \tfrac{r}{\CX}\right) \PCAP$}~

\begin{proof}
We will construct the capacity achieving sequence with feedback  $\SC$ using a capacity achieving sequence $\SC'$ satisfying  $\Emd_{,\SC'}=\PCAP$, as we did in the proof of theorem \ref{thm:bitf}.  We know that such a sequence exists, because of  Theorem \ref{thm:mdf}.

For $\inx^{\mbox{th}}$ member of $\SC$, consider the following two phase errors and erasures code. In the first phase transmitter  uses the  $\lfloor r \inx \rfloor ^{\mbox{th}}$ element of $\SC'$ to convey $\mes_1$. Receiver makes a tentative decision $\tilde{\mes}_1$.  In the second phase transmitter  uses  the $ \lfloor (\CX-r)\inx \rfloor ^{\mbox{th}}$ element of $\SC'$ to convey $\mes_2$ and whether $\tilde{\mes}_1=\mes_1$ or not, with a  mapping similar to the one we had in the proof of theorem \ref{thm:bitf}.
\begin{align*}
  \tilde{\mes}_1 \neq \mes_1  &\Rightarrow \ames=1\\
  \tilde{\mes}_1  = \mes_1 \mbox{ and } \mes_2=i  &\Rightarrow \ames=i+1 \quad \forall i
\end{align*}
Thus ${\cal M}_1^{(\inx)}={\amesX}^{(\lfloor r \inx \rfloor)} $ and ${\cal M}_2^{(\inx)} \cup \{|{\cal M}_2^{(\inx)}|+1 \}={\amesX}^{(\lfloor (\CX-r) \inx \rfloor)} $.  If we apply a decoding algorithm, like the one we had in  the proof of theorem \ref{thm:bitf}; going through essentially the same analysis with proof of Theorem \ref{thm:bitf}, we can conclude that $\SC$ is  a capacity achieving sequence and $\FEbr_{,\SC}=\left( 1- \tfrac{r}{\CX}\right) \PCAP$ and $r_{\SC}=r$.
\end{proof}
\subsubsection{Converse: $\FEbr  (r)\leq \left( 1- \tfrac{r}{\CX}\right) \PCAP$}\label{sec:proofthmsix}~\\
In establishing the converse we will use a technique that was used previously in \cite{burna2}, together with lemma \ref{lem:con1}  which we will prove in the converse part Theorem \ref{thm:mdf}.

\begin{proof}
 Consider  any variable length block code with feedback whose message set ${\cal M}$ is of the form ${\cal M}={\cal M}_1 \times {\cal M}_2$.  Let $t_{\substack{\delta}}$ be the first time instance that an   $i \in {\cal M}_1$ becomes more likely than $(1-\delta)$ and let $\ts =t_{\substack{\delta}} \wedge \blx $.

Recall that $\displaystyle{\min_{i,j} W_{Y|X}(j|i)=\lambda}$ consequently definition of $\ts$ implies that $\displaystyle{\min_{i \in {\cal M}_1} (1-\PCX{\mes_1=i}{y^{\ts}})\geq \lambda \delta}$. Thus using Markov inequality for $\Pe$ we get,
\begin{equation}
  \label{eq:tseq}
  \PX{\ts=\blx}\leq \tfrac{\Pe}{\lambda \delta}
\end{equation}
We use equation (\ref{eq:tseq})  to bound expected value of the entropy of first part of the message at time $\ts$ as follows,
\begin{align*}
  \EX{\HXa{\ts}}
&=  \EX{\HXa{\ts} \IND{\ts=\blx}}+  \EX{\HXa{\ts}\IND{\ts<\blx}}\\
&\leq \tfrac{\Pe}{\lambda \delta} \ln |{\cal M}_1|+ (\ln 2 + \delta  \ln |{\cal M}_1|)\\
&= \ln 2+ (\tfrac{\Pe}{\lambda \delta}+\delta) \ln |{\cal M}_1|)
\end{align*}
It has already been  established in, \cite{burna2},
\begin{equation}
 \label{eq:capacity-bound} \tfrac{\EX{\HXz-\HX{\ts}}}{\EX{\ts}} \leq \CX
\end{equation}
Thus,
\begin{align}
  \notag
\EX{\ts}
&\geq  \tfrac{1}{\CX} (\EX{\HXz-\HXa{\ts} -\htwo{\mes_1,Y^{\ts}}})\\
\label{eq:tsbound}
&\geq  \tfrac{1}{\CX} ( -\ln 2 + (1-\tfrac{\Pe}{\lambda \delta}-\delta) \ln |{\cal M}_1|)
\end{align}
Bound given in inequality (\ref{eq:tsbound}) specifies the time needed for getting a likely candidate, $\tilde{\mes}_1$. Like it was the case in  \cite{burna2}, remaining time is the time spend for confirmation. But unlike \cite{burna2} transmitter needs to convey also $\mes_2$ during this time. 

For each realization of $Y^{\ts}$  divide the message set into disjoint subsets, $\partit_0,\partit_1,\ldots,\partit_{|{\cal M}_2|}$ as follows,
\begin{align*}
  \partit_0
&=\{l:l\in {\cal M}, l=(i,j)~\mbox{ where}~i\neq \tilde{{\cal M}}_1(Y^{\ts}) \}&&\\
 \partit_j
&=\{l:l\in {\cal M}, l=(\tilde{{\cal M}}_1(Y^{\ts}),j)  \} && \forall j \in \{1,2,\ldots |{\cal M}_2| \}
\end{align*}
 where $\tilde{\mes}_1 (Y^{\ts})$ is the most likely message given $Y^{\ts}$. Furthermore let  the auxiliary-message, $\ames$, be the index of the set that $\mes$ belongs to, i.e. $\mes \in \partit_{\ames}$.

The decoder for the auxiliary message decodes the index of the decoded message at the decoding time $\tau$, i.e
\begin{equation*}
  \hat{\ames} (Y^{\blx})=j \Leftrightarrow  \hat{\mes} (Y^{\blx}) \in \partit_{j}.
\end{equation*}
With these definition  we have;
\begin{align*}
\PCX{\hat{\mes}(Y^{\blx}) \neq \mes}{Y^{\ts}} 
&\geq \PCX{\hat{\ames}(Y^{\blx}) \neq \ames}{Y^{\ts}} \\
\PCX{\hat{\mes}_1(Y^{\blx}) \neq \mes_1}{Y^{\ts}} 
&\geq  \PCX{\hat{\ames}(Y^{\blx}) \neq 0}{Y^{\ts},\ames=0}  \PCX{\ames=0}{ Y^{\ts}}.
\end{align*}
Now, we apply Lemma \ref{lem:con1}, which will be proved in  section \ref{subsec:fmdconverse}. To ease the notation we use following shorthand;
\begin{align*}
\APe{Y^{\ts}} 
&=\PCX{\hat{\ames}(Y^{\blx}) \neq \ames}{Y^{\ts}} \\
\APe{0,Y^{\ts}} 
&= \PCX{\hat{\ames}(Y^{\blx}) \neq 0}{Y^{\ts}, \ames=0} \\
\xi(Y^{\ts})
&= \PCX{\ames(Y^{\ts})=0}{ Y^{\ts}}.
\end{align*}
As a result of Lemma \ref{lem:con1}, for each realization of $y^{\ts} \in {\cal Y}^{\ts}$ such that $\ts <\blx$, we have
\begin{equation*}
(1-\xi(Y^{\ts})-\APe{Y^{\ts}}) \ln \tfrac{1}{\APe{0,Y^{\ts}}} \leq \ln 2+ \ECX{\blx-\ts}{Y^{\ts}}  \FX{\tfrac{\AHX{\ts}- \ln 2- \APe{Y^{\ts}} \ln |{\cal M}_2| }{\ECX{\blx-\ts}{Y^{\ts}}}}
\end{equation*}
By multiplying both sides of the inequality with $\IND{\ts < \blx}$, we get an expression that holds for all $Y^{\ts}$. 
\begin{align}
\notag
  \IND{\ts < \blx} (1-\xi(Y^{\ts})-\APe{Y^{\ts}}) 
& \ln \tfrac{1}{\APe{0,Y^{\ts}}}
\leq \\
&\IND{\ts < \blx}  \left[ \ln 2+ \ECX{\blx-\ts}{Y^{\ts}}
 \FX{\tfrac{\AHX{\ts}- \ln 2 - \APe{Y^{\ts}} \ln |{\cal M}_2| }{\ECX{\blx-\ts}{Y^{\ts}}}} \right]
\label{eq:compexp}
\end{align}
Now we take the expectation of both sides over $Y^{\ts}$. For the right hand side we have, 
\begin{align}
 R.H.S. 
\notag  &=\EX{ \left( \ln 2+ \ECX{\blx-\ts}{Y^{\ts}}  \FX{\tfrac{\AHX{\ts}- \ln 2 - \APe{Y^{\ts}} \ln |{\cal M}_2| }{\ECX{\blx-\ts}{Y^{\ts}}}} \right) \IND{\ts < \blx}}\\
\notag  
&\mathop{\leq}^{}  \ln 2+\EX{ \ECX{\blx-\ts}{Y^{\ts}}  \FX{\tfrac{\AHX{\ts}- \ln 2 - \APe{Y^{\ts}} \ln |{\cal M}_2| }{\ECX{\blx-\ts}{Y^{\ts}}}} \IND{\ts < \blx}}\\
\notag
&\mathop{\leq}^{(a)}  \ln 2+\EX{ {\blx-\ts}}  \FX{ \EX{\IND{\ts < \blx} \tfrac{\AHX{\ts}- \ln 2 - \APe{Y^{\ts}} \ln |{\cal M}_2| }{\EX{\blx-\ts}}}}\\
&\label{eq:bits-con-rhsold}
\mathop{\leq}^{(b)}   \ln 2+\EX{ {\blx-\ts}}  \FX{  \tfrac{\EX{\IND{\ts < \blx} \AHX{\ts}} - \ln 2- \Pe \ln |{\cal M}_2| }{\EX{\blx-\ts}}}
\end{align}
where $(a)$ follows the concavity of $\FX{\cdot}$ and Jensen's inequality when we interpret $\tfrac{\ECX{\blx-\ts}{Y^{\ts}}\IND{\ts < \blx}}{\EX{ {\blx-\ts}}}$ as probability distribution over ${\cal Y}^{\ts}$ and $(b)$ follows the fact that  $\FX{\cdot}$ is a decreasing function. 

Now we lower bound $\EX{\IND{\ts < \blx} \AHX{\ts}}$ in terms of $\EX{\HX{\ts}}$. Note that 
\begin{align*}
  \HX{\ts}
 &=\AHX{\ts}+ \PCX{\mes_1 \neq \tilde{\mes}_1 (Y^{\ts})}{Y^{\ts}} \h{\mes_1\neq \tilde{\mes}_1 (Y^{\ts}), Y^{\ts}}\\
&\leq \AHX{\ts}+\PCX{\mes_1 \neq \tilde{\mes}_1 (Y^{\ts})}{Y^{\ts}} \ln |{\cal M}_1| |{\cal M}_2|
\end{align*}
Furthermore   for all $Y^{\ts}$ such that $\tau>\ts$, $\PCX{\tilde{\mes}_1(Y^{\ts})\neq \mes_1}{Y^{\ts}}\leq \delta$. Thus
\begin{align}
  \EX{\IND{\ts < \blx} \AHX{\ts}}
\notag &\geq \EX{\IND{\ts < \blx} (\HX{\ts} - \delta \ln |{\cal M}_1| |{\cal M}_2|)}\\
\notag&= \EX{(1-\IND{\ts = \blx}) \HX{\ts}} - \delta \ln | {\cal M}_1| |{\cal M}_2|\\
\notag
&\geq \EX{\HX{\ts}} -\PX{\ts = \blx} \ln |{\cal M}_1| |{\cal M}_2|  - \delta \ln |{\cal M}_1| |{\cal M}_2|\\
\notag
&\mathop{\geq}^{(a)} \EX{\HX{\ts}} - (\tfrac{\Pe}{\lambda \delta}+ \delta) \ln |{\cal M}_1| |{\cal M}_2|\\
\label{eq:bits-con3}
&\mathop{\geq}^{(b)} (1 -\tfrac{\Pe}{\lambda \delta}- \delta) \ln |{\cal M}_1| |{\cal M}_2| -\CX \EX{\ts}
\end{align}
where $(a)$  follows from the inequality (\ref{eq:tseq}), $(b)$ follows from the inequality  (\ref{eq:capacity-bound}).  
Since $\FX{\cdot}$ is decreasing in its argument,  inserting  (\ref{eq:bits-con3}) in   (\ref{eq:bits-con-rhsold}) we get
\begin{equation}
\label{eq:bits-con-bound1}
 R.H.S.
 \leq
\ln 2+\EX{\blx-\ts}  \FX{\tfrac{\ln |{\cal M}_1| |{\cal M}_2|  \left(1- \tfrac{\Pe}{\lambda \delta} - \delta -\Pe \right) -
  \EX{\ts} \CX -\ln 2}
{\EX{\blx-\ts}}}
\end{equation}
Note that $\forall a>0,b>0,\CX>0$,
\begin{align*}
  \tfrac{d}{dx} (b-x) \FX{\tfrac{a-\CX x}{b-x}} \vert_{x=x_0}
&= -\FX{\tfrac{a-\CX x_0}{b-x_0}} - \left(\CX - \tfrac{a-\CX x_0}{b-x_0} \right) \tfrac{d}{dx} \FX{x} \vert_{x=\tfrac{a-\CX x_0}{b-x_0}}\\
&\mathop{\leq}^{(a)}  -\FX{\CX}
\end{align*}
where $(a)$ follows the concavity of $\FX{\cdot}$.
Thus upper bound given in equation (\ref{eq:bits-con-bound1}) is decreasing in  $\EX{\ts}$. Thus using the lower bound on $\EX{\ts}$, given in (\ref{eq:tseq}) we get,
\begin{equation}
R.H.S.
 \leq
\ln 2+\left( \EX{\blx} -   (1-\delta -\tfrac{\Pe}{\lambda \delta}) \tfrac{\ln |{\cal M}_1| }{\CX} +\tfrac{\ln 2}{\CX} \right)  \FX{
\tfrac{  \left(1- \tfrac{\Pe}{\lambda \delta} - \delta -\Pe \right) \ln |{\cal M}_2| -
\Pe \ln |{\cal M}_1| -2\ln 2}
{\EX{\blx}- (1-\delta- \tfrac{\Pe}{\lambda  \delta}) \tfrac{\ln |{\cal M}_1| }{\CX}+\tfrac{\ln 2}{\CX}}}
\label{eq:bits-con-rhs}
\end{equation}
Now let us consider the $L.H.S.$ we get by taking the expectation of the inequality given in (\ref{eq:compexp}).
\begin{align}
L.H.S.
\notag  &=
\EX{\IND{\ts < \blx} (1-\xi(Y^{\ts})-\APe{Y^{\ts}}) \ln \tfrac{1}{\APe{0,Y^{\ts}}} }\\
\notag   &\overset{(a)}{\geq}
\EX{\IND{\ts < \blx} (1-\xi(Y^{\ts})-\APe{Y^{\ts}})}
\ln \tfrac{ \EX{\IND{\ts < \blx} (1-\xi(Y^{\ts})-\APe{Y^{\ts}})} }
{\EX{\IND{\ts < \blx} (1-\xi(Y^{\ts})-\APe{Y^{\ts}})  \APe{0,Y^{\ts}}} } \\
\notag
&\overset{(b)}{\geq}
-e^{-1}+ \EX{\IND{\ts < \blx} (1-\xi(Y^{\ts})-\APe{Y^{\ts}})} \ln \tfrac{1}{\EX{\IND{\ts < \blx} (1-\xi(Y^{\ts})-\APe{Y^{\ts})}  \APe{0,Y^{\ts}}} } \\
\label{eq:lhsintexp1}
&\geq 
-e^{-1}+ \EX{\IND{\ts < \blx} (1-\xi(Y^{\ts})-\APe{Y^{\ts}})} \ln \tfrac{1}{\EX{\IND{\ts < \blx}  \APe{0,Y^{\ts}}}}
\end{align}
where $(a)$ follows log sum inequality and $(b)$ follows from the fact that $x\ln x \geq -e^{-1}$. 

Note that
\begin{align}
\EX{\IND{\ts < \blx} (1-\xi(Y^{\ts})-\APe{Y^{\ts}})}
\notag
&\geq \EX{\IND{\ts < \blx} (1-\xi(Y^{\ts}))}-\EX{\APe{Y^{\ts}}}\\
\notag
&\geq \EX{\IND{\ts < \blx}} (1- \delta)-\Pe\\
\label{eq:lhsintexp2}
&\geq 1- \tfrac{\Pe}{\lambda \delta }-\delta 
\end{align}
where in last step we have used the equation (\ref{eq:tseq}). Furthermore
\begin{align}
\EX{\IND{\ts < \blx}  \APe{0,Y^{\ts}}}  
\notag
&= \EX{\IND{\ts < \blx}\PCX{\hat{\mes}_1= \tilde{\mes}_1}{Y^{\ts}, \tilde{\mes}_1 \neq \mes_1}}\\
\notag
&\leq \tfrac{1}{\delta \lambda}
\EX{\IND{\ts < \blx}\PCX{\hat{\mes}_1= \tilde{\mes}_1}{Y^{\ts}, \tilde{\mes}_1 \neq \mes_1} \PCX{ \tilde{\mes}_1 \neq \mes_1}{Y^{\ts}}}\\
\label{eq:lhsintexp3}
&\leq  \tfrac{\Pe^{\mes_1}}{\delta \lambda}
\end{align}
Thus using equations (\ref{eq:lhsintexp1}), (\ref{eq:lhsintexp2}) and  (\ref{eq:lhsintexp3}) we get 
\begin{equation}
\label{eq:bits-con-lhs}
L.H.S.
\geq -e^{-1} - (1- \tfrac{\Pe}{\lambda \delta }-\delta)  \ln  \tfrac{\Pe^{\mes_1}}{\lambda \delta}
\end{equation}
Using the inequalities (\ref{eq:bits-con-rhs}), (\ref{eq:bits-con-lhs}) and choosing $\delta=\sqrt{\Pe}$ we get $\FEbr_{,\SC}  \leq \left( 1- \tfrac{r_\SC}{\CX}\right) \FX{\CX}$. Since $\FX{\CX}=\PCAP$ this  implies $\FEbr  (r)\leq \left( 1- \tfrac{r}{\CX}\right) \PCAP$.
\end{proof}
\subsection{Proof of  of Theorem \ref{thm:many}}
\subsubsection{Achievability}~

\begin{proof}
Proof is very similar to the achievability proof for Theorem \ref{thm:rf}. Choose a capacity achieving sequence $\SC'$ such that $\FEb_{,\SC'}=\PCAP$.  The capacity achieving sequence with feedback, $\SC$  uses $L$ elements of $\SC'$ as follows.

For the $\inx^{\mbox{th}}$ element of code $\SC$, transmitter uses the   $\lfloor \inx \cdot r_1 \rfloor  ^{\mbox{th}}$   element of $\SC'$ to send the first part of the message,  $\mes_1$.   In the remaining phases, $l \geq 2$ transmitter uses $\lfloor \inx \cdot r_l \rfloor  ^{\mbox{th}}$ element of $\SC'$. The special message of the code for phase $l$ is  allocated to the error event in previous phases.
\begin{align*}
  (\tilde{\mes}_1, \ldots, \tilde{\mes}_{(l-1)} ) \neq ( \mes_1, \ldots, \mes_{(l-1)} )   &\Rightarrow \ames_l=1 \qquad \forall l\\
  (\tilde{\mes}_1, \ldots, \tilde{\mes}_{(l-1)} ) = ( \mes_1, \ldots, \mes_{(l-1)} )   &\Rightarrow \ames_l=\mes_l+1 \qquad \forall l
\end{align*}
Thus ${\cal M}_1^{(\inx)}={\amesX}^{(\lfloor r \inx \rfloor)} $ and for all  $l \geq  1$ ${\cal M}_l^{(\inx)} \cup \{|{\cal M}_l^{(\inx)}|+1 \}={\amesX}^{(\lfloor r_l \inx \rfloor)} $.  If for all $l\in  \{2,3,\ldots,L\}$,  $\hat{\ames}_l\neq 1$, receiver decodes all parts of the information, else it declares an erasure. We skip the error analysis because it is essentially the same with   Theorem \ref{thm:rf}.
\end{proof}

\subsubsection{Converse}~

\begin{proof}
  We  prove the converse of Theorem \ref{thm:many} by contradiction. Evidently
\begin{equation*}
\max \{\Pe^{\mes_1}, \Pe^{\mes_2}, \ldots, \Pe^{\mes_j} \} \leq  \Pe^{\mes_1,\mes_2, \ldots,\mes_j} \leq \Pe^{\mes_1}+ \Pe^{\mes_2}+  \cdots +\Pe^{\mes_j} \qquad \forall j \in \{1,2,\ldots L\}
\end{equation*}
Thus if there exists a scheme that can reach an error exponent vector outside the region given in Theorem \ref{thm:many}, there is at least one $E_i$ such that  $E_i \geq (1 - \tfrac{\sum_{j=1}^i r_j} {\CX}) \PCAP$. Then we can have  two super messages as follows,
\begin{equation*}
  \ames_1=(\mes_1,\mes_2,\ldots,\mes_i)  \quad {\textrm{ and }}     \ames_2=(\mes_{i+1},\mes_{i+2},\ldots,\mes_l)
\end{equation*}
Recall that  $\Pe^{\mes_1} \leq \Pe^{\mes_2} \leq \cdots \leq \Pe^{\mes_l}$.  Thus this new code is a capacity achieving code, whose special bits have rate $r_{\SC'}$ and $\FEbr_{,\SC'} > \FEbr (r_{\SC'})$. This is contradicting with the Theorem \ref{thm:rf} we have already proved. Thus  all the achievable error exponent regions should lie in the region given in Theorem \ref{thm:many}.
\end{proof}

\subsection{Proof of  of Theorem \ref{thm:mdf}}
\subsubsection{Achievability:  $\FEmd \geq \PCAP$}~\\
Note that any fixed length block code without feedback, is also variable-length block code with feedback, thus $\FEmd \geq \Emd$. Using the capacity achieving sequence we have used in the achievability proof of Theorem \ref{thm:md}, we get $\FEmd \geq \PCAP$.

\subsubsection{Converse:  $\FEmd  \leq  \PCAP$}\label{subsec:fmdconverse}~\\
Now we  prove that even with feedback and variable decoding time, the best missed detection exponent of  a  single special message is less then or equal to $\PCAP$, i.e. $\FEmd \leq \PCAP$. Since the set of capacity achieving sequences is a subset of capacity achieving sequences with feedback and variable decoding time,  this also implies that $\Emd \leq \PCAP$.

Instead of directly proving the converse part of Theorem \ref{thm:mdf} we  first prove the following lemma.
\begin{lemma}
  \label{lem:con1}
For any variable length block code with feedback,  message set ${\cal M}$, initial entropy  $\HXz$ and  average error probability $\Pe$, the conditional  error probability of each message is lower bounded as follows,
\begin{equation}
\PCX{\hat{\mes} \neq i}{\mes=i}\geq  e^{-\tfrac{1}{1-\PX{\mes=i}-\Pe} \left( \FX{  \tfrac{\HXz-h(\Pe) - \Pe \ln (|{\cal M }|-1)} {\EX{\blx}}} \EX{\blx} +\ln 2 \right) }  \qquad \forall i
\end{equation}
where  $\FX{R}$ is given by the following optimization over probability distributions on ${\cal X}$
\begin{equation}
\label{eq:def-fx}
 \FX{R}
=\max_{\substack{ \alpha, x_1,x_2, P_X^{1} ,P_X^{2}:\\
 \alpha I(P_X^{1}, W_{Y|X})+ (1-\alpha) I(P_X^{2}, W_{Y|X})\geq R}}
 \alpha \KLD{ (P^{1}W)_Y(\cdot)}{ W(\cdot|x_1)} +(1- \alpha )\KLD{ (P^{2}W)_Y(\cdot)}{ W(\cdot|x_2)} 
\end{equation}
\end{lemma}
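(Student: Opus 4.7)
\textbf{The plan is to} extend the Burnashev-style martingale / optional-stopping technique from \cite{burna2} to the single-message missed-detection setting, where $\FX{R}$ emerges as the upper concave envelope of the per-step rate-exponent pairs achievable by a two-phase coding strategy. Two processes will be tracked along the observation tree: the posterior entropy $H_t = \HX{t}$ of the full message (governing the information rate) and the log-inverse-posterior $L_t = -\ln(1 - \PCX{\mes = i}{Y^t})$ of the target message $i$ (governing the missed-detection quantity). Fano's inequality at the decoding time $\tau$ gives $\EX{H_\tau} \leq h(\Pe) + \Pe \ln(|\MX| - 1)$, while a log-sum-type inequality applied on the events $\{\hat{\mes} \neq i\}$ and $\{\mes = i\}$, combined with the reliability constraint $\Pe$, converts any upper bound on $\EX{L_\tau}$ into a lower bound on $\PCX{\hat{\mes} \neq i}{\mes = i}$ of the form appearing in the lemma statement; the factor $(1 - \PX{\mes = i} - \Pe)^{-1}$ appears at this Markov-type step.

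\textbf{The crux} is a pair of local (per-step) inequalities. For each non-leaf $y^t$, let $P_X^{(t)}$ be the distribution of $X_{t+1}$ induced by the posterior and the encoder. A direct one-step computation yields
\begin{align*}
\ECX{H_t - H_{t+1}}{Y^t = y^t} &\leq I(P_X^{(t)}, W_{Y|X}), \\
\ECX{L_{t+1} - L_t}{Y^t = y^t} &\leq \max_{x \in {\cal X}} \KLD{(P_X^{(t)} W)_Y}{W_{Y|X}(\cdot \mid x)},
\end{align*}
where the second inequality follows from the posterior update formula together with convexity of the KL divergence in its second argument (which upper bounds the mixture divergence $\KLD{(P_X^{(t)} W)_Y}{\sum_{j\neq i}\lambda_j W_{Y|X}(\cdot\mid x_j^{(t)})}$ by a single-letter KL). These pair a per-step rate $I(P_X^{(t)}, W)$ with a per-step exponent $\KLD{(P_X^{(t)} W)_Y}{W_{Y|X}(\cdot \mid x)}$ for the same input distribution $P_X^{(t)}$, which is exactly the structure inside $\FX{R}$.

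\textbf{Finally, aggregating} over $t$, taking expectations, dividing by $\EX{\tau}$, and invoking Jensen's inequality together with the concavity of $\FX{\cdot}$ produces the global bound. A two-regime partition of the time axis---for instance by thresholding $\PCX{\mes = i}{Y^t}$---singles out the two-atom mixture $\alpha (P^1, x_1) + (1-\alpha)(P^2, x_2)$ allowed inside $\FX{R}$, which is sufficient for the upper concave envelope by a Carath\'eodory-type argument in two dimensions. \textbf{The main obstacle will be} making this time-sharing step rigorous: the per-step pair $(P_X^{(t)}, x)$ is path-dependent and the decoder is arbitrary, so one must invoke Doob's optional stopping on a suitable submartingale built from $L_t$, verify that a two-atom mixture saturates the achievable rate-exponent region, and absorb the boundary effects into the additive $\ln 2$ correction and the $(1 - \PX{\mes = i} - \Pe)$ prefactor of the stated bound.
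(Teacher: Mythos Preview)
Your plan has the right skeleton---a per-step KL bound paired with a per-step mutual-information bound, then Jensen via the concavity of $\FX{\cdot}$, then Fano---and this is exactly the paper's route. But the choice of the second stochastic process is wrong. You write $L_t = -\ln\!\big(1-\PCX{\mes=i}{Y^t}\big)$ while calling it the ``log-inverse-posterior of the target message''; the formula is the log-inverse-posterior of the \emph{complement}, i.e.\ $-\ln\PCX{\mes\neq i}{Y^t}$. With this definition your own one-step computation gives
\[
\ECX{L_{t+1}-L_t}{Y^t}=\KLD{\PCX{Y_{t+1}}{Y^t}}{\PCX{Y_{t+1}}{\mes\neq i,Y^t}},
\]
a divergence between two nearly identical mixtures that differ by a single term of posterior weight $\PCX{\mes=i}{Y^t}$. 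Your bound $\max_x\KLD{(P_X^{(t)}W)_Y}{W_{Y|X}(\cdot\mid x)}$ is therefore valid but hopelessly loose, and more importantly $\EX{L_\tau}$ itself is of order $\PX{\mes=i}$ and carries no lower-bounding information about $\PCX{\hat\mes\neq i}{\mes=i}$: no ``log-sum-type inequality'' on $\{\hat\mes\neq i\}$ and $\{\mes=i\}$ will manufacture one. The process that matches your words is $-\ln\PCX{\mes=i}{Y^t}$, equivalently $\ln\frac{\PX{Y^t}}{\PCX{Y^t}{\mes=i}}$; its drift is $\KLD{(P_X^{(t)}W)_Y}{W_{Y|X}(\cdot\mid x)}$ for the \emph{single} input letter $x$ that message $i$ uses at that node, which is what makes the coupling to $\FX{\cdot}$ tight. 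Its terminal expectation is the KL divergence $\KLD{\PX{Y^\tau}}{\PCX{Y^\tau}{\mes=i}}$, and data processing on the partition $\{\DEC{i},\overline{\DEC{i}}\}$ together with $\PX{\overline{\DEC{i}}}\geq 1-\PX{\mes=i}-\Pe$ then gives the stated bound directly. This is exactly the paper's argument.

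Your ``two-regime partition of the time axis'' is also unnecessary. The two-atom form inside the definition of $\FX{R}$ is just Carath\'eodory applied to the upper concave envelope of the curve $P_X\mapsto\big(I(P_X,W),\max_x\KLD{(P_XW)_Y}{W_{Y|X}(\cdot\mid x)}\big)$; once $\FX{\cdot}$ is known to be concave and decreasing you apply Jensen twice (over $t\leq\tau$ and then over $\tau$), as in~\cite{burna2}, with no posterior thresholding and no explicit time-sharing. The additive $\ln 2$ falls out of the binary-entropy term in the data-processing step, not from boundary effects of any stopping-time partition.
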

It is worthwhile remembering the notation we introduced previously that 
\begin{equation*}
  (P^{i}W)_Y(\cdot)=\sum_{j \in {\cal X}} P_X^{i}(j) W_{Y|X}(\cdot|j) \quad  \mbox{and} \quad   I(P_X^{i}, W_{Y|X})= \sum_{j \in{\cal X}, k \in {\cal Y}} P_{X}^{i}(j) W_{Y|X}(k|i) \ln \tfrac{W_{Y|X} (k|i)}{(P^{i}W)_Y(k)}
\end{equation*} 
First thing to note about Lemma \ref{lem:con1} is that it is not necessarily for  the case of uniform probability distribution on the message set ${\cal M}$. Furthermore as long as $\PX{\mes=i}<<1$ the lower bound on $\PCX{\hat{\mes}\neq i}{\mes=i}$ depends on the a priori probability distribution of the messages only through the entropy of it, $\HXz$.

In equation (\ref{eq:def-fx}) $\alpha$ is simply a time sharing variable, which allows us to use a $(x_i,P_X^{i})$ pair with low mutual information and high divergence together with another  $(x_i,P_X^{i})$ pair with high  mutual information and low divergence. As a result of Carath\'eodory's Theorem we see that time sharing between two points of the form $(x_i,P_X^{i})$ is sufficient for obtaining optimal performance, i.e. allowing time sharing between more than two points of the form  $(x_i,P_X^{i})$ will not improve the value of $\FX{R}$.

 Indeed for any $R \in [0,\CX]$ one can use the optimizing values of $\alpha$, $x_1$, $x_2$, $P_X^{1}$ and $P_X^{2}$ in a scheme like the one in Theorem \ref{thm:md} with  time sharing and prove that missed detection exponent of $\FX{R}$ is achievable for a reliable sequence of rate $R$. In that $\alpha$ determines  how long the input letter $x_1 \in {\cal X} $ is used  for the special message while $P_X^{1}$ is being used for the ordinary codewords. Furthermore arguments very similar to those of Theorem \ref{thm:mdf} can be used to prove no missed detection exponent higher than $\FX{R}$ is achievable for reliable sequences of rate $R$. Thus  $\FX{R}$ is the best exponent a message can get in a rate $R$ reliable sequence.

One can show that $\FX{R}$ is a concave function of $R$ over its support $[0,\CX]$. Furthermore  $\FX{0}=\DX$ and $\FX{\CX}=\PCAP$. Thus $\FX{R}$ is a concave strictly decreasing function of $R$ for $0\leq R\leq \CX$.

\begin{proofa}{ Lemma \ref{lem:con1}}
Recall that ${\cal G}(i)$ is the  decoding region for $\mes=i$ i.e. ${\cal G}(i)=\{y^{\blx}: \hat{\mes}(y^{\blx})=i \}$. Then as a result of data processing inequality for KL divergence we have
\begin{align}
  \EX{\ln \tfrac{\PX{Y^{\blx}}}{\PCX{Y^{\blx}}{\mes=i}}}
  \notag &\geq \PX{  {\cal G}(i)} \ln \tfrac{\PX{ {\cal G}(i)}}{ \PCX{{\cal G}(i) }{\mes=i}} + \PX{ \overline{ {\cal G}(i) }} \ln \tfrac{\PX{ \overline{{\cal G} (i)}}}{ \PCX{\overline{{\cal G} (i)}}{\mes=i}}\\
 \notag &\geq -h( \PX{  {\cal G} (i)})  + \PX{ \overline{ {\cal G} (i)}} \ln \tfrac{1}{ \PCX{\overline{{\cal G} (i)}}{\mes=i}}\\
  \label{eq:lowp}
&\geq -\ln 2 +  \PX{ \overline{ {\cal G} (i)}} \ln \tfrac{1}{ \PCX{\overline{{\cal G} (i)}}{\mes=i}}
\end{align}
where in the last step we have used, the fact that   $h(\PX{  {\cal G} (i)}) \leq \ln 2$.
In addition
\begin{align}
\notag  \PX{ \overline{ {\cal G} (i)}}
&\geq \PCX{\overline{ {\cal G} (i)}}{ \mes\neq i} \PX{ \mes\neq i}\\
\notag
&\geq \sum_{j\neq i} \PCX{ {\cal G} (j)}{ \mes= j } \PX{ \mes=j}\\
\label{eq:lowp1}
&\geq  \left( 1-\Pe- \PX{ \mes =i} \right).
\end{align}
Thus using the equations (\ref{eq:lowp}) and (\ref{eq:lowp1})  we get
\begin{equation}
\label{eq:lowp2}
 \PCX{\overline{{\cal G} (i)}}{\mes=i} \geq
e^{-\frac{1}{ 1-\Pe- \PX{ \mes =i}} \left(\ln 2 +\EX{\ln \frac{\PX{Y^{\blx}}}{\PCX{Y^{\blx}}{\mes=i}}} \right)}.
\end{equation}
Now we lower bound the error probability of  the special message by upper bounding $\EX{\ln \tfrac{\PX{Y^{\blx}}}{\PCX{Y^{\blx}}{\mes=i}}}$. For that let us consider the following stochastic sequence,
\begin{equation*}
  S_{n}=\ln \tfrac{\PX{Y^{n}}}{\PCX{Y^n}{\mes=i}}-\sum_{t=1}^{n} \ECX{\ln \tfrac{\PCX{Y_{t}}{Y^{t-1}}}{\PCX{Y_t}{\mes=i, Y^{t-1}}}}{Y^{t-1}}
\end{equation*}
Note that $\ECX{S_{n+1}}{Y^n}=  S_n$ and  since   $\min W_{i,j}=\lambda$ we have $\ECX{|S_{n+1}-S_n|}{Y^n}\leq  2  \ln  \tfrac{1}{\lambda} $. Thus $S_n$ is a martingale,  furthermore since $\EX{\blx}< \infty$ we can use \cite[Theorem 2 p 487]{shiryaev},   to get
\begin{equation*}
  \EX{S_{\blx}}=S_0=0.
\end{equation*}
Thus
\begin{equation}
\label{eq:explog1}   \EX{\ln \tfrac{\PX{Y^{\blx}}}{\PCX{Y^{\blx}}{\mes=1}}}
 =\EX{\sum_{t=1}^{\blx} \ECX{\ln \tfrac{\PCX{Y_{t}}{Y^{t-1}}}{\PCX{Y_t}{\mes=1, Y^{t-1}}}}{Y^{t-1}}}.
 \end{equation}
Note that 
\begin{equation}
\label{eq:explog2} 
  \ECX{\ln \tfrac{\PCX{Y_{t}}{Y^{t-1}}}{\PCX{Y_t}{\mes=1, Y^{t-1}}}}{Y^{t-1}}
\notag 
=\ECX{\ln \tfrac{\PCX{Y_{t}}{Y^{t-1}}}{W_{Y|X}(Y_t|\bx_t(1))}}{Y^{t-1}}.
\end{equation}
As a result of definition of $\FX{\cdot}$ given in equation (\ref{eq:def-fx}) we have,
\begin{equation}
\label{eq:explog3}
\ECX{\ln \tfrac{\PCX{Y_{t}}{Y^{t-1}}}{\PCX{Y_t}{\mes=1, Y^{t-1}}}}{Y^{t-1}}
\leq   \FX{ \CMI{X_t}{Y_t}{Y^{t-1}}}
\end{equation}
where $\CMI{X_t}{Y_t}{Y^{t-1}}$ is given by\footnote{ Note that unlike the conventional definition of conditional mutual information, $ \CMI{X_t}{Y_t}{Y^{t-1}}$  is not averaged over the conditioned random variable $Y^{t-1}$.}
  \begin{equation*}
    \CMI{X_t}{Y_t}{Y^{t-1}}=\ECX{\ln \tfrac{\PCX{X_t,Y_t}{Y^{t-1}}}{\PCX{X_t}{Y^{t-1}} \PCX{Y_t}{Y^{t-1}}}}{Y^{t-1}}
  \end{equation*}
Given $Y^{t-1}$ random variables  $\mes-X_t-Y_t$ forms a Markov chain. Thus 
\begin{equation}
  \label{eq:explog4}
  \CMI{X_t}{Y_t}{Y^{t-1}} \geq \CMI{\mes}{Y_t}{Y^{t-1}}.
\end{equation}
Since $\FX{\cdot}$ is a decreasing function, equations (\ref{eq:explog1}),  (\ref{eq:explog3}) and (\ref{eq:explog4})  lead to 
\begin{equation}
\label{eq:explog}   
\EX{\ln \tfrac{\PX{Y^{\blx}}}{\PCX{Y^{\blx}}{\mes=1}}}
 \leq \EX{\sum_{t=1}^{\blx} 
\FX{\CMI{\mes}{Y_t}{Y^{t-1}}}}
 \end{equation}
Note that
\begin{align}
\EX{\sum_{t=1}^{\blx} \FX{ \CMI{\mes}{Y_t}{Y^{t-1}}}}
\notag
&= \EX{\blx  \sum_{t=1}^{\blx}  \tfrac{1}{\blx}  \FX{ \CMI{\mes}{Y_t}{Y^{t-1}}}}\\
\notag
&\mathop{\leq}\limits^{(a)}  \EX{\blx \FX{ \sum_{t=1}^{\blx}  \tfrac{1}{\blx}   \CMI{\mes}{Y_t}{Y^{t-1}}}}\\
\notag
&=\EX{\blx}\EX{ \tfrac{\blx}{\EX{\blx}} \FX{ \sum_{t=1}^{\blx}  \tfrac{1}{\blx}   \CMI{\mes}{Y_t}{Y^{t-1}}}}\\
\notag
&\mathop{\leq}\limits^{(b)}   \EX{\blx}  \FX{\EX{  \tfrac{\blx}{\EX{\blx}} \sum_{t=1}^{\blx}  \tfrac{1}{\blx}   \CMI{\mes}{Y_t}{Y^{t-1}}}}\\
\label{eq:lowp5}
&=  \EX{\blx}  \FX{\EX{ \tfrac{\sum_{t=i}^{\blx}  \CMI{\mes}{Y_t}{Y^{t-1}}}{\EX{\blx}}}}
\end{align}
where in both $(a)$ and $(b)$ we use the the concavity of the $\FX{\cdot}$ function together with Jensen's inequality. Thus using equations (\ref{eq:lowp2}), (\ref{eq:explog}) and (\ref{eq:lowp5}) we get, 
\begin{equation*}
\PCX{\hat{\mes} \neq i}{\mes=i} \geq e^{-\tfrac{1}{ 1-\Pe- \PX{ \mes =i}} \left(\FX{ \tfrac{\EX{\sum_{t=i}^{\blx}  \CMI{\mes}{Y_t}{Y^{t-1}}}}{\EX{\blx}}} \EX{\tau} +\ln 2 \right) }
\end{equation*}
Since $\FX{R}$ is decreasing in $R$, the only thing we are left to show is that 
\begin{equation}
  \label{eq:lastcon}
  \EX{\sum_{t=i}^{\blx}  \CMI{\mes}{Y_t}{Y^{t-1}}} \geq \HXz- h(\Pe) - \Pe \ln (|{\cal M }|-1)
\end{equation}
For that consider the stochastic sequence,
\begin{equation*}
  V_{n}= \HX{n}+ \sum_{t=1}^{n}\CMI{\mes}{Y_t}{Y^{t-1}}.
\end{equation*}
Clearly $\ECX{V_{n+1}}{Y^n}=V_{n}$ and $\EX{|V_n|} < \infty$, thus $\{V_n\}$ is a martingale. Furthermore $\ECX{|V_{n+1}-V_n|}{Y^n} \leq K $ and $\EX{\blx} < \infty$ thus using a version of Doob's optional stopping theorem, \cite[Theorem 2 p 487]{shiryaev},  we get,
\begin{align}
 V_0
\notag &= \EX{V_{\blx}}  \\
\label{eq:veq}  &=\EX{\HX{\blx}} + \EX{\sum_{t=1}^{\blx} \CMI{\mes}{Y_t}{Y^{t-1}}}.
\end{align}
  One can write Fano's inequality as follows,
\begin{equation*}
\HX{\blx} \leq  h \left(\PCX{\hat{\mes} (Y^{\blx}) \neq \mes}{Y^{\blx}}  \right) +
\PCX{\hat{\mes} (Y^{\blx}) \neq \mes}{Y^{\blx}}  \ln (|{\cal M }|-1).
\end{equation*}
 Consequently
\begin{equation*}
\EX{\HX{\blx}} \leq  \EX{h \left( \PCX{\hat{\mes} (Y^{\blx}) \neq \mes}{Y^{\blx}}  \right)} +\EX{ \PCX{\hat{\mes} (Y^{\blx}) \neq \mes}{Y^{\blx}}} \ln (|{\cal M }|-1).
\end{equation*}
Using the concavity  of binary  entropy,
\begin{align}
\label{eq:fano} \EX{\HX{\blx}} \leq h(\Pe) + \Pe \ln (|{\cal M }|-1).
\end{align}
Using equation (\ref{eq:veq}) together with equation  (\ref{eq:fano}) we get the desired condition given in the equation (\ref{eq:lastcon}).
\end{proofa}
Above proof is for encoding schemes which does not have any randomization (time sharing), but same ideas can be used to establish the exact same result for  general variable length block codes with randomization. Now we are ready to prove the converse part of the Theorem \ref{thm:mdf}.\\
\begin{proofa}{Converse part of Theorem \ref{thm:mdf}}
In order to prove $\FEmd \leq \PCAP$, first note that for capacity achieving sequences we consider $\PX{ \mes =i}= \tfrac{1}{|{\cal M}^{(\inx)}|}$. Thus
\begin{equation}
-\tfrac{\ln (\Pe^{\mes} (i))^{(\inx)}}{\EX{\blx^{(\inx)}}}  \leq \tfrac{1}{ 1-\Pe^{(\inx)}-  \tfrac{1}{|{\cal M}^{(\inx)}|}}
 \left( \FX{ \tfrac{ \ln |{\cal M}^{(\inx)}| -  h(\Pe{(\inx)}) - \Pe^{(\inx)} \ln (|{\cal M}{(\inx)}|-1)} {\EX{\blx^{(\inx)}}}}+\tfrac{\ln 2} {\EX{\blx^{(\inx)}}} \right).
\end{equation}
Thus for any capacity achieving sequence with feedback,
\begin{equation*}
\lim_{\inx \rightarrow \infty} -\tfrac{\ln (\Pe^{\mes} (i))^{(\inx)}}{\EX{\blx^{(\inx)}}}  \leq \FX{\CX} =\PCAP.
\end{equation*}
\end{proofa}

\subsection{Proof of  of Theorem \ref{thm:msru}}
In this subsection we will show how the strategy for sending a special bit can be combined with the Yamamoto-Itoh strategy when many special messages demand a missed-detection exponent. However unlike previous results about capacity achieving sequences, Theorems \ref{thm:bitf}, \ref{thm:rf}, \ref{thm:many}, \ref{thm:mdf}, we will have and additional uniform delay assumption.

We will restrict ourself to uniform delay capacity achieving sequences.\footnote{ Recall that for any reliable variable length block code with feedback $\Gamma$ is defined as $\Gamma= \tfrac{\max_{i \in {\cal M}} \ECX{\tau}{\mes=i}}{\EX{\tau}} $ and uniform delay reliable sequences  are the ones    that satisfy  $\lim_{\substack{\inx \rightarrow \infty}} \Gamma_{\SC}^{(\inx)} =1$. } Clearly capacity
achieving sequences in general need not to be uniform delay. Indeed many messages, $i \in {\cal M}$, can get an expected delay, $\ECX{\blx}{\mes=i}$ much larger than the average delay, $\EX{\blx}$. This in return can decrease the error probability of these messages.  The potential drawback of such codes, is that their average delay is sensitive to assumption of messages being chosen according to a uniform probability distribution. Expected decoding time, $\EX{\blx}$, can increase a lot if  the code is used in a system in which the  messages are not chosen uniformly.

It is worth emphasizing that all previously discussed  exponents (single message exponent $\FEmd$,  single bit exponent $\FEb$, many bits exponent $\FEb(r)$ and achievable multi-layer exponent regions) remain unchanged whether or not this uniform delay constraint is imposed. Thus the flexibility to provide different expected delays to different messages does not improve those exponents.

However, this is not true for the message-wise \uep~ with exponentially many messages. Removing the uniform delay constraint can considerably enhance the protection of special messages at rate higher than $(1-\tfrac{\PCAP}{\DX})\CX$. Indeed one can make the exponent of all special messages, $\PCAP$. The flexibility of providing more resources (decoding delay) to special messages achieves this enhancement. However, we will not discuss those cases in this article and stick to uniform delay codes.

\subsubsection{Achievability: $\FEmd (r) \geq  \min  \{\PCAP, (1-\tfrac{r}{\CX}) \DX  \} $}\label{sec:femdach}~\\
The optimal scheme here reverses the trick for achieving $\FEb$: first a special bit tells to the receiver whether the message being transmitted is special one or not. After the decoding of this bit the message itself is transmitted. This further emphasizes how feedback connects bit-wise and message-wise \uep, when used with variable decoding time.

\begin{proof}
Like all the previous achievability results, we construct a capacity achieving sequence, $\SC$, with the desired asymptotic behavior. A sequence of multi phase fixed length errors and erasures codes, $\SC'$ is used as the building block of $\SC$. Let us consider the $\inx^{\mbox{th}}$ member of  $\SC'$. In the first phase transmitter sends one of the two input symbols with distinct output distributions for $\lfloor  \sqrt{\inx} \rfloor$ time units in order to tell whether $\mes  \in {\cal M}_s^{(\inx)}$ or not. Let $b$ be $b=\IND{\mes \in {\cal M}_s^{(\inx)}}$.Then, as it was mentioned in subsection \ref{sec:bitfach}, with a threshold decoding we can  achieve
\begin{equation}
\label{eq:exp-manymes1}
 \PCX{\hat{b} \neq 1}{b=1} =  \PCX{\hat{b} \neq 0}{b=0} \leq e^{-\sqrt{\inx} \mu} \qquad \mbox{where } \mu>0.
\end{equation}
Actual value of $\mu$ is not important for us, we are  merely interested in an upper bound vanishing with increasing $\inx$.

In the second phase one of two length $\inx$ codes is  used depending on $\hat{b}$.
\begin{itemize}
\item {If $\hat{b}=0$, in the second phase, transmitter  uses the $\inx^{\mbox{th}}$ member of a capacity achieving sequence, $\SC''$ such that $\Eb_{,\SC''} =\PCAP$. We know that such a sequence exists because of Theorem \ref{thm:md}. The message, $\ames$ of the $\SC''$ is  determined  using the following mapping
\begin{align*}
  \mes \in {\cal M}_s &\Rightarrow \ames=1 \\
  \mes \notin {\cal M}_s &\Rightarrow \ames=\mes-|{\cal M}_s|+1
\end{align*}
At the end of the second phase, receiver  decodes $\ames$. If $\hat{\ames}=1$, then receivers declares an erasure, $\tilde{\mes}=\era$. If $\hat{\ames}\neq 1$,  then  $\hat{\mes}=\tilde{\mes}=\hat{\ames}+|{\cal M}_s|-1$.}

\item{If $\hat{b}=1$, transmitter uses a two phase code with errors and erasures in the second phase, like the one described by Yamamoto and Itoh in \cite{itoh}. The two phases of this code are called communication and control phases, respectively. 

In communication phase transmitter uses $\lceil r \inx \rceil^{\mbox{th}}$ member of a capacity achieving sequence, $\SC''$ with $\Eb_{,\SC''} =\PCAP$, to convey its message, $\ames$. The auxiliary message $\ames$ is determined as follows,
\begin{align*}
  \mes \notin {\cal M}_s &\Rightarrow \ames=1\\
  \mes \in {\cal M}_s &\Rightarrow \ames=\mes+1
\end{align*}
The decoded message of the  $\lceil r \inx \rceil^{\mbox{th}}$ member of $\SC''$ is called the tentative decision of communication phase and denoted by $\tilde{\ames}$.
In the control phase,
\begin{itemize}
\item if $\tilde{\ames}=\ames$  tentative decision is confirmed by  sending accept symbol $\xa$ for $\ell(\inx)=\inx-\lceil  \tfrac{r}{\CX}  \inx \rceil $ time units.
\item if $\tilde{\ames}\neq \ames$ tentative decision is  rejected by sending reject symbol $\xd$ for  $\ell(\inx)=\inx-\lceil  \tfrac{r}{\CX}  \inx \rceil $ time units. 
\end{itemize}
where $\xa$ and $\xd$ are the maximizers in the following optimization problem.
\begin{equation*}
  \DX=\max_{i,j} \KLD{W_{Y|x}(\cdot|i)}{W_{Y|X} (\cdot|j)}=  \KLD{W_{Y|x}(\cdot|\xa)}{W_{Y|X} (\cdot|\xd)}
\end{equation*}
If the output sequence in last $\inx-\lceil  \tfrac{r}{\CX}  \inx \rceil $ time steps is typical with $W_{Y|X}(\cdot|\xa)$ then $\hat{\ames}=\tilde{\ames}$ else erasure is declared for $\ames$. Note that the total probability of  $W_{Y|X}(\cdot|\xa)$ typical sequences are less than $e^{-\ell(\inx)(\DX-\delta_{\ell(k)})}$ when $\tilde{\ames}\neq \ames$ and more than $1-\delta_{\ell(\inx)}$ when  $\tilde{\ames}= \ames$ where $\displaystyle{\lim_{\ell(\inx) \rightarrow \infty} \delta_{\ell(\inx)}=0}$, \cite[Corrollary 1.2, p19]{CK}.

If $\hat{\ames}=\era$ or if $\hat{\ames}=1$ then receiver declares  erasure for $\mes$, $\tilde{\mes}=\era$. If $\hat{\ames}\in \{2,3,\ldots,|{\cal M}_s|+1\}$, then $\hat{\mes}=\tilde{\mes}=\hat{\mes}-1$.}
\end{itemize}
Now we can calculate the error and erasure probabilities of the two phase fixed length block code. Let us denote the erasures by $\tilde{\mes}=\era$ for each $\inx$.

For $i \in {\cal M}_s$ using the equation (\ref{eq:exp-manymes1}) and Bayes rule we get
\begin{align}
\label{eq:manymes1}  
\PCX{\tilde{\mes}= \era}{\mes=i} 
&\leq e^{-\mu \sqrt{\inx}}+(\Pe_{,\SC'}^{( \inx-\ell(\inx))}+\delta_{\ell(\inx)})\\
\label{eq:manymes2}
\PCX{\tilde{\mes} \neq i , \tilde{\mes} \neq \era}{\mes=i} 
&\leq e^{-\mu \sqrt{\inx}}  \Pe_{\SC'}^{\inx}(1)+ \Pe_{,\SC'}^{(\inx -\ell(\inx))} e^{-\ell(\inx)(\DX-\delta_{\ell(\inx)})} .
\end{align}
For $i \notin {\cal M}_s$ using the equation (\ref{eq:exp-manymes1}) and Bayes rule we get
\begin{align}
\label{eq:manymes3}  
\PCX{\tilde{\mes}= \era}{\mes=i} &\leq  e^{-\mu \sqrt{\inx}} + \Pe_{,\SC'}^{(\inx)} \\
\label{eq:manymes4}
\PCX{\tilde{\mes} \neq i , \tilde{\mes} \neq \era}{\mes=i} &\leq e^{-\mu \sqrt{\inx}} +   \Pe_{,\SC'}^{(\inx)}  .
\end{align}
Whenever $\tilde{\mes}=\era$ than transmitter and receiver  try to send the message once again from scratch using same strategy. Then  for any $i \in {\cal M}$
\begin{align}
\label{eq:manymes5} 
\PCX{\hat{\mes} \neq i } {\mes=i} &=\tfrac{\PCX{\tilde{\mes} \neq i , \tilde{\mes} \neq \era} {\mes=i}}{1-\PCX{\tilde{\mes}= \era}{\mes=i}}  \\
\label{eq:manymes6} 
\ECX{\blx}{\mes=i} &=\tfrac{ \inx+\sqrt{\inx} }{1-\PCX{\tilde{\mes}= \era}{\mes=i}}
\end{align}
Using equations (\ref{eq:manymes1}), (\ref{eq:manymes2}), (\ref{eq:manymes3}), (\ref{eq:manymes4}), (\ref{eq:manymes5}) and (\ref{eq:manymes6})  we conclude that that $\SC$ is capacity achieving sequence such that 
\begin{align*}
  \lim_{\inx \rightarrow \infty} -\tfrac{\ln \max_{i \in {\cal M}_s}\PCX{\tilde{\mes} \neq i , \hat{\mes} \neq \era}{\mes=i} }{\EX{\blx} }&= \min\{\PCAP, (1-\tfrac{r}{\CX})\DX \}\\
  \lim_{\inx \rightarrow \infty} \tfrac{\ln |{\cal M}_s^{(\inx)}|}{\EX{\blx}} &=r
\end{align*}
\end{proof}

\subsubsection{Converse: $\FEmd (r) \leq  \min \{\PCAP, (1-\tfrac{r}{\CX}) \DX  \} $}~

\begin{proof}
  Consider any uniform delay capacity achieving sequence, $\SC$.  Note that by excluding all $i \notin {\cal M}_s^{(\inx)}$ we get a reliable sequence, $\SC'$ such that
\begin{align*}
\Pe^{'(\inx)} &\leq \SPCX{\inx}{\hat{\mes} \neq \mes} {\mes \in {\cal M}_s } \\
\EX{\tau^{'(\inx)}} &\leq \Gamma^{(\inx)}\EX{\tau^{(\inx)}}
\end{align*}
Thus
\begin{equation*}
  \tfrac{- \ln \PCX{\hat{\mes} \neq \mes} {\mes \in {\cal M}_s }^{(\inx)}}{\EX{\tau^{(\inx)}}} \leq -  \tfrac{\ln \Pe^{'(\inx)}}{\EX{\tau^{'(\inx)}}}   \Gamma^{(\inx)}
\end{equation*}
Consequently  $\FEmd (r) \leq  (1-\tfrac{r}{\CX}) \DX $. Similarly by excluding all but one of the elements of ${\cal M}_s$ we can prove that  $\FEmd (r) \leq  \PCAP$, using Theorem \ref{thm:mdf} and uniform delay condition.
\end{proof}

\section{Avoiding False Alarms: Proofs}\label{sec:proofsfa}

\subsection{Block Codes without Feedback: Proof of  Theorem \ref{thm:fa}}

\subsubsection{Lower Bound: $\Efa \geq \Efal$}~

\begin{proof} 
As a result of the coding theorem \cite[Ch. 2 Corollary 1.3, page 102   ]{CK} we know that there exits a reliable sequence $\SC'$ of fixed composition codes whose rate is $\CX$ and whose $\BLX^{\mbox{th}}$ elements composition $P_X^{(\BLX)}$ satisfies,
\begin{equation*}
  \sum_{i \in {\cal X}} |P_X^{(\BLX)}(i)-P_X^{*}(i)| \leq \sqrt[4]{\tfrac{1}{\BLX}}.
\end{equation*}
We  use the codewords of the $\BLX^{\mbox{th}}$ element of $\SC'$ as the codewords of the ordinary messages in the $\BLX^{\mbox{th}}$  code in  $\SC$. For the special message we  use a length-$\BLX$ repetition sequence $\bx^{\BLX}(1)=(\xfl,\xfl,\cdots,\xfl)$.

The decoding region for the special message is essentially the bare minimum. We include the typical channel outputs within the decoding region of the special message to ensure small missed detection probability for the special message, but we exclude all other output sequence $y^{\BLX}$.
\begin{equation*}
\DEC{1}=\{y^{\BLX}:  \sum_{i \in {\cal Y}} |\typey{y^{\BLX}}(i)-W_{Y|X}(i|\xfl)| \leq \sqrt[4]{1/\BLX} \}  
\end{equation*}
Note that this definition of $\DEC{1}$ itself ensures that special message is transmitted reliably whenever it is sent, $ \displaystyle{ \lim_{\BLX \rightarrow \infty } \SPCX{\BLX}{\hat{\mes} \neq 1}{\mes = 1}=0}$.

The decoding regions of the ordinary messages, $j=\{2,3,\ldots {\cal M}^{(\BLX)}\}$, is the intersection of the corresponding decoding region in $\SC'$ with the complement of $\DEC{1}$. Thus the fact that $\SC'$ is a reliable sequence  implies that, 
\begin{equation*}
 \lim_{\BLX \rightarrow \infty } \SPCX{\BLX}{y^{\BLX} \in  \displaystyle{\bigcup_{j \notin  \{1,i\}} \DEC{j}} }{\mes=i} =0
\end{equation*}
Consequently we  have reliable communication for ordinary messages as long as $\displaystyle{\lim_{\BLX \rightarrow \infty} \SPCX{\BLX}{\DEC{1}}{\mes =j } =0}$, $\forall j \neq 1$. But we prove a much stronger result to ensure that $\SPCX{\BLX}{\hat{\mes} =1}{\mes \neq 1}$ is  decaying fast enough. Before doing that let us note that in the second stage of the decoding, when we are choosing a message among the ordinary ones,   ML decoder can be used instead of  the decoding rule of the original code. Doing that will only decrease the average error probability.

 Note the probability of a $V$-shell of a  message $i$ is equal to,  
\begin{equation*}
  \SPCX{\BLX}{\shell{V}{i}}{\mes=i}= e^{-\BLX\CKLD{V_{Y|X}(\cdot|X)}{W_{Y|X}(\cdot|X)}{P_X^{(\BLX)}}}
\end{equation*}
Note that also that $\DEC{1}$ can be written as the union of $V$-shells of a message $i$ as follow.
\begin{equation*}
\DEC{1}=\bigcup_{V_{Y|X} \in {\cal V}^{(\BLX)} }  \shell{V}{i}   \qquad \forall  i \neq 1
\end{equation*}
where ${\cal V}^{(\BLX)}= \{V_{Y|X}:  \sum_{j} |\sum_{k}V_{Y|X} (j|k) P_{X}^{\BLX} (k) -W_{Y|X}(j|\xfl)| \leq  \sqrt[4]{1/\BLX} \}$. Note that since there are at most $(1+\BLX)^{|{\cal X}| |{\cal Y}|}$ different conditional types.
\begin{equation*}
\SPCX{\BLX}{\DEC{1}}{\mes=i} \leq  (1+\BLX)^{|{\cal X}| |{\cal Y}|}  \max_{V_{Y|X} \in {\cal V}^{(\BLX)}}  \PCX{\shell{V}{i}}{\mes=i}
\end{equation*}
  Thus for all $i>1$
 \begin{equation*}
 \lim_{\BLX \rightarrow \infty} \tfrac{-\ln \SPCX{\BLX}{\DEC{1}}{\mes=i}}{\BLX}= \min_{
V_{Y|X}:\ \sum_j P_X^*(j)V_{Y|X}(\cdot|j)=W_{Y|X}(\cdot|\xfl)}
\CKLD{V_{Y|X}(\cdot|X)}{W_{Y|X}(\cdot|X)}{P_X^*}
\end{equation*}
\end{proof}

\subsubsection{Upper Bound: $\Efa \leq \Efau$}~

\begin{proof} 
As a result of data processing inequality for KL divergence we have
\begin{align}
\notag \sum_{y^{\BLX} \in {\cal Y}^{\BLX}}
 \PCX{y^{\BLX}}{\mes=1}  \ln \tfrac{\PCX{y^{\BLX}}{\mes=1}}{\PCX{y^{\BLX}}{\mes \neq 1}}
&\geq
\PCX{\DEC{1}}{M=1}  \ln \tfrac{\PCX{\DEC{1}}{M=1}}{\PCX{\DEC{1}}{M \neq 1}}
\PCX{\overline{{\DEC{1}}}}{M=1}  \ln \tfrac{ \PCX{\overline{{\DEC{1}}}}{M=1}}{\PCX{\overline{\DEC{1}}}{M \neq 1}}\\
&\geq - \ln 2 - \PCX{\DEC{1}}{M=1}  \ln \PCX{{\DEC{1}}}{M \neq 1}
\label{eq:fal-dat}
\end{align}
Using the convexity of the KL divergence we  get
\begin{align}
\sum_{y^{\BLX} \in {\cal Y}^{\BLX}}
 \PCX{y^{\BLX}}{\mes=1}  \ln \tfrac{\PCX{y^{\BLX}}{\mes=1}}{\PCX{y^{\BLX}}{\mes \neq 1}}
\notag &\leq
\sum_{i =2}^{|{\cal M}|} \tfrac{1}{|{\cal M}|-1} \sum_{y^{\BLX} \in {\cal Y}^{\BLX}}
 \PCX{y^{\BLX}}{\mes=1}  \ln \tfrac{\PCX{y^{\BLX}}{\mes=1}}{\PCX{y^{\BLX}}{\mes =i}}\\
\notag &=
\sum_{i =2}^{|{\cal M}|} \tfrac{1}{|{\cal M}|-1} \sum_{y^{\BLX} \in {\cal Y}^{\BLX}}
 \PCX{y^{\BLX}}{\mes=1}  \sum_{k=1}^{\BLX}  \ln \tfrac{\PCX{y_k}{\mes=1, y^{k-1}}}{\PCX{y_k}{\mes =i,y^{k-1}}}\\
\label{eq:false-aleq}
&=\sum_{k=1}^{\BLX} \sum_{i=2}^{|{\cal M}|} \tfrac{1}{|{\cal M}|-1}
\KLD{W_{Y|X}(\cdot|\bx_k(1))}{W_{Y|X}(\cdot|\bx_k(i))} 
\end{align}
where $\bx_k(i)$ denotes the input letter for codeword of message $i$, at time $k$. 

Let us denote the empirical distribution of the $\bx_k(i)$ for time $k$, by $P_{X_k}$. 
\begin{equation*}
  P_{X_k} (i) =\tfrac{\sum_{j \in {\cal M}} \IND{\bx_k(j)=i}}{|{\cal M}|} \quad \forall i \in {\cal X}
\end{equation*}
Using equation (\ref{eq:fal-dat}) and (\ref{eq:false-aleq}) we get
\begin{equation}
 \PCX{{\DEC{1}}}{M \neq 1} \geq e^{-\tfrac{1}{\PCX{\DEC{1}}{M=1}} \left(\tfrac{|{\cal M}|}{|{\cal M}|-1}\sum_{k} \CKLD{W_{Y|X}(\cdot|\bx_k(1))}{W_{Y|X}(\cdot|X_k)}{P_{X_k}} +\ln 2 \right) } \label{eq:false-alarm-fin}
\end{equation}
We show below that for all capacity achieving codes, almost all of the $k$'s has a $P_{X_k}$ which is essentially equal to $P_X^{*}$. For doing that let us first define the set $\DIST{\epsilon}$ and $\delta(\epsilon)$
\begin{equation*}
\DIST{\epsilon} \DEF \{P_X: I(P_X,W_{Y|X}) \geq \CX- \epsilon \} \quad \mbox{and}   \quad \delta(\epsilon) \DEF \max_{P_X \in \DIST{\epsilon}} \sum_{i} |P_X(i)-P_X^{*}(i)| 
\end{equation*}
Note that $\displaystyle{\lim_{\epsilon \rightarrow 0} \delta(\epsilon) =0}$. As a result of Fano's inequality we have,
\begin{equation}
\label{eq:fa-fano1}
 \MI{\mes}{Y^{\BLX}}\geq \BLX R^{(\BLX)} (1-\Pe)-\ln 2 
\end{equation}
On the other hand using standard manipulations on mutual information we  get
\begin{align}
\notag
\MI{\mes}{Y^{\BLX}} 
&=\sum_{k=1}^{\BLX} I(P_{X_k}, W_{Y|X}) \\
\label{eq:fa-fano2}
&\leq \CX \BLX - \epsilon \sum_{k=1}^{\BLX} \IND{P_{X_k} \notin \DIST{\epsilon}}
  \end{align}
Using equation (\ref{eq:fa-fano2})  in equation (\ref{eq:fa-fano1}) we get,
\begin{equation*}
 \sum_{k=1}^{\BLX}  \IND{P_{X_k} \notin \DIST{\epsilon}} \leq  \BLX \tfrac{(\CX-R^{(\BLX)} (1-\Pe) - \ln 2 /\BLX)}{\epsilon} 
\end{equation*}
Let $\epsilon{(\BLX)}$ be $\epsilon{(\BLX)} =\sqrt{\CX-R^{(\BLX)} (1-\Pe) - \tfrac{\ln 2}{\BLX}}$, then $\displaystyle{\lim_{\BLX \rightarrow \infty} \epsilon{(\BLX)}=0}$ and  
\begin{equation}
 \label{eq:false-alarm-fin1}
 \sum_{k=1}^{\BLX}  \IND{P_{X_k} \notin \DIST{\epsilon^{(\BLX)}}} \leq  \BLX \epsilon^{(\BLX)}.
\end{equation} 
Note for any $P_X \in \DIST{\epsilon^{(\BLX)}}$ we have
\begin{align}
\notag
\CKLD{W_{Y|X}(\cdot|x_k(1))}{W_{Y|X}(\cdot|X_k)}{P_{X}}
&\leq  \CKLD{W_{Y|X}(\cdot|x_k(1))}{W_{Y|X}(\cdot|X)}{P_{X}^*} + \delta(\epsilon^{(\BLX)}) \DX\\
\label{eq:false-alarm-fin2}
&\leq  \Efau+ \delta(\epsilon^{(\BLX)}) \DX 
  \end{align}
where $\Efau= \max_{i \in {\cal X}}  \CKLD{W_{Y|X}(\cdot|i)}{W_{Y|X}(\cdot|X)}{P_{X}^*} $

Using equations (\ref{eq:false-alarm-fin1}) and (\ref{eq:false-alarm-fin2}) 
\begin{equation*}
\sum_{k} \CKLD{W_{Y|X}(\cdot|x_k(1))}{W_{Y|X}(\cdot|X_k)}{P_{X_k}}
\leq \BLX (\Efau+ \delta(\epsilon^{(\BLX)}) \DX +\epsilon^{(\BLX)} \DX)
\end{equation*}
Inserting this in equation (\ref{eq:false-alarm-fin}) we get 
\begin{equation*}
\lim_{\BLX \rightarrow \infty} \left(\tfrac{- \ln \SPCX{\BLX}{{\DEC{1}}}{M \neq 1}}{\BLX} \right)\leq \Efau
\end{equation*}
\end{proof}

\subsection{Variable Length Block Codes with Feedback: Proof of Theorem \ref{thm:faf}}
\subsubsection{Achievability: $\FEfa \geq \DX$ }~

\begin{proof}
 We  construct a capacity achieving sequence with feedback, $\SC$, by using a construction like the one we have for $\FEmdr{r}$. In fact, this scheme achieves the false alarm exponent simultaneously with the best missed detection exponent, $\PCAP$, for the special message. 

We use a  fixed length multi-phase errors and erasure code as the building block for the $\inx^{\mbox{th}}$ member of $\SC$. In the first phase, $b=\IND{\mes=1}$ is conveyed using a length $\lceil  \sqrt{\inx} \rceil$ repetition code, like we did in subsections \ref{sec:bitfach} and \ref{sec:femdach}. Recall that
 \begin{equation}
\label{eq:fa-exp}
 \PCX{\hat{b} \neq 1}{b=1} =  \PCX{\hat{b} \neq 0}{b=0} \leq e^{-\mu \sqrt{\inx}}
  \qquad  \mu>0
\end{equation}
In the second phase one of the two  length $\inx$ codes is  used depending on $\hat{b}$.
\begin{itemize}
\item {If $\hat{b}=0$, transmitter uses the $\inx^{\mbox{th}}$ member of a capacity achieving sequence, $\SC'$ such that $\Emd_{,\SC'} =\PCAP$ to convey the message. We know that such a sequence exists because of Theorem \ref{thm:md}. Let the message of $\SC$ be the message of $\SC'$, i.e. the auxiliary message,
    \begin{equation*}
      \ames=\mes.
    \end{equation*}
If at the end of the second phase $\hat{\ames}=1$, receiver declares an erasure, $\tilde{\mes}=\era$,  else $\mes$ is decoded  $\hat{\mes}=\tilde{\mes}=\hat{\ames}$.}
\item{If $\hat{b}=1$, transmitter uses a length $\inx$ repetition code to convey whether $\mes=1$ or not.
    \begin{itemize}
    \item  If $\mes=1$,     $\ames=1$ and transmitter sends  the codeword $(\xa,\xa,\ldots,\xa)$.
    \item  If $\mes\neq 1$, $\ames=0$ and transmitter sends  the codeword $(\xd,\xd,\ldots,\xd)$.
    \end{itemize}
       where $\xa$ and $\xd$ are the maximizers achieving $\DX$:
    \begin{equation*}
        \DX=\max_{i,j} \KLD{W_{Y|x}(\cdot|i)}{W_{Y|X} (\cdot|j)}=  \KLD{W_{Y|x}(\cdot|\xa)}{W_{Y|X} (\cdot|\xd)}
      \end{equation*} 
Receiver decodes $\hat{\ames}=1$ only when output sequence is typical with $W_{Y|X}(\cdot|\xa)$. Evidently as before we have, \cite[Corrollary 1.2, p19]{CK}.
\begin{align}
\label{eq:fa-asym-exp1}
  \PCX{\hat{\ames}=0}{\mes=1}
&\leq \delta_{\inx}
\\
\label{eq:fa-asym-exp2}
  \PCX{\hat{\ames}=1}{\mes=0}
&\leq e^{- \inx (\DX-\delta_{\inx})}
\end{align}
where  $\displaystyle{\lim_{\inx \rightarrow \infty} \delta_{\inx}=0}$.
 
If $\hat{\ames}=1$ then  $\hat{\mes}=1$, else receiver declares erasure for the whole block, i.e. $\tilde{\mes}=\era$.}
\end{itemize}

Now we can calculate the error and erasure probabilities for $(\lceil \inx \rceil+ \inx)$ long block code. Using the equations (\ref{eq:fa-exp}), (\ref{eq:fa-asym-exp1}), (\ref{eq:fa-asym-exp2}) and Bayes' rule we get
\begin{align}
\label{eq:fa-manymes1}  
\PCX{\tilde{\mes}= \era}{\mes=1} 
&\leq e^{-\mu \sqrt{\inx}} + \delta_{\inx}&&\\
\label{eq:fa-manymes2}  
\PCX{\tilde{\mes}= \era}{\mes=i} 
&\leq e^{-\mu \sqrt{\inx}}+  \Pe_{\SC'}^{(\inx)} && i\neq 1 \\
\label{eq:fa-manymes3}  
\PCX{\tilde{\mes} \in {\cal M} \setminus \{1\}}{\mes=1} 
&\leq e^{-\mu \sqrt{\inx}} \Pe_{\SC'}^{(\inx)}(1) &&\\
\label{eq:fa-manymes4}  
\PCX{\tilde{\mes}  \in {\cal M} \setminus \{1,i\}}{\mes=i} 
&\leq \Pe_{\SC'}^{(\inx)}  && i \neq 1 \\
\label{eq:fa-manymes5}  
\PCX{\tilde{\mes}= 1}{\mes = i} 
&\leq  e^{-\mu \sqrt{\inx}} e^{- \inx (\DX-\delta_{\inx})} && i \neq 1
  \end{align}

Whenever $\tilde{\mes}=\era$ than transmitter tries  to send the message again from scratch, using same strategy. Consequently  all of the above error probabilities are scaled by a factor of $\tfrac{1}{1-\PCX{\tilde{\mes}= \era}{\mes=i}}$ when we consider the corresponding error probabilities for the variable decoding time code. Furthermore
\begin{equation}
  \label{eq:fa-manymes6}
 \ECX{\blx}{\mes=i} =\tfrac{ \inx+\sqrt{\inx} }{1-\PCX{\tilde{\mes}= \era}{\mes=i}}
\end{equation}
Using equations (\ref{eq:fa-manymes1}), (\ref{eq:fa-manymes2}), (\ref{eq:fa-manymes3}), (\ref{eq:fa-manymes4}), (\ref{eq:fa-manymes5}) and (\ref{eq:fa-manymes6})  we conclude that  $\SC$ is a capacity achieving code with $\FEmd_{,\SC}=\PCAP$ and $\FEfa_{,\SC}=\DX$.
\end{proof}

\subsubsection{Converse: $\FEfa \leq \DX$ }~

\begin{proof}
Note that as result of convexity of KL divergence we have
\begin{align}
  \notag
  \ECX{ \ln \tfrac{\PCX{Y^{\blx}}{\mes = 1}}{\PCX{Y^{\blx}}{\mes \neq 1}}}{\mes = 1}
&\geq
 \PCX{{\cal G}(1)}{\mes = 1} \ln \tfrac{ \PCX{{\cal G}(1)}{\mes = 1} } { \PCX{{\cal G}(1)}{\mes \neq 1} }+
 \PCX{\overline{{\cal G}(1)}}{\mes =1} \ln \tfrac{ \PCX{\overline{{\cal G}(1)}}{\mes = 1} } { \PCX{\overline{{\cal G}(1)}}{\mes \neq 1} }\\
&\geq
 -\ln 2+ \PCX{{\cal G}(1)}{\mes = 1} \ln \tfrac{ 1 }{ \PCX{{\cal G}(1)}{\mes \neq 1} }
\label{eq:fa-con1}
\end{align}

It has already been proved in \cite{burna2} that,
\begin{equation}
\label{eq:fa-con2} \ECX{ \ln \tfrac{\PCX{Y^{\blx}}{\mes = 1}}{\PCX{Y^{\blx}}{\mes \neq 1}}}{\mes = 1} \leq \DX \ECX{\blx}{\mes=1}
\end{equation}
Note that as a result of definition of ~$\Gamma$~ we have $\ECX{\blx}{\mes=1} \leq \EX{\blx} \Gamma$ using this together with equations (\ref{eq:fa-con1}) and (\ref{eq:fa-con2})  the  we get,
\begin{equation*}
   \PCX{{\cal G}(1)}{\mes \neq 1} \geq e^{-\tfrac{\ln 2+\Gamma \DX  \EX{\blx}}{\PCX{{\cal G}(1)}{\mes = 1}}}
\end{equation*}
Thus for any uniform delay reliable sequence, $\SC$, we have $\FEfa_{,\SC} \leq \DX$.
\end{proof}

\appendix
\subsection{Equivalent definitions of  \uep~  exponents}
We could have defined all the \uep~ exponents in this paper without using the notion of capacity achieving sequences.  As an example in this section  we define the single-bit exponent in this alternate manner and show that both definitions leads to identical results. In this alternative  first $\bar{\Eb}(R)$ is defined as the best exponent for the special bit at a given data-rate $R$ and then it is minimized over all $R<C$ to obtain $\bar{\Eb}$.

\begin{definition}
  For any $R\geq 0$, ${\cal Z}(R)$ is the set of  sequence of codes, $\SC$, with message sets ${\cal M}^{(n)}$ such that
  \begin{equation*}
    |{\cal M}^{(n)}|\geq e^{R\BLX} \qquad \mbox{and} \qquad {\cal M}^{(n)}= {\cal M}_1 \times {\cal M}_2^{(n)}
  \end{equation*}
where ${\cal M}_1=\{0,1\}$.
\end{definition}
\begin{definition}
 For a sequence of codes, $\SC$, such that $\displaystyle{ \lim_{\BLX \rightarrow \infty}\SPX{\BLX}{\hat{\mes}\neq \mes}=0}$, singe bit exponent $\Eb_{,\SC}$ equals
\begin{equation}
\Eb_{,\SC} \DEF \liminf_{\BLX \rightarrow \infty} \tfrac{ -\ln \SPX{\BLX}{\hat{\mes_1}\neq\mes_1}}{\BLX} .
\end{equation}
 \end{definition}
\begin{definition}
$\bar{\Eb}(R)$ and  the single bit exponent $\bar{\Eb}$ are defined as
\begin{align*}
\bar{\Eb}(R) 
&\DEF \sup_{\SC \in {\cal Z}(R)}\Eb_{,\SC}\\
\bar{\Eb} 
&\DEF \inf_{R<C}\bar{\Eb}(R).  
\end{align*}
\end{definition}
Note that according to this definition the special bit can achieve the exponent $\bar\Eb$, no matter how close the rate is to capacity. We now show  why this definition is equivalent to the earlier definition in terms of capacity achieving sequences given in section \ref{sec:nofeed}.
\begin{lemma}
\ $\bar\Eb=\Eb$
\end{lemma}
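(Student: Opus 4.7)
The plan is to establish $\bar{\Eb}=\Eb$ by proving the two inequalities separately.

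The direction $\bar{\Eb} \geq \Eb$ is straightforward. Given any capacity-achieving sequence $\SC$, the condition $\liminf_{\BLX \to \infty} \tfrac{\ln |{\cal M}^{(\BLX)}|}{\BLX} = \CX$ guarantees that for every fixed $R < \CX$, all but finitely many of its codes satisfy $|{\cal M}^{(\BLX)}| \geq e^{R\BLX}$. Removing the finite initial prefix yields a sequence $\SC' \in {\cal Z}(R)$ whose rate, bit-exponent, and reliability coincide with those of $\SC$, since all three are defined via $\liminf$ and are invariant under removal of a prefix. Therefore $\bar{\Eb}(R) \geq \Eb_{,\SC'} = \Eb_{,\SC}$ for every $R < \CX$ and every capacity-achieving $\SC$, and taking the supremum over $\SC$ and the infimum over $R<\CX$ gives $\bar{\Eb} \geq \Eb$.

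The reverse direction $\bar{\Eb} \leq \Eb$ I would prove by diagonalization. Since ${\cal Z}(R)$ shrinks as $R$ grows, $\bar{\Eb}(R)$ is non-increasing in $R$ and $\bar{\Eb} = \lim_{R \uparrow \CX} \bar{\Eb}(R)$. For each integer $k \geq 1$, set $R_k = \CX - 1/k$ and use the defining supremum to pick a reliable sequence $\SC_k \in {\cal Z}(R_k)$ with $\Eb_{,\SC_k} \geq \bar{\Eb}(R_k) - 1/k \geq \bar{\Eb} - 1/k$. Since reliability and the bit-exponent are liminf-type, for each $k$ there is a block length $\BLX_k$, which I choose strictly larger than $\BLX_{k-1}$, such that the $\BLX_k$-th code of $\SC_k$ simultaneously satisfies $\Pe^{(\BLX_k)} \leq 1/k$, $\tfrac{-\ln \SPX{\BLX_k}{\hat{\mes}_1 \neq \mes_1}}{\BLX_k} \geq \bar{\Eb} - 2/k$, and $\tfrac{\ln |{\cal M}^{(\BLX_k)}|}{\BLX_k} \geq R_k$ (the last automatic from $\SC_k \in {\cal Z}(R_k)$). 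I then assemble these chosen codes into a new sequence $\SC^{*}$, filling the remaining block lengths with a trivial reliable single-codeword code so that $\SC^{*}$ fits the paper's framework; because the fillers are placed at sparse locations, they do not affect any liminf-based quantity.

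It then remains to verify that $\SC^{*}$ is capacity-achieving with $\Eb_{,\SC^{*}} \geq \bar{\Eb}$. Reliability is immediate from $\Pe^{(\BLX_k)} \to 0$. The rate satisfies $\liminf_{k} \tfrac{\ln |{\cal M}^{(\BLX_k)}|}{\BLX_k} \geq \liminf_{k} R_k = \CX$; combining this with the strong converse, which forces $\limsup_{\BLX} \tfrac{\ln |{\cal M}^{(\BLX)}|}{\BLX} \leq \CX$ for every reliable sequence, the rate of $\SC^{*}$ equals $\CX$ exactly, so $\SC^{*}$ is capacity-achieving. Its bit-exponent satisfies $\Eb_{,\SC^{*}} = \liminf_{k} \tfrac{-\ln \SPX{\BLX_k}{\hat{\mes}_1 \neq \mes_1}}{\BLX_k} \geq \liminf_{k} (\bar{\Eb} - 2/k) = \bar{\Eb}$. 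Thus $\Eb \geq \Eb_{,\SC^{*}} \geq \bar{\Eb}$, completing the argument. The principal obstacle is the mild technical point of fitting the diagonal construction into the paper's one-code-per-block-length framework, handled by the sparse fillers together with the observation that every quantity in play is liminf-based; the only other ingredient beyond the definitions is the strong converse, needed to prevent the diagonal sequence from having rate strictly above $\CX$. The same diagonalization works verbatim for each of the other UEP exponents in the paper, so the two definitional styles coincide throughout.
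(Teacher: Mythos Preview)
Your overall diagonalization strategy matches the paper's, and the direction $\bar{\Eb}\ge\Eb$ is essentially identical to the paper's argument. However, there is a genuine gap in your construction of $\SC^{*}$ for the reverse direction. You claim the single-codeword fillers are placed at ``sparse locations'' and hence do not affect liminf-based quantities; this is backwards. The good diagonal codes sit at the sparse set $\{\BLX_1,\BLX_2,\ldots\}$, while the fillers occupy its complement, which is dense. Since $R_{\SC^{*}}=\liminf_{\BLX\to\infty}\tfrac{\ln|{\cal M}^{(\BLX)}|}{\BLX}$ ranges over \emph{all} block lengths in the sequence and your fillers have $|{\cal M}|=1$, the liminf is $0$, not $\CX$, so $\SC^{*}$ is not capacity-achieving. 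Your later computation ``$\liminf_{k}\tfrac{\ln|{\cal M}^{(\BLX_k)}|}{\BLX_k}\ge\CX$'' silently switches to a liminf along the subsequence $\{\BLX_k\}$, which is not what the definition requires. (The fillers also fail the structural requirement ${\cal M}=\{0,1\}\times{\cal M}_2$.)

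The repair is exactly what the paper does: instead of taking one code from each $\SC_k$ and padding with trivial codes, use the codes of $\SC_k$ at \emph{every} block length $\BLX_k,\BLX_k+1,\ldots,\BLX_{k+1}-1$. Because $\SC_k\in{\cal Z}(\CX-1/k)$, every such code has rate at least $\CX-1/k$; and by choosing $\BLX_k$ large enough (using $\Pe\to 0$ and the liminf definition of $\Eb_{,\SC_k}$) you can force the error-probability and bit-exponent inequalities to hold for all $\BLX\ge\BLX_k$ in $\SC_k$, not just at $\BLX_k$. With this staircase construction every block length beyond $\BLX_k$ has rate $\ge\CX-1/k$, so the liminf rate is $\ge\CX$ directly; Fano's inequality gives the matching upper bound, so the strong converse is not needed.
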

\begin{proof}
{\textbf{$\Eb\le\bar\Eb$}:}\\
By definition of $\Eb$, for any given $\delta>0$, there exists a capacity-achieving sequence $\SC$ such that $\Eb_{\SC}=\Eb$ and  for large enough $\BLX$, $R^{(\BLX)}\geq \CX -\delta$.  If we replace first $\BLX$ members of $\SC$ with codes whose rate are $(\CX-\delta)$ or higher we get another sequence  $\SC'$ such that $\SC' \in {\cal Z}(\CX -\delta)$ where $\Eb_{\SC'} =\Eb$. Thus $\bar{\Eb}(\CX -\delta) \geq \Eb$ for all $\delta>0$. Consequently 
\begin{equation*}
  \bar{\Eb}\geq \Eb
\end{equation*}
 {\textbf{$\Eb\ge \bar\Eb$}:}\\
 Let us first fix an arbitrarily small $\delta>0$. In the table in Figure
\ref{fig:table}, row $k$ represents a code-sequence $\bar\SC_k \in {\cal Z}(C-1/k)$, whose single-bit exponent
\[\Eb_{,\bar\SC_k}\ge\bar{\Eb}(R)-\delta\] 
Let $\bar\SC_k(l)$ represent length-$l$ code in this sequence. We construct a capacity achieving sequence $\SC$ from this table by sequentially choosing  elements of $\SC$ from rows $1,2,\cdots$  as follows .

   \begin{figure}[!h]
   \setlength{\unitlength}{1cm}
   \centering
   \begin{picture}(15,6)(-.5,1.5)

   \multiput(0,6.5)(0,-.7){7}{\line(1,0){14.5}}
   \multiput(0,6.5)(1.4,0){10}{\line(0,-1){5}}
   \put(-0.5,6.05){$\bar\SC_1$}
   \put(-0.5,5.35){$\bar\SC_2$}
   \put(-0.5,4.65){$\bar\SC_3$}
   \put(-0.5,3.95){$\bar\SC_4$}
   \multiput(-0.3,3.70)(0,-.2){11}{$\cdot$}

\put(0.20,6.05){{$ \bar\SC_1(1)$}}
\put(1.60,6.05){{$\bar\SC_1(2)$}}
\put(3.00,6.05){{$\bar\SC_1(3)$}}
\put(4.40,6.05){{$\bar\SC_1(4)$}}

\put(0.20,5.35){{$\bar\SC_2(1)$}}
\put(1.60,5.35){{$\bar\SC_2(2)$}}
\put(3.00,5.35){{$\bar\SC_2(3)$}}

\put(0.20,4.65){{$\bar\SC_3(1)$}}
\put(1.60,4.65){{$\bar\SC_3(2)$}}
\put(0.20,3.95){{$\bar\SC_4(1)$}}

\put(7,7.0){Block Length}
\put(0.55,6.6){$1$}
\put(1.95,6.6){$2$}
\put(3.35,6.6){$3$}

\multiput(3.75,6.6)(.27,0){14}{$\cdot$}
\put(7.55,6.6){$\BLX_1$}
\multiput(8.05,6.6)(.27,0){3}{$\cdot$}
\put(8.95,6.6){$\BLX_2$}
\multiput(9.45,6.6)(.27,0){8}{$\cdot$}
\put(11.75,6.6){$\BLX_3$}
\multiput(12.25,6.6)(.27,0){6}{$\cdot$}

\multiput(5.4,6.05)(.27,0){8}{$\cdot$}
\linethickness{.45mm}
\put(7.55,6.15){\line(1,0){1.40}}
\multiput(8.95,6.05)(.27,0){18}{$\cdot$}
\put(8.95,6.15){\line(0,-1){.70}}

\multiput(4.05,5.35)(.27,0){18}{$\cdot$}
\put(8.95,5.45){\line(1,0){2.8}}
\multiput(11.85,5.35)(.27,0){7}{$\cdot$}
\put(11.75,5.45){\line(0,-1){.70}}

\multiput(2.7,4.65)(.27,0){34}{$\cdot$}
\put(11.75,4.75){\vector(1,0){2}}

\multiput(1.2,3.95)(.33,0){38}{$\cdot$}
\multiput(0.3,3.25)(.33,0){41}{$\cdot$}
\multiput(0.3,2.55)(.33,0){41}{$\cdot$}
\multiput(0.3,1.85)(.33,0){41}{$\cdot$}

   \end{picture}
\caption{Row $k$ denotes a reliable code sequence at rate $C-1/k$. Bold path shows capacity achieving sequence $\SC$.}
\label{fig:table}
   \end{figure}

\begin{itemize}
\item For each sequence $\bar\SC_i$, let $\BLX_i$ denote the smallest block length $\BLX$ at which,
  \begin{enumerate}
  \item The single bit error probability satisfies 
\begin{equation*}
\SPX{\BLX}{\hat{\mes}_1 \neq\mes_1} \le e^{-\BLX(\bar{\Eb}(R)-2\delta))}
\end{equation*}
 \item The over all error probability satisfies
\begin{equation*}
\SPX{\BLX}{\hat{\mes} \neq\mes} \le 1/i 
\end{equation*}
\item $\BLX_{i}\geq \BLX_{i-1}$
  \end{enumerate}
\item  Given the sequence, $\BLX_1,\BLX_2,\cdots$, we choose the members of our capacity achieving code from the  code-table shown in Figure \ref{fig:table} as follows.
  \begin{itemize}
  \item \textbf{Initialize:} We use first $\BLX_2-1$ members of $\bar\SC_{1}$ as the first $\BLX_2-1$ members of the new code.
  \item \textbf{Iterate:}  We choose codes of length $\BLX_i$ to $\BLX_{i+1}-1$ from the  code sequence $\bar\SC_{i+1}$, i.e.,
\[\(\bar\SC_i(\BLX_i),\bar\SC_i(\BLX_i+1)\cdots,\bar\SC_i(\BLX_{i+1}-1)\) \] 
  \end{itemize}
\end{itemize}
Thus $\SC$ is a sampling of the code-table as shown by the bold path in Figure \ref{fig:table}. Note that this choice of $\SC$ is a capacity achieving sequence, moreover it will also achieve a single bit exponent
\[\Eb_{,\SC}=\inf_{R<C}\{\bar{\Eb}(R)-2\delta\} =\bar{\Eb}-2\delta\]
Choosing arbitrarily small $\delta$ proves $\Eb\ge\bar{\Eb}$.
\end{proof}

\section*{Acknowledgment}
The authors are indebted to Bob Gallager for his insights and encouragement for this work in general. In particular, Theorem \ref{thm:mdmany} was mainly inspired from his remarks. Helpful discussions with David Forney  and Emre Telatar are also gratefully acknowledged.

\bibliographystyle{plain}

\end{document}